\newcommand{\longversion}[1]{#1}
\newcommand{\shortversion}[1]{}
\newcommand{\springerversion}[1]{}
\newcommand{\arxivversion}[1]{#1}

\arxivversion{
 \documentclass[]{elsarticle}%
  \usepackage{lineno,hyperref}
\bibliographystyle{elsarticle-num}
}
\springerversion{\documentclass[runningheads,nvcountsame,a4paper,USenglish]{llncs}}

\newcommand{\FIX}[1]{#1} 

\usepackage{multirow}
\usepackage{amssymb,amsfonts,amsmath}%
\usepackage{xspace}
\usepackage[ruled,vlined,linesnumbered]{algorithm2e}
\usepackage[table,x11names]{xcolor}
\usepackage{graphicx}

\SetKwInput{KwData}{In}
\SetKwInput{KwResult}{Out}
\setlength{\textfloatsep}{1em}
\SetAlFnt{\small}
\SetAlCapFnt{\small}
\SetAlCapNameFnt{\small}
\SetAlCapHSkip{0pt}
\IncMargin{-1em}
\SetEndCharOfAlgoLine{}
\SetInd{.2em}{1em}

\usepackage{microtype}
\usepackage{tikz}
\usetikzlibrary{shapes}
\usetikzlibrary{calc}
\usetikzlibrary{positioning}
\tikzstyle{tdnode} = [draw,rounded corners,top color=vertexTopColor,bottom color=vertexBottomColor,minimum size=1.5em]
\tikzstyle{stdnode} = [tdnode, font=\scriptsize]
\tikzstyle{stdnodecompact} = [stdnode, inner sep = 1.5pt, outer sep = 0.1pt]
\tikzstyle{stdnodetable} = [stdnode, inner sep = 1.5pt, outer sep = 0]
\tikzstyle{stdnodenum} = [minimum size=1.5em, font=\scriptsize]
\tikzstyle{tdedge} = [-,draw,thick]
\tikzstyle{tdlabel} = [draw=none, rectangle, fill=none, inner sep=0pt, font=\scriptsize]
\colorlet{vertexTopColor}{white}
\colorlet{vertexBottomColor}{black!10}


\usepackage{ctable}
\usepackage{ifthen}
\newif\iflong
\usepackage{todonotes}


\usepackage{float}



\newcommand{\PMC}{\textsc{PMC}\xspace}%
\newcommand{\AlgA}{\algo{A}}%
\newcommand{\PROJ}{\algo{PROJ}\xspace}
\newcommand{\var}{\text{\normalfont var}}
\newcommand{\bigO}[1]{\ensuremath{{\mathcal O}(#1)}}

\newcommand{\Prev}[0]{{\it{PP}}\hy Tabs\xspace}

\usepackage{array}
\newcolumntype{H}{>{\setbox0=\hbox\bgroup}c<{\egroup}@{}}

\DeclareMathOperator{\bucket}{=_P}%
\DeclareMathOperator{\buckets}{EqClasses}
\DeclareMathOperator{\subbuckets}{sub\hy buckets}
\DeclareMathOperator{\children}{children}
\newcommand{\childrenseq}{\children}
\DeclareMathOperator{\pcnt}{pmc}
\DeclareMathOperator{\sipmc}{s-ipmc}
\DeclareMathOperator{\orig}{Origins}
\DeclareMathOperator{\origs}{Origins}
\newcommand{\origse}[1]{\operatorname{Origins}}

\DeclareMathOperator{\local}{local}

\DeclareMathOperator{\Ext}{Exts}
\DeclareMathOperator{\Exts}{Exts}
\DeclareMathOperator{\PExt}{SatExt}

\DeclareMathOperator{\pmc}{pmc}
\DeclareMathOperator{\ipmc}{ipmc}
\DeclareMathOperator{\poly}{poly}
\DeclareMathOperator{\icnt}{ipmc}

\newcommand{\Tab}[1]{\ensuremath{\text{Child-Tabs}}}

\def\thyph{\text{-}\penalty0\hskip0pt\relax}
\newcommand{\ATabs}[2]{\ensuremath{#1\thyph\text{Comp[$#2$]}}}
\newcommand{\ATab}[1]{\ensuremath{#1\thyph\text{Comp}}}

\springerversion{\spnewtheorem{observation}{Observation}{\bfseries}{\itshape}}
\springerversion{\renewenvironment{example}{\begin{EXa}}{\hfill\ensuremath{\blacksquare}\end{EXa}}
\spnewtheorem{EXa}{Example}{\bfseries}{\normalfont}

\renewenvironment{proof}{\begin{pf}}{\qed\end{pf}}}

\arxivversion{\usepackage{amsthm}
\newtheorem{observation}{Observation}
\newtheorem{example}{Example}
\newtheorem{definition}{Definition}
\newtheorem{theorem}{Theorem}
\newtheorem{remark}{Remark}
\newtheorem{corollary}{Corollary}
\newtheorem{proposition}{Proposition}
\newtheorem{lemma}{Lemma}

}

\newenvironment{restateobservation}[1][\unskip]{%
  \begingroup

}%
{%
  \addtocounter{observation}{-1}
  \endgroup
}%

\newenvironment{restatecorollary}[1][\unskip]{%
  \begingroup

}%
{%
  \addtocounter{corollary}{-1}
  \endgroup
}%

\newenvironment{restatetheorem}[1][\unskip]{%
  \begingroup

}%
{%
  \addtocounter{theorem}{-1}
  \endgroup
}%

\makeatletter
\newcommand{\algorithmfootnote}[2][\footnotesize]{
  \let\old@algocf@finish\@algocf@finish
  \def\@algocf@finish{\old@algocf@finish
    \leavevmode\rlap{\begin{minipage}{\linewidth}
    #1#2
    \end{minipage}}
  }
}
\makeatother

\graphicspath{{./figures/}}
\DeclareGraphicsExtensions{.pdf,.png,.jpg}
\springerversion{\bibliographystyle{splncs03}
%
%

\toctitle{Lecture Notes in Computer Science}
\tocauthor{Authors' Instructions}}

\title{Solving Projected Model Counting by Utilizing Treewidth and its
  Limits%
%
%
}
\usepackage{url}
\urldef{\mailsa}\path|{fichte, hecher,morak,woltran}@dbai.tuwien.ac.at|
\springerversion{\titlerunning{Exploiting Treewidth for Projected Model Counting \& Limits}}

\springerversion{
\author{Johannes K. Fichte %
\and Markus Hecher \and Michael Morak \and Stefan Woltran%
}}


\springerversion{\institute{Institute of Logic and Computation, TU Wien, Vienna, Austria\\
  \email{lastname@dbai.tuwien.ac.at}\\
}
\authorrunning{Fichte et al.}}








\newenvironment{indented}{\begin{changemargin}{1cm}{0cm}}{\end{changemargin}}




\let\phi\varphi
\let\epsilon\varepsilon

\renewcommand{\models}{\vDash}


\newcommand{\calT}{\mathcal{T}}

\newcommand{\ta}[1]{\ensuremath{2^{#1}}}
\newcommand{\card}[1]{\left|#1\right|}
\newcommand{\CCard}[1]{\|#1\|}

\newcommand{\Nat}{\mathbb{N}} 

\newcommand{\algo}[1]{\ensuremath{\mathbb{#1}}}


\newcommand{\NP}{\ensuremath{\textsc{NP}}\xspace}

\newcommand{\PSPACE}{\ensuremath{\textsc{PSpace}}}



\newcommand{\tw}[1]{\mathit{tw}(#1)}












\newcommand{\SB}{\{}%
\newcommand{\SM}{\mid}%
\newcommand{\SE}{\}}%
\def\hy{\hbox{-}\nobreak\hskip0pt}

\newcommand{\solverix}[1]{\mbox{\text{#1}}\xspace}

\newcommand{\sharpsat}{\solverix{SharpSAT}}
\newcommand{\dynasp}[1]{\ensuremath{\solverix{DynASP2}}}
\newcommand{\dynaspplus}[1]{\ensuremath{\solverix{DynASP2.5}}}
\newcommand{\prog}{\ensuremath{F}}



\DeclareMathOperator{\width}{width}
\newcommand{\algNS}{{\algo{N}\algS}\xspace}%
\newcommand{\algS}{\AlgS}%
\newcommand{\algEST}[1]{{{\algo{PMC}}}\xspace}%
\newcommand{\algNES}{\algo{N}\algEST{t}}%
\newcommand{\algHES}{\algo{H}\algEST{t}}%

\DeclareMathOperator{\depth}{depth}
\DeclareMathOperator{\cache}{cache}

\newcommand{\hdpa}{\ensuremath{\mathtt{HybDP}}\xspace}
\newcommand{\adpa}{\ensuremath{\mathtt{NestDP}}\xspace}
\DeclareMathOperator{\compat}{comp}
\newcommand{\primal}[1]{\ensuremath{G_{#1}}}
\newcommand{\nested}[2]{\ensuremath{G_{#1}^{#2}}}

\newcommand{\nesthdb}{\textsf{nestHDB}\xspace}
\newcommand{\solver}{\nesthdb}
\newcommand{\solversc}{\textsf{nestHDB(sc)}\xspace}
\newcommand{\countAntom}{\textsf{countAntom}\xspace}
\newcommand{\gpusat}{\textsf{gpusat2}\xspace}

\newcommand{\dpdb}{\textsf{dpdb}\xspace}
\newcommand{\clingo}{\textsf{clingo}\xspace}
\newcommand{\htd}{\textsf{htd}\xspace}

\newcommand{\minic}{\textsf{minic2d}\xspace}
\newcommand{\picosat}{\textsf{picosat}\xspace}
\newcommand{\projmc}{\textsf{projMC}\xspace}
\newcommand{\ganak}{\textsf{ganak}\xspace}

\DeclareMathOperator{\rootOf}{root}

\newcommand{\pmcp}{\textsf{pmc}\xspace}

\newcommand{\ASP}{\textsc{ASP}\xspace}
\newcommand{\cSAT}{\textsc{\#Sat}\xspace}
\newcommand{\sharpSAT}{\cSAT}
\newcommand{\cESAT}{\PMC}




\makeatletter
\DeclareRobustCommand{\rvdots}{%
  \vbox{
    \baselineskip4\p@\lineskiplimit\z@
    \kern-\p@
    \hbox{.}\hbox{.}\hbox{.}
  }}
\makeatother



\DeclareFontFamily{U}{matha}{\hyphenchar\font45}
\DeclareFontShape{U}{matha}{m}{n}{
      <5> <6> <7> <8> <9> <10> gen * matha
      <10.95> matha10 <12> <14.4> <17.28> <20.74> <24.88> matha12
      }{}
\DeclareSymbolFont{matha}{U}{matha}{m}{n}
\DeclareMathSymbol{\squplus}{2}{matha}{"5D}

\makeatletter

\newcommand{\raisemath}[1]{\mathpalette{\raisem@th{#1}}}
\newcommand{\raisem@th}[3]{\raisebox{#1}{$#2#3$}}

\newcommand{\pushright}[1]{\ifmeasuring@#1\else\omit\hfill\ensuremath{\displaystyle#1}\fi\ignorespaces}
\newcommand{\pushleft}[1]{\ifmeasuring@#1\else\omit$\displaystyle#1$\hfill\fi\ignorespaces}
\providecommand{\leftsquigarrow}{%
	\mathrel{\mathpalette\reflect@squig\relax}%
}
\newcommand{\reflect@squig}[2]{%
	\reflectbox{$\m@th#1\rightsquigarrow$}%
}

\makeatother

\DeclareMathOperator{\cntc}{\#\cdot}%

\DeclareMathOperator{\type}{type}
\newcommand{\intr}{\textit{int}}
\newcommand{\leaf}{\textit{leaf}}
\newcommand{\rem}{\textit{rem}}
\newcommand{\join}{\textit{join}}


%
\DeclareMathOperator{\post}{post-order}
\DeclareMathOperator{\dom}{dom}


\let\P\undefined
\DeclareMathOperator{\P}{P}

\newcommand{\BIGOP}[1]
{
\mathop{\mathchoice%
{\raise-0.22em\hbox{\huge $#1$}}%
{\raise-0.05em\hbox{\Large $#1$}}{\hbox{\large $#1$}}{#1}}}

\newcommand{\BIGboxplus}{\mathop{\mathchoice%
{\raise-0.35em\hbox{\huge $\boxplus$}}%
{\raise-0.15em\hbox{\Large $\boxplus$}}{\hbox{\large $\boxplus$}}{\boxplus}}}




%
\newcommand{\TTT}{\ensuremath{\mathcal{T}}}%
\newcommand{\WWW}{\ensuremath{\mathcal{W}}}%

\newcommand*\mcupinn[2]{\vcenter{\hbox{$\mathsurround=0pt
  \ifx\displaystyle#1\textstyle\else#1\fi\bigcup$}}}

\newcommand{\NAT}{\ensuremath{\mathbb{N}}}

\newcommand{\inputPredColor}{orange!55!red}
\newcommand{\outputPredColor}{blue!45!black}
\newcommand{\statePredColor}{green!62!black}
\newcommand{\specialPredColor}{red!62!black}

\newcommand{\tuplecolor}[1]{\textcolor{#1}}

\newcommand{\tabval}{\ensuremath{u}}
\newcommand{\tab}[1]{\ensuremath{\tau_{#1}}}

\newcommand{\progt}[1]{\ensuremath{\prog_{\hspace{-0.05em}\leq\hspace{-0.05em}#1}}}

\newcommand{\dpa}{\ensuremath{\mathtt{DP}}\xspace}

\newcommand{\mdpa}[1]{\ensuremath{\mathtt{PCNT}_{#1}}}

\newcommand{\eqdef}{\ensuremath{\,\mathrel{\mathop:}=}}

\newcommand{\Card}[1]{|#1|}
\renewcommand{\P}{\text{\normalfont P}\xspace}

\newcommand{\PRIM}{\AlgS} 

\newcommand{\SAT}{\textsc{Sat}\xspace}

\newcommand{\QBFSAT}{\textsc{QSat}\xspace}

\newcommand{\Q}{\ensuremath{Q}}

\newcommand{\INC}{\ensuremath{{\algo{INC}}}\xspace}

\newcommand{\problemFont}[1]{\textsc{#1}}

\newlength\problemlength
\settowidth{\problemlength}{\sl Question: }
\newcommand\dproblem[3]{%
\begin{center}
\fbox{%
\begin{minipage}{.93	\linewidth}%
\begin{list}{}{\labelwidth\problemlength \labelsep.7em \rightmargin1.5em
\leftmargin\problemlength \advance\leftmargin by3em
\parsep0ex \itemsep.2ex plus.1ex}
\item[{\sl Problem:\hfill}] {\problemFont{#1}}
\item[{\sl Input:  \hfill}] #2
\item[{\sl Task: \hfill}] #3
\end{list}
\end{minipage}
}
\end{center}
}

\newcommand{\AlgS}{\algo{SAT}\xspace}%
\begin{document}
\arxivversion{\begin{frontmatter}


  \author[liu]{Johannes K. Fichte}%
  \ead{johannes.klaus.fichte@liu.se}
  \author[mit]{Markus Hecher\corref{cor1}}%
  \ead{hecher@mit.edu}
  \author[klag]{Michael Morak}%
  \ead{michael.morak@aau.at}
  %
  \author[vieA]{Patrick Thier}%
  \ead{thier@tuwien.ac.at}
  \author[vieA]{Stefan Woltran}%
  \ead{woltran@dbai.tuwien.ac.at}
  %
  %
  \address[liu]{AIICS, IDA, Link\"oping University,\\
    581 83 Link\"oping, Sweden\smallskip}%
  %
  \address[mit]{Computer Science and Artificial Intelligence Lab, Massachusetts Institute of Technology,\\
    32 Vassar St., Cambridge, MA, United States\smallskip}%
  \address[vieA]{Database and Artificial Intelligence Group, TU Wien,\\
    Favoritenstrasse 9-11, 1040 Vienna, Austria\smallskip}%
  \address[klag]{Department of Artificial Intelligence and Cybersecurity, University of Klagenfurt,\\
    Universit\"atsstra{\ss}e 65-67, 9020 Klagenfurt am W\"orthersee, Austria\smallskip}%
  %
  %
  %
  \cortext[cor1]{Corresponding author.}

  \date{\today} 

\begin{abstract}%
  In this paper, we introduce a novel algorithm to solve
  \emph{projected model counting} (\PMC). \PMC asks to count solutions
  of a Boolean formula with respect to a given set of \emph{projection
    variables}, where multiple solutions that are identical when
  restricted to the projection variables count as only one solution.
  Inspired by the observation that the so-called ``treewidth'' is one of the 
most prominent structural parameters,
  our algorithm utilizes small treewidth of the primal 
  graph of the input instance. More precisely, it runs in time~$\bigO{2^{2^{k+4}} n^2}$
  where $k$ is the treewidth and $n$ is the input size of the
  instance. In other words, we obtain that the problem~\PMC is
  fixed-parameter tractable when parameterized by treewidth.  Further,
  we take the exponential time hypothesis (ETH) into consideration and
  establish lower bounds of bounded treewidth algorithms for \PMC,
  yielding asymptotically tight runtime bounds of our algorithm.

While the algorithm above serves as a first theoretical upper bound
and although it might be quite appealing for small values of~$k$,
unsurprisingly a naive implementation adhering to this runtime bound
 suffers already from instances of relatively small width.
Therefore, 
	we turn our attention to several measures in order to
resolve this issue towards exploiting treewidth in practice:
We present a technique called nested dynamic programming,
where different levels of abstractions of the primal graph are used to (recursively) compute and refine
tree decompositions of a given instance.
Further, we 
integrate the concept of hybrid solving,
where subproblems hidden by the abstraction 
are solved by classical search-based solvers, 
which leads to an interleaving of
parameterized and classical solving. 
Finally, we provide a nested dynamic programming algorithm and an implementation that relies on database technology
for PMC and a prominent special case of PMC, namely model counting (\cSAT).
Experiments indicate that the advancements are promising, allowing us to solve instances of treewidth upper bounds beyond 200.
\end{abstract}

\begin{keyword}
tree decompositions \sep high treewidth \sep lower bounds \sep exponential time hypothesis \sep graph problems \sep Boolean logic  \sep counting \sep projected model counting \sep nested dynamic programming \sep hybrid solving \sep parameterized algorithms \sep parameterized complexity \sep computational complexity \sep database management systems
\MSC[2010] 05C05 \sep 05C83 \sep 03B05 \sep 03B70 
\end{keyword}
\end{frontmatter}}


\section{Introduction}\label{sec:introduction}
A problem that has been used to solve a large variety of real-world
questions is the \emph{model counting problem}
(\cSAT)~\cite{AbramsonBrownEdwards96a,ChoiBroeckDarwiche15a,DomshlakHoffmann07a,MeelEtAl17a,ManningRaghavanSchutze08a,PourretNaimBruce08a,SahamiDumaisHeckerman98a,SangBeameKautz05a,XueChoiDarwiche12a}.
It asks to compute the number of solutions of a Boolean
formula~\cite{GomesKautzSabharwalSelman08a} and is theoretically of
high worst-case complexity
($\cntc\P$-complete~\cite{Valiant79,Roth96a}). Lately, both \cSAT and
its approximate version have received renewed attention in theory and
practice~\cite{ChakrabortyMeelVardi16a,MeelEtAl17a,LagniezMarquis17a,SaetherTelleVatshelle15a}.
A concept that allows very natural abstractions of data and query
results is projection. Projection has wide applications in
databases~\cite{AbiteboulHullVianu95} and declarative problem
modeling.  
The problem \emph{projected model counting} (\PMC) asks to count
solutions of a Boolean formula with respect to a given set of
\emph{projection variables}, 
where multiple solutions that are identical when restricted to the
projection variables count as only one solution.
If all variables of the formula are projection variables, then \PMC is
the \cSAT problem and if there are no projection variables then it is
simply the \SAT problem.
Projected variables allow for solving problems where one needs to
introduce auxiliary variables, in particular, if these variables are
functionally independent of the variables of interest, in the problem
encoding,~e.g.,~\cite{GebserSchaubThieleVeber11,GinsbergParkesRoy98a}.
Projected model counting is a fundamental problem in artificial intelligence
and was also subject to a dedicated track in the first model counting competition~\cite{FichteHecherHamiti20}.
It turns out that there are plenty of use cases and applications for \PMC,
ranging from a variety of real-world
questions in modern society, %
artificial intelligence~\cite{LagniezMarquis19}, reliability estimation~\cite{MeelEtAl17a} and
combinatorics~\cite{AzizChuMuise15a}.
Variants of this problem are relevant to problems in probabilistic and quantitative reasoning, e.g.,~\cite{ChoiBroeckDarwiche15a,DomshlakHoffmann07a,XueChoiDarwiche12a} and Bayesian reasoning~\cite{SangBeameKautz05a}.
\FIX{This work also inspired follow-up work, as 
extensions of projected model counting as well as generalizations
for logic programming and
quantified Boolean formulas have been presented recently, e.g.,~\cite{CapelliMengel19,FichteHecher19,DudekPanVardi21}.}

%
%

When we consider the computational complexity of \PMC it turns out
that under standard assumptions the problem is even harder than \cSAT,
more precisely, complete for the class
$\cntc\NP$~\cite{DurandHermannKolaitis05}.
Even though there is a \PMC solver~\cite{AzizChuMuise15a} and an \ASP
solver that implements projected
enumeration~\cite{GebserKaufmannSchaub09a}, \PMC has received very
little attention in parameterized algorithmics so far.
Parameterized
algorithms~\cite{CyganEtAl15,DowneyFellows13,FlumGrohe06,Niedermeier06}
tackle computationally hard problems by directly exploiting certain
structural properties (parameter) of the input instance to solve the
problem faster, preferably in polynomial-time for a fixed parameter
value.
%
%
%
In this paper, we consider the treewidth of graphs associated with the
given input formula as parameter, namely the primal 
graph~\cite{SamerSzeider10b}.
Roughly speaking, small \emph{treewidth} of a graph measures its
tree-likeness and sparsity. Treewidth is defined in terms of
\emph{tree decompositions (TDs)}, which are arrangements of graphs
into trees.
%
When we take advantage of small treewidth, we usually take a TD and
evaluate the considered problem in parts, via \emph{dynamic
  programming~(DP)} on the TD. 
This dynamic programming technique utilizes tree decompositions,
where a tree decomposition is traversed in post-order, i.e., from the leaves towards the root, and thereby for each node of the TD tables are computed such that a problem is solved by cracking smaller (partial) problems.

In this work we apply tree decompositions for projected model counting \FIX{and study precise \emph{runtime dependency on treewidth}.
While there are also related works on properties for efficient counting algorithms, e.g.,~\cite{DurandMengel13,ChenMengel17,GrecoScarcello17}, even for treewidth, precise runtime dependency for projected model counting has been left open.} 
We design a novel algorithm that runs in \emph{double exponential time}\footnote{Runtimes that are double exponential in the treewidth indicates expressions of the form~$2^{2^{\mathcal{O}(k)}}\cdot\poly(n)$, where~$n$ indicates the number of variables of a given formula and~$k$ refers to the treewidth of its primal graph.} in the treewidth, but it is quadratic in the number of variables of a given formula.
Later, we also establish a conditional lower bound showing that under reasonable assumptions it is quite \emph{unlikely that one can significantly improve} this algorithm.

Naturally, it is expected that our proposed \PMC algorithm can be only competitive for instances where the treewidth is very low.
Still, despite our new theoretical result, it turns out that in 
practice there is a way to efficiently implement dynamic programming and tree decompositions for solving \PMC.
However, most of the existing systems based on dynamic programming guided along a tree decomposition are suffering from maintaining large tables, since the size of these tables (and thus the computational efforts required) are 
bounded by a function in the treewidth of the instance. 
Although dedicated competitions~\cite{Dell17a} for treewidth
advanced the state-of-the-art 
for efficiently computing 
treewidth and TDs~\cite{AbseherMusliuWoltran17a,Tamaki19
},
these systems and approaches 
reach their limits when instances have higher treewidth.
Indeed, such approaches based on dynamic programming 
reach their limits when instances have higher treewidth; a situation which 
can even occur in structured real-world instances~\cite{ManiuSenellartJog2019}.
Nevertheless in the area of Boolean satisfiability, this approach proved to be successful for counting problems, such as, e.g., (weighted) model counting~\cite{FichteEtAl20,FichteHecherZisser19,SamerSzeider10b}.
To further 
increase the practical applicability of dynamic programming for \PMC, novel techniques are required, where we 
rely on certain simplifications of a graph, which we call \emph{abstraction}\footnote{A formal account on these abstractions will be given in Definition~\ref{def:nestprimalgraph}.}.
Thereby, we (a) rely on different \emph{levels of abstraction} of the instance at hand;
(b) 
\emph{treat subproblems} orginating in the abstraction 
by standard solvers
whenever widths appear too high; 
and (c) use highly \emph{sophisticated data management} 
in order to store and process tables obtained by dynamic programming.

\medskip
\paragraph{Contributions}
In more details, we provide the following contributions.
%
\begin{enumerate}
\item We introduce a novel algorithm to \emph{solve projected model
    counting} in time~$\bigO{2^{2^{k+4}} n^2}$ where $k$ is the
  treewidth of the primal 
  graph of the instance and $n$ is the size of the input instance.
  Similar to recent DP algorithms for problems on the second level of
  the polynomial hierarchy~\cite{FichteEtAl17b}, our algorithm
  traverses the given tree decomposition multiple times (multi-pass).
  In the first traversal, we run a dynamic programming algorithm on
  tree decompositions to solve \SAT~\cite{SamerSzeider10b}. In a
  second traversal, we construct equivalence classes on top of the
  previous computation to obtain model counts with respect to the
  projection variables by exploiting combinatorial properties of
  intersections.
\item Then, we establish that our \emph{runtime bounds are asymptotically tight under the
    exponential time hypothesis (ETH)}~\cite{ImpagliazzoPaturiZane01}
  using a recent result by Lampis and Mitsou~\cite{LampisMitsou17},
  who established lower bounds for the
  problem~$\exists\forall$\hy\SAT assuming ETH.
  Intuitively, ETH states a complexity theoretical lower bound on how
  fast satisfiability problems can be solved. More precisely, one
  \emph{cannot} solve 3\hy\SAT in
  time~$2^{s\cdot n}\cdot n^{\bigO{1}}$ for some~$s>0$ and number~$n$
  of variables.
  \item Finally, we also provide an implementation for \PMC that efficiently utilizes treewidth and is highly competitive with state-of-the-art solvers.
  In more details, we treat above aspects (a), (b), and (c) as follows.
  \begin{enumerate} 
\item[(a)] To tame the beast of high treewidth, we propose 
	\emph{nested dynamic programming},
where only parts of some abstraction of a graph are decomposed.
Then, each TD node also needs to solve a \emph{subproblem}
residing in the graph, but may involve vertices outside the abstraction.
In turn, for solving such subproblems, the idea of nested DP is 
		to subsequently repeat decomposing and solving more fine-grained graph abstractions in a nested fashion.%
%
	\shortversion{This results not only in elegant DP algorithms, but also allows to
	deal with high treewidth.}%
  While candidates for obtaining such abstractions often naturally originate from the problem \PMC, nested DP may require computing those during nesting, 
for which we even present a generic solution.
%
	\item[(b)] To further improve the capability of handling high treewidth,
	we show how to apply nested DP in the context of \emph{hybrid solving},
	where established, standard solvers (e.g., \SAT solvers) and caching are incorporated in 
	nested DP such that the best of two worlds are combined.
Thereby, we solve counting problems like \PMC, where we apply DP to parts of the problem instance that are \emph{subject to counting}, while depending on the existence of a solution for certain subproblems. 
Those subproblems that are \emph{subject to searching} for the existence of a solution reside in the abstraction only and are solved via standard solvers.
%
	\item[(c)] We implemented a system based on a recently published tool~\cite{FichteEtAl20} for using database management systems (DBMS) to efficiently perform table manipulation operations needed during DP. 
	Our system is called \solver{}\footnote{\solver{} is open-source and available at \href{https://github.com/hmarkus/dp\_on\_dbs/tree/nesthdb}{github.com/hmarkus/dp\_on\_dbs/tree/nesthdb}.} and uses and significantly extends this tool in order to perform hybrid solving, thereby combining nested DP and standard solvers.
		As a result, we use DBMS for efficiently implementing the handling of tables needed by nested DP.
	Preliminary experiments indicate that nested DP with hybrid solving can be fruitful, where we are capable of solving instances, whose treewidth upper bounds are beyond 200.
\end{enumerate}
\end{enumerate}
%
%

%
%

This paper combines research of work that is published at the 21st International Conference on Satisfiability (SAT 2018)~\cite{FichteEtAl18} and research that was presented at the 23rd International Conference on Satisfiability (SAT 2020)~\cite{HecherThierWoltran20}.
In addition to these conference versions, we added detailed proofs, further examples, and significantly improved the presentation throughout the document.

\section{Preliminaries}\label{sec:preliminaries}

\FIX{We assume familiarity with basic notions from set theory and on sequences. We write a sequence consisting of~$\ell$ elements~$e_i$ for~$1\leq i\leq \ell$ in angular brackets,~i.e., $\langle e_1, e_2 \ldots, e_\ell \rangle$.}
%
For a set~$X$, let $\ta{X}$ be the \emph{power set of~$X$}
consisting of all subsets~$Y$ with $\emptyset \subseteq Y \subseteq X$.
%
%
Recall the well-known combinatorial inclusion-exclusion
principle~\cite{GrahamGrotschelLovasz95a}, which states that for two
finite sets~$A$ and $B$ it is true
that~$\Card{A \cup B} = \Card{A} + \Card{B} - \Card{A \cap B}$. Later,
we need a generalized version for arbitrary many sets. Given for some
integer~$n$ a family of finite sets~$X_1$, $X_2$, $\ldots$, $X_n$,
%
the number of elements in the union
over all sets is
$\Card{\bigcup^n_{j = 1} X_j} = \sum_{I \subseteq \{1, \ldots, n\}, I
  \neq \emptyset} (-1)^{\Card{I}-1} \Card{\bigcap_{i \in I}
  X_i}$. 

\paragraph{Satisfiability}
A literal is a (Boolean) variable~$x$ or its negation~$\neg x$. A
\emph{clause} is a finite set of literals, interpreted as the
disjunction of these literals.
%
%
A \emph{(CNF) formula} is a finite set of clauses, interpreted as the
conjunction of its clauses.  A 3\hy CNF has clauses of length at
most~3. 
Let $F$ be a formula.  A \emph{sub-formula~$S$} of~$F$ is a
subset~$S\subseteq F$ of~$F$.  For a clause~$c \in F$, we let
$\var(c)$ consist of all variables that occur in~$c$ and
$\var(F)\eqdef\bigcup_{c \in F} \var(c)$.  An \emph{assignment} is a mapping $\alpha: V \rightarrow \{0,1\}$ for a set~$V\subseteq\var(F)$ of variables.
For $x\in V,$ we define $\alpha(\neg x) \eqdef 1 - \alpha(x)$.
The formula~$F$ \emph{under an assignment~$\alpha$} 
is the formula~$F[\alpha]$ obtained from~$F$ by removing all
clauses~$c$ containing a literal set to~$1$ by $\alpha$ and removing
from the remaining clauses all literals set to~$0$ by $\alpha$. An
assignment~$\alpha$ is \emph{satisfying} if $F[\alpha]=\emptyset$, denoted by~$\alpha \models F$.
Then, $F$ is \emph{satisfiable} if there is such a satisfying
assignment~$\alpha$, otherwise we say~$F$ is \emph{unsatisfiable}. %
Let $V$ be a set of variables. An \emph{interpretation} is a
set~$J\subseteq V$ and its induced assignment~$\alpha_{J,V}$ of~$J$
with respect to $V$ is defined as
follows~$\alpha_{J,V} \eqdef \SB v \mapsto 1 \SM v \in J \cap V \SE
\cup \SB v \mapsto 0 \SM v \in V \setminus J \SE$.
We simply write $\alpha_{J}$ for $\alpha_{J,V}$ if $V=\var(F)$.
An interpretation~$J$ is a \emph{model} of~$F$ if its
induced assignment~$\alpha_J$ is satisfying, i.e., $\alpha_J\models F$.
%
%
%
%
Given a formula~$F$; the problem \SAT asks whether $F$ is satisfiable
and the problem \cSAT asks to output the number of models
of~$F$,~i.e., $\Card{S}$ where $S$ is the set of all models of~$F$.


\paragraph{Projected Model Counting} %
An instance of the projected model counting problem is a pair~$(F,P)$
where $F$ is a (CNF) formula and $P$ is a set of Boolean variables
such that $P \subseteq\var(F)$.  We call the set~$P$ \emph{projection
  variables} of the instance. The \emph{projected model count} of a
formula~$F$ with respect to~$P$ is the number of total
assignments~$\alpha$ to variables in~$P$ such that the
formula~$F[\alpha]$ under~$\alpha$ is satisfiable.
The \emph{projected model counting problem
  (\PMC)}~\cite{AzizChuMuise15a} asks to output the projected model
count of~$F$,~i.e., $\Card{ \SB M \cap P \SM M \in S \SE}$ where $S$
is the set of all models of~$F$.

\begin{example}\label{ex:running0}
  Consider
  formula~$F\eqdef \{\overbrace{\neg a \vee b \vee p_1}^{c_1},
  \overbrace{a\vee \neg b \vee \neg p_1}^{c_2}, \overbrace{a \vee
    p_2}^{c_3}, \overbrace{a \vee \neg p_2}^{c_4}\}$ and
  set~$P\eqdef\{p_1,p_2\}$ of projection variables.
  The models of formula~$F$ are $\{a,b\}$, $\{a,p_1\}$,
  $\{a,b,p_1\}$,$\{a,b,p_2\}$, $\{a,p_1,p_2\}$, and $\{a,b,p_1,p_2\}$.
  However, projected to the set~$P$, we only have models $\emptyset$,
  $\{p_1\}$, $\{p_2\}$, and $\{p_1,p_2\}$.
  Hence, the model count of~$F$ is 6 whereas the projected model count
  of instance~$(F,P)$ is 4.
\end{example}

\paragraph{Quantified Boolean Formulas (QBFs)}
A \emph{(prenex) quantified Boolean formula}~$\Q$ is of the
form
  $Q_1 V_1. Q_2 V_2.\ldots Q_m V_m. F$
  where $Q_i \in \{\forall, \exists\}$, $V_i$ are disjoint sets of
  Boolean variables, and $F$ is a Boolean formula that contains only
  the variables in $\bigcup^m_{i=1} V_i$.
  %
  %
  The truth (evaluation) of quantified Boolean formulas is defined in the standard way, where for~$\Q$ above if~$Q_1=\exists$, then~$\Q$ evaluates to true if and only if there
  exists an assignment~$\alpha: V_1\rightarrow \{0,1\}$ such
  that~$Q_2 V_2.\ldots Q_m V_m. F[\alpha]$ evaluates to true.  If~$Q_1=\forall$,
  then~$\Q$ evaluates to true if for any
  assignment~$\alpha: V_1 \rightarrow\{0,1\}$, we have that $Q_2 V_2.\ldots Q_m V_m. F[\alpha]$ evaluates to
  true.
  Given a quantified Boolean formula~$\Q$, the evaluation problem of
  quantified Boolean formulas~\QBFSAT asks whether $Q$ evaluates to
  true.
  The problem~\QBFSAT is \PSPACE-complete and is therefore believed to
  be computationally harder than
  \SAT~\cite{KleineBuningLettman99,Papadimitriou94,StockmeyerMeyer73}.
  A well known fragment of \QBFSAT is $\forall\exists$\hy \SAT where
  the input is restricted to quantified Boolean formulas of the
  form~$\forall V_1.\exists V_2.F$ where $F$ is a Boolean
  CNF formula. The complexity class consisting of all problems that are
  polynomial-time reducible to $\forall\exists$\hy \SAT is denoted by
  $\Pi_2^P$, and its complement is denoted by $\Sigma_2^P$. 
For more
detailed information on QBFs we refer to other sources,
e.g.,~\cite{BiereHeuleMaarenWalsh09,KleineBuningLettman99}.



%




\newcommand{\restrict}[2]{\ensuremath{#1\cap #2}}


\paragraph{Computational Complexity}
We assume familiarity with standard notions in computational
complexity~\cite{Papadimitriou94}
and use counting complexity classes as defined by Hemaspaandra and
Vollmer~\cite{HemaspaandraVollmer95a}.
%
%
%
For parameterized complexity, we refer to standard
texts~\cite{CyganEtAl15,DowneyFellows13,FlumGrohe06,Niedermeier06}.
%
%
Let $\Sigma$ and $\Sigma'$ be some finite alphabets.  We call
$I \in \Sigma^*$ an \emph{instance} and $\CCard{I}$ denotes the size
of~$I$.  
%
Let $L \subseteq \Sigma^* \times \Nat$ and
$L' \subseteq {\Sigma'}^*\times \Nat$ be two parameterized problems. An
\emph{fpt-reduction} $r$ from $L$ to $L'$ is a many-to-one reduction
from $\Sigma^*\times \Nat$ to ${\Sigma'}^*\times \Nat$ such that for all
$I \in \Sigma^*$ we have $(I,k) \in L$ if and only if
$r(I,k)=(I',k')\in L'$ such that $k' \leq g(k)$ for a fixed computable
function $g: \Nat \rightarrow \Nat$, and there is a computable function
$f$ and a constant $c$ such that $r$ is computable in time
$O(f(k)\CCard{I}^c)$~\cite{FlumGrohe06}.
A \emph{witness function} is a
function~$\mathcal{W}\colon \Sigma^* \rightarrow 2^{{\Sigma'}^*}$ that
maps an instance~$I \in \Sigma^*$ to a finite subset
of~${\Sigma'}^*$. We call the set~$\WWW(I)$ the \emph{witnesses}. A
\emph{parameterized counting
  problem}~$L: \Sigma^* \times \NAT_0 \rightarrow \Nat_0$ is a
function that maps a given instance~$I \in \Sigma^*$ and an
integer~$k \in \NAT$ to the cardinality of its
witnesses~$\card{\WWW(I)}$.
We call $k$ the \emph{parameter}.
%
%
%
The \emph{exponential time hypothesis} (ETH) states
that 
the (decision) problem~\SAT on 3\hy CNF formulas \emph{cannot} be
solved in time $2^{s\cdot n}\cdot n^{\bigO{1}}$ for some~$s>0$ where
$n$ is the number of variables~\cite{ImpagliazzoPaturiZane01}.

\paragraph{Graph Theory}
We recall some graph theoretical notations. For further basic
terminology on graphs and digraphs, we refer to standard
texts~\cite{Diestel12,BondyMurty08}.
An \emph{undirected graph} or simply a \emph{graph} is a
pair~$G=(V,E)$ where $V\neq \emptyset$ is a set of \emph{vertices} and
$E \subseteq \SB \{u,v\}\subseteq V \SM u \neq v \SE$ is a set of
\emph{edges}.
A graph~$G'=(V',E')$ is a
\emph{subgraph} of $G$ if $V'\subseteq V$ and $E'\subseteq E$ and an
\emph{induced subgraph} if additionally for any $u,v \in V'$ and
$\{u,v\} \in E$ also $\{u,v\} \in E'$. 
Let $G=(V,E)$ be a graph and~$A\subseteq V$ be a set of vertices.
We define the \emph{subgraph~$G-A$}, which is the graph obtained from $G$ by removing vertices~$A$, by~$G-A\eqdef (V\setminus A, \{e\mid e\in E, e\cap A=\emptyset\}$.
Graph $G$ is \emph{complete} if for any two
vertices~$u,v \in V$ there is an edge~$uv \in E$. $G$ contains a
\emph{clique} on $V'\subseteq V$ if the induced
subgraph~$(V',E')$ of $G$ is a complete graph.
A \emph{(connected) component}~$C\subseteq V$ of
$G$ is a $\subseteq$-largest set such that
for any two vertices~$u, v \in C$ there is a path from~$u$
to~$v$ in~$G$.

\paragraph{Tree Decompositions and Treewidth} %
For basic terminology on graphs
, we refer to standard
texts~\cite{Diestel12,BondyMurty08}.  For a (rooted) tree~$T=(N,A)$ with
root node~$\rootOf(T)$ and a node~$t \in N$, we let $\children(t)$ be the
sequence of all nodes~$t'$ in arbitrarily but fixed order, which have
an edge~$(t,t') \in A$.
Let $G=(V,E)$ be a graph.
A \emph{tree decomposition (TD)} of graph~$G$ is a pair
$\TTT=(T,\chi)$ where $T=(N,A)$ is a rooted tree 
and $\chi$ a mapping that assigns to each node $t\in N$ a set
$\chi(t)\subseteq V$, called a \emph{bag}, such that the following
conditions hold:
(i)~$V=\bigcup_{t\in N}\chi(t)$ and
$E \subseteq\bigcup_{t\in N}\SB uv \SM u,v\in \chi(t)\SE$; 
(ii)
for each $r, s, t\in N$ such that $s$ lies on the path from $r$ to
$t$, we have $\chi(r) \cap \chi(t) \subseteq \chi(s)$.
Then, $\width(\TTT) \eqdef \max_{t\in N}\Card{\chi(t)}-1$.  The
\emph{treewidth} $\tw{G}$ of $G$ is the minimum $\width({\TTT})$ over
all tree decompositions $\TTT$ of $G$.
For arbitrary but fixed $w \geq 1$, it is feasible in linear time to
decide if a graph has treewidth at most~$w$ and, if so, to compute a
tree decomposition of width $w$~\cite{Bodlaender96}.
In order to simplify case distinctions in the algorithms, we always
use so-called nice tree decompositions, which can be computed in
linear time without increasing the width~\cite{BodlaenderKoster08} and
are defined as follows.
For a node~$t \in N$, we say that $\type(t)$ is $\leaf$ if
$\children(t)=\langle \rangle$; $\join$ if
$\children(t) = \langle t',t''\rangle$ where
$\chi(t) = \chi(t') = \chi(t'') \neq \emptyset$; $\intr$
(``introduce'') if $\children(t) = \langle t'\rangle$,
$\chi(t') \subseteq \chi(t)$ and $|\chi(t)| = |\chi(t')| + 1$; $\rem$
(``removal'') if $\children(t) = \langle t'\rangle$,
$\chi(t') \supseteq \chi(t)$ and $|\chi(t')| = |\chi(t)| + 1$. If for
every node $t\in N$, $\type(t) \in \{ \leaf, \join, \intr, \rem\}$ and
bags of leaf nodes and the root are empty, then the TD is called
\emph{nice}.

\section{Dynamic Programming on TDs for SAT}\label{sec:sat}

\begin{figure}[t]%
\centering
\begin{tikzpicture}[node distance=7mm,every node/.style={fill,circle,inner sep=2pt}]
\node (a) [label={[text height=1.5ex,yshift=0.0cm,xshift=0.05cm]left:$p_2$}] {};
\node (b) [right of=a,label={[text height=1.5ex]right:$a$}] {};
\node (c) [below left of=b,label={[text height=1.5ex,yshift=0.09cm,xshift=0.05cm]left:$b$}] {};
\node (d) [below right of=b,label={[text height=1.5ex,yshift=0.09cm,xshift=-0.05cm]right:$p_1$}] {};
\draw (a) to (b);
\draw (b) to (c);
\draw (b) to (d);
\draw (c) to (d);
\end{tikzpicture}\hspace{1em}%
\begin{tikzpicture}[node distance=0.5mm]
\tikzset{every path/.style=thick}

\node (leaf1) [tdnode,label={[yshift=-0.25em,xshift=0.5em]above left:$t_1$}] {$\{a,b,p_1\}$};
\node (leaf2) [tdnode,label={[xshift=-1.0em, yshift=-0.15em]above right:$t_2$}, right = 0.1cm of leaf1]  {$\{a,p_2\}$};
\coordinate (middle) at ($ (leaf1.north east)!.5!(leaf2.north west) $);
\node (join) [tdnode,ultra thick,label={[]left:$t_3$}, above  = 1mm of middle] {$\{a\}$};

\coordinate (top) at ($ (join.north east)+(3.5em,0) $);
\coordinate (bot) at ($ (top)+(0,-4em) $);

\draw [->] (join) to (leaf1);
\draw [->] (join) to (leaf2);
\end{tikzpicture}%
\caption{Primal graph~$\primal{F}$ of~$F$ from Example~\ref{ex:running1}
  (left) with a TD~${\cal T}$ of graph~$\primal{F}$
  (right).}%
\label{fig:graph-td}%
\end{figure}

Before we introduce our algorithm, we need some notations for dynamic
programming on tree decompositions and recall how to solve the
decision problem~\SAT by exploiting small treewidth.
To this end, we present in Section~\ref{sec:dpforsat} basic notation and a simple algorithm for solving \SAT and \cSAT via utilizing treewidth.
The simple algorithm is inspired by related work~\cite{SamerSzeider10b}, which is extended by the capability of actually computing some (projected) models in Section~\ref{lab:computing}. The algorithm and the definitions of the whole section will then serve as a basis for solving projected model counting in Section~\ref{sec:projmodelcounting}.


%
%
%
%
%
%
%
%


\subsection{Dynamic Programming for \SAT} %
\label{sec:dpforsat}
%

\paragraph{Graph Representation of \SAT Formulas}
In order to use tree decompositions for satisfiability problems, we
need a dedicated graph representation of the given formula~$F$.
The \emph{primal graph}~$\primal{F}$ of~$F$ has as vertices the variables of~$F$
and two variables are joined by an edge if they occur together in a
clause of~$F$.
%
%
%
%
Further, we define some auxiliary notation.  For a given node~$t$ of a
tree decomposition~$(T,\chi)$ of the primal graph, we let the \emph{bag formula} $F_t \eqdef \SB c \SM c \in F, \var(c) \subseteq \chi(t)\SE$,~i.e.,
clauses entirely covered by~$\chi(t)$.  The set~$\progt{t}$  denotes
the union over~$F_{s}$ for all descendant nodes~$s$ of~$t$.
%
%
%
In the following, we sometimes simply write \emph{tree decomposition of
  formula~$F$} or \emph{treewidth of~$F$} and omit the actual graph
representation of~$F$.

\begin{example}\label{ex:running1}
  Consider formula~$F$ from Example~\ref{ex:running0}.
  The primal graph~$\primal{F}$ of formula~$F$ and a tree
  decomposition~$\TTT$ of~$\primal{F}$ are depicted in
  Figure~\ref{fig:graph-td}. Intuitively, ${\cal T}$ allows to
  evaluate formula~$F$ in parts. When evaluating $F_{\leq t_3}$, we
  split into $F_{\leq t_1}=\{c_1,c_2\}$ and
  $F_{\leq t_2}=\{c_3, c_4\}$, respectively.
\end{example}
%
%
%

%
%

%
%
  

%



\longversion{
\begin{figure}[t]
\centering
\includegraphics[scale=0.9]{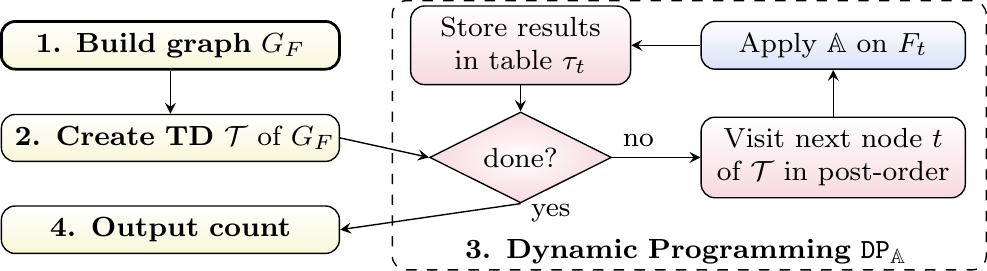}
\caption{The DP approach, where table algorithm~$\algo{A}$ modifies
  tables.~\cite{FichteEtAl17a}}
\label{fig:framework}
\end{figure}%
}

Algorithms that solve \SAT or \cSAT~\cite{SamerSzeider10b} in linear
time for input formulas of bounded treewidth proceed by dynamic
programming along the tree decomposition (in post-order) where at each
node~$t$ of the tree information is gathered~\cite{BodlaenderKloks96}
in a table~$\tau_t$.
A \emph{table}~$\tab{}$ is a set of rows, where a
\emph{row}~$\vec\tabval \in \tab{}$ is a sequence of fixed length\FIX{, which is denoted by angle brackets}. 
Tables are derived by an algorithm, which we therefore call
\emph{table algorithm}~$\AlgA$.  
The actual length, content, and meaning of the rows depend on the
algorithm~$\AlgA$ that derives tables.  Therefore, we often explicitly
state \emph{$\AlgA$-row} if rows of this \emph{type} are syntactically
used for table algorithm~$\AlgA$ and similar \emph{$\AlgA$-table} for
tables.
For sake of comprehension, we
specify the rows before presenting the actual table algorithm for
manipulating tables.
The rows used by a table algorithm~$\AlgS$ have in
common that the first position of these rows manipulated by~$\AlgS$
consists of an interpretation. The remaining positions of the row
depend on the considered table algorithm.
For each sequence~$\vec \tabval \in \tab{}$, we
write~$I(\vec \tabval)$ to address the interpretation (first) part of
the sequence~$\vec\tabval$. Further, for a given positive integer~$i$,
we denote by $\vec\tabval_{(i)}$ the $i$-th element of
row~$\vec\tabval$ and define~$\tab{(i)}$ as
$\tab{(i)}\eqdef\{\vec u_{(i)} \mid \vec u \in \tab{}\}$.

%
\begin{algorithm}[t]%
  \KwData{%
    Table algorithm $\AlgA$, and instance~$(F,P)$ of \PMC, a TD~$\TTT=(T,\chi)$ of the primal graph~$\primal{F}$ of~$F$, and tables~\Prev.\hspace{-5em}
  }%
  \KwResult{%
    Table mapping $\ATab{\AlgA}$, which maps each TD node~$t$ of~$T$ to some computed
    table~$\tau_t$.\hspace{-5em}
  } %
  $\ATab{\algo{A}} \leftarrow \{\}\qquad \tcc*[h]{empty mapping}$
  
  \For{\text{\normalfont iterate} $t$ in \text{\normalfont post-order}$(T)$}{
    \vspace{-0.05em}%

    $\Tab{} \leftarrow \langle \ATab{\AlgA}[t_1],\ldots,
    \ATab{\AlgA}[t_\ell] \rangle$ where
    $\children(t) = \langle t_1, \ldots, t_\ell
    \rangle\hspace{-5em}$

    $\ATab{\AlgA}[t] \leftarrow {\AlgA}(t,\chi(t),\prog_t,P \cap
    \chi(t), \Tab{},\Prev)$ %
    \vspace{-0.5em} %
  }%
  \Return{$\ATab{\AlgA}$} \vspace{-0.2em}%
 \caption{Algorithm ${\dpa}_{\AlgA}( (F,P), \TTT, \Prev)$ for DP on
   TD~${\cal T}$.} 
\label{fig:dpontd}
\end{algorithm}%
Then, the dynamic programming approach for Boolean
satisfiability \longversion{works as outlined in Figure~\ref{fig:framework} and}
performs the following steps:
\begin{enumerate}
\item Construct the primal graph 
$\primal{F}$ of~$F$. 
\item Compute a tree decomposition~$(T,\chi)$ of~$\primal{F}$, obtainable via heuristics. 
\item\label{step:dp} Run $\dpa_\AlgS$, \FIX{as presented in Listing~\ref{fig:dpontd}},
  which executes a table algorithm~$\AlgS$ for every node~$t$ in post-order of the nodes
  of~$T$, and returns $\ATab{\AlgS}$ mapping every node~$t$ to its table. $\AlgS$ takes as
  input\footnote{Actually, \AlgS takes in addition as input \Prev,
    which contains a mapping of nodes of the tree decomposition to
    tables,~i.e., tables of the previous pass. Later, we use this for
    a second traversal to pass results ($\ATab{\AlgS}$) from the first
    traversal to the table algorithm~$\PROJ$ for projected model
    counting in the second traversal.
  } %
  bag~$\chi(t)$, sub-formula~$F_t$, and tables \Tab{} previously
  computed at children of~$t$ and outputs a
  table~$\tab{t}$. 
\item Print a positive result whenever the table for node~$\rootOf(T)$ is not empty.
\end{enumerate}

\noindent\FIX{The basic steps of the approach are briefly summarized by Listing~\ref{fig:dpfordummies}.}

\begin{algorithm}[t]%
  \KwData{%
    A Boolean formula~$F$ in CNF.\hspace{-5em}
  }%
  \KwResult{%
    Satisfiability of~$F$.\hspace{-5em}
  } %
  $\mathcal{T}=(T,\chi) \leftarrow  \text{Decompose\_via\_Heuristics}(\primal{F})\qquad\tcc*[h]{Decompose}\hspace{-1em}$
  
  $\ATab{\AlgS} \leftarrow {\dpa}_{\AlgS}((F,P), \TTT, \emptyset)\qquad \tcc*[h]{DP via table algorithm $\AlgS$}$
 
  %
 %
  \Return{$\ATab{\PROJ}[\rootOf(T)] \neq \emptyset$}\quad\tcc*[h]{true iff root table is not empty\hspace{-1em}}
 \vspace{-0.2em}%
 \caption{Algorithm for solving $\SAT$ via dynamic programming.} 
\label{fig:dpfordummies}
\end{algorithm}%

\bigskip
\noindent%
Listing~\ref{fig:prim} presents table algorithm~\PRIM that uses
the primal graph representation. We provide only brief intuition, for
details we refer to the original source~\cite{SamerSzeider10b}.
The main idea is to store in table~$\tab{t}$ only interpretations restricted to
bag~$\chi(t)$ that can be extended to a model of sub-formula~$F_{\leq t}$.
Table algorithm~\PRIM transforms at node~$t$ certain row
combinations of the tables ($\Tab{}$) of child nodes of~$t$ into rows
of table~$\tab{t}$. The transformation depends on a case where
variable~$a$ is added or not added to an interpretation ($\intr$),
removed from an interpretation ($\rem$), or where coinciding
interpretations are required ($\join$).  In the end, an
interpretation~$I(\vec u)$ from a row~$\vec u$ of the table~$\tau_n$
at the root~$n$ proves that there is a superset~$J \supseteq I(\vec u)$
that is a model of~$F = \progt{n}$, and hence that the formula is
satisfiable.%
%
%
%
%

\renewcommand{\eqdef}{\leftarrow}
%
%
 \begin{algorithm}[t]
   \KwData{Node~$t$, bag $\chi_t$, clauses~$\prog_t$, and sequence~$\Tab{}=\langle \tau_1,\ldots,\tau_\ell\rangle$ of child \AlgS-tables of~$t$.}
   \KwResult{\AlgS-Table~$\tab{t}.$}
   \lIf(\hspace{-1em})
   {$\type(t) = \leaf$}{%
     $\tab{t} \eqdef \{ \langle
     \tuplecolor{\inputPredColor}{\emptyset}
     \rangle \}$%
     %
   }%
  \uElseIf{$\type(t) = \intr$ and $a\in\chi_t$ is introduced}{
   \vspace{-0.05em}
   \makebox[3.19cm][l]{$\tab{t} \eqdef \{ \langle \tuplecolor{\inputPredColor}{K} \rangle$}
     $|\;\langle \tuplecolor{\inputPredColor}{J} \rangle \in \tau_1,  {{\tuplecolor{black}{K \in \{J, J \cup \{a\}\}}, \tuplecolor{black}{K}}} \models \prog_t
      \} \hspace{-5em}$
     %
   \vspace{-0.05em}
     }\vspace{-0.05em}%
     \uElseIf{$\type(t) = \rem$ and $a \not\in \chi_t$ is removed}{%
       \makebox[3.3cm][l]{$\tab{t} \eqdef \{ \langle \tuplecolor{\inputPredColor}{J \setminus \{a\}}
       \rangle$}$|\;\langle \tuplecolor{\inputPredColor}{J}
       \rangle \in \tau_1 \}\hspace{-5em}$
       \vspace{-0.1em}
     } %
     \uElseIf{$\type(t) = \join$}{%
       \makebox[3.3cm][l]{$\tab{t} \eqdef \{ \langle \tuplecolor{\inputPredColor}{J}
         \rangle$}$|\;\langle \tuplecolor{\inputPredColor}{J} \rangle \in \tau_1\cap \tau_2
       \}\hspace{-5em}$
       \vspace{-0.1em}
     } 
     \Return $\tab{t}$
     \vspace{-0.25em}
     \caption{Table
       algorithm~$\algo{SAT}(t, \chi_t,\prog_t,\cdot, \Tab{},
       \cdot)$~\protect\cite{SamerSzeider10b}.}
 \label{fig:prim}\label{alg:prim}
\end{algorithm}%
\renewcommand{\eqdef}{{\ensuremath{\,\mathrel{\mathop:}=}}}
%
%
%
%
%
Example~\ref{ex:sat} lists selected tables when running
algorithm~$\dpa_{\PRIM}$ on a nice tree decomposition. 
\FIX{Note that illustration along the lines of a nice TD allows us
to visualize the basic cases separately. If one was to implement such an algorithm
on general TDs, one still obtains the same basic cases, but interleaved.
}
\begin{example}\label{ex:sat}
  Consider formula~$\prog$ from Example~\ref{ex:running1}.
  Figure~\ref{fig:running1} illustrates a nice TD~$\TTT'=(\cdot, \chi)$ of the primal graph of~$F$ and
  tables~$\tab{1}$, $\ldots$, $\tab{12}$ that are obtained during the
  execution of~$\dpa_{\PRIM}((F,\cdot),\TTT',\cdot)$.
  We assume that each row in a table $\tab{t}$ is identified by a
  number,~i.e., row $i$ corresponds to
  $\vec{u_{t.i}} = \langle J_{t.i} \rangle$.

  Table~$\tab{1}=\SB \langle\emptyset\rangle \SE$, due to
  $\type(t_1) = \leaf$.
  Since $\type(t_2) = \intr$, we construct table~$\tab{2}$
  from~$\tab{1}$ by taking~$J_{1.i}$ and $J_{1.i}\cup \{a\}$ for
  each~$\langle J_{1.i}\rangle \in \tab{1}$. Then,
  $t_3$ introduces $p_1$ and $t_4$ introduces $b$.
  $\prog_{t_1}=\prog_{t_2}=\prog_{t_3} = \emptyset$, but since
  $\chi(t_4) \subseteq \var(c_1)$ we have
  $\prog_{t_4} = \{c_1,c_2\}$ for $t_4$.
  In consequence, for each~$J_{4.i}$ of table~$\tab{4}$, we have
  $\alpha_{{J_{4.i}}} \models \{c_1,c_2\}$ since \PRIM enforces
  satisfiability of $\prog_t$ in node~$t$.  
  Since $\type(t_5) = \rem$, we remove variable~$p_1$ from all
  elements in $\tab{4}$ to construct $\tab{5}$. Note that we have
  already seen all rules where $p_1$ occurs and hence $p_1$ can no
  longer affect interpretations during the remaining traversal. We
  similarly create $\tab{6}=\{\langle \emptyset \rangle, \langle a \rangle\}$
  and~$\tab{{10}}=\{\langle a \rangle\}$.
  Since $\type(t_{11})=\join$, we build table~$\tab{11}$ by taking
  the intersection of $\tab{6}$ and $\tab{{10}}$. Intuitively, this
  combines interpretations agreeing on~$a$.
  %
  %
  By definition (primal graph and TDs), for every~$c \in \prog$,
  variables~$\var(c)$ occur together in at least one common bag.
  Hence, $\prog=\progt{t_{12}}$ and since
  $\tab{12} = \{\langle \emptyset \rangle \}$, we can reconstruct for example
  model~$\{a,b,p_2\} = J_{11.1} \cup J_{5.4} \cup J_{9.2}$ of~$F$ using highlighted (yellow) rows in Figure~\ref{fig:running1}.
  On the other hand, if~$F$ was unsatisfiable, $\tab{12}$ would be empty ($\emptyset$). 
  %
%
\end{example}%

\begin{figure}[t]
\centering
\begin{tikzpicture}[node distance=0.5mm]
\tikzset{every path/.style=thick}

\node (l1) [stdnode,label={[tdlabel, xshift=0em,yshift=+0em]right:${t_1}$}]{$\emptyset$};
\node (i1) [stdnode, above=of l1, label={[tdlabel, xshift=0em,yshift=+0em]right:${t_2}$}]{$\{a\}$};
\node (i12) [stdnode, above=of i1, label={[tdlabel, xshift=0em,yshift=+0em]right:${t_3}$}]{$\{a,p_1\}$};
\node (i13) [stdnode, above=of i12, label={[tdlabel, xshift=0em,yshift=+0em]right:${t_4}$}]{$\{a,b,p_1\}$};
\node (r1) [stdnode, above=of i13, label={[tdlabel, xshift=0em,yshift=+0em]right:${t_5}$}]{$\{a,b\}$};
\node (r12) [stdnode, above=of r1, label={[tdlabel, xshift=0em,yshift=+0em]right:${t_6}$}]{$\{a\}$};
\node (l2) [stdnode, right=2.5em of i12, label={[tdlabel, xshift=0em,yshift=+0em]left:${t_7}$}]{$\emptyset$};
\node (i2) [stdnode, above=of l2, label={[tdlabel, xshift=0em,yshift=+0em]left:${t_8}$}]{$\{p_2\}$};
\node (i22) [stdnode, above=of i2, label={[tdlabel, xshift=0em,yshift=+0em]left:${t_9}$}]{$\{a,p_2\}$};
\node (r2) [stdnode, above=of i22, label={[tdlabel, xshift=0em,yshift=+0em]left:${t_{10}}$}]{$\{a\}$};
\node (j) [stdnode, above left=of r2, yshift=-0.25em, label={[tdlabel, xshift=0em,yshift=+0.15em]right:${t_{11}}$}]{$\{a\}$};
\node (rt) [stdnode,ultra thick, above=of j, label={[tdlabel, xshift=0em,yshift=+0em]right:${t_{12}}$}]{$\emptyset$};
\node (label) [font=\scriptsize,left=of rt]{${\cal T}'$:};
\node (leaf1) [stdnode, left=1.25em of i1, yshift=-0.5em, label={[tdlabel, xshift=2.75em,yshift=+2em]above left:$\tab{4}$}]{%
	\begin{tabular}{l}%
		\multicolumn{1}{l}{$\langle \tuplecolor{\inputPredColor}{J_{4.i}} \rangle$}\\
		\hline\hline
		$\langle \tuplecolor{\inputPredColor}{\emptyset}\rangle$\\\hline
		$\langle \tuplecolor{\inputPredColor}{\{b\}}\rangle$\\\hline
		\rowcolor{yellow}$\langle\tuplecolor{\inputPredColor}{\{a,b\}}\rangle$\\\hline
		$\langle \tuplecolor{\inputPredColor}{\{p_1\}}\rangle$\\\hline
		$\langle\tuplecolor{\inputPredColor}{\{a,p_1\}}\rangle$\\\hline
		$\langle\tuplecolor{\inputPredColor}{\{a,b,p_1\}}\rangle$\\
	\end{tabular}%
};
\node (leaf1b) [stdnodenum,left=of leaf1,xshift=0.6em,yshift=0pt]{%
	\begin{tabular}{c}%
		\multirow{1}{*}{$i$}\\ 
		\hline\hline
		$1$ \\\hline
		$2$ \\\hline
		$3$ \\\hline
		$4$ \\\hline
		$5$ \\\hline
		$6$
	\end{tabular}%
};
\node (leaf0x) [stdnode, left=0.75em of leaf1b, yshift=1.5em, label={[tdlabel, xshift=2em,yshift=+1.5em]above left:$\tab{5}$}]{%
	\begin{tabular}{l}%
		\multicolumn{1}{l}{$\langle \tuplecolor{\inputPredColor}{J_{5.i}} \rangle$}\\
		\hline\hline
		$\langle \tuplecolor{\inputPredColor}{\emptyset}\rangle$\\\hline
		$\langle\tuplecolor{\inputPredColor}{\{a\}}\rangle$\\\hline
		$\langle\tuplecolor{\inputPredColor}{\{b\}}\rangle$\\\hline
		\rowcolor{yellow}$\langle\tuplecolor{\inputPredColor}{\{a,b\}}\rangle$\\
	\end{tabular}%
};
\node (leaf0b) [stdnodenum,left=of leaf0x,xshift=0.6em,yshift=0pt]{%
	\begin{tabular}{c}%
		\multirow{1}{*}{$i$}\\ 
		\hline\hline
		$1$ \\\hline
		$2$ \\\hline
		$3$ \\\hline
		$4$ 
	\end{tabular}%
};
\node (leaf2b) [stdnodenum,right=2.5em of j,xshift=-0.75em,yshift=+0.25em]  {%
	\begin{tabular}{c}%
		\multirow{1}{*}{$i$}\\ 
		\hline\hline
		$1$\\\hline
		$2$\\
	\end{tabular}%
};
\node (leaf2) [stdnode,right=-0.4em of leaf2b, label={[tdlabel, xshift=0em,yshift=-0.25em]below:$\tab{9}$}]  {%
	\begin{tabular}{l}%
		\multirow{1}{*}{$\langle \tuplecolor{\inputPredColor}{J_{9.i}} \rangle$}\\ 
		\hline\hline
		$\langle \tuplecolor{\inputPredColor}{\{a\}}\rangle$\\\hline
		\rowcolor{yellow}$\langle \tuplecolor{\inputPredColor}{\{a,p_2\}}\rangle$\\
	\end{tabular}%
};
\coordinate (middle) at ($ (leaf1.north east)!.5!(leaf2.north west) $);
\node (join) [stdnode,left=5.5em of r12, yshift=0.0em, label={[tdlabel, xshift=2em,yshift=+0.25em]above left:$\tab{{11}}$}] {%
	\begin{tabular}{l}%
		\multirow{1}{*}{$\langle \tuplecolor{\inputPredColor}{J_{11.i}} \rangle$}\\
		\hline\hline
		\rowcolor{yellow}$\langle \tuplecolor{\inputPredColor}{\{a\}} \rangle$\\
	\end{tabular}
};
\node (joinb) [stdnodenum,left=-0.45em of join] {%
	\begin{tabular}{c}
		\multirow{1}{*}{$i$}\\
		\hline\hline
		$1$\\
	\end{tabular}%
};
\node (rtx) [stdnode,left=0.0em of r12, yshift=2.75em, label={[tdlabel, xshift=0em,yshift=-0.8em]right:$\tab{{12}}$}] {%
	\begin{tabular}{l}%
		\multirow{1}{*}{$\langle \tuplecolor{\inputPredColor}{J_{12.i}} \rangle$}\\
		\hline\hline
		\rowcolor{yellow}$\langle \tuplecolor{\inputPredColor}{\emptyset} \rangle$\\
	\end{tabular}
};
\node (rtb) [stdnodenum,left=-0.45em of rtx] {%
	\begin{tabular}{c}
		\multirow{1}{*}{$i$}\\
		\hline\hline
		$1$\\
	\end{tabular}%
};
\node (leaf0n) [stdnodenum,yshift=0.5em, right=2.5em of l1] {%
	\begin{tabular}{c}%
		\multirow{1}{*}{$i$}\\ 
		\hline\hline
		$1$
	\end{tabular}%
};
\node (leaf0) [stdnode,right=-0.5em of leaf0n, label={[tdlabel, xshift=-1em,yshift=0.15em]above right:$\tab{1}$}] {%
	\begin{tabular}{l}%
		\multicolumn{1}{l}{$\langle \tuplecolor{\inputPredColor}{J_{1.i}} \rangle$}\\
		\hline\hline
		\rowcolor{yellow}$\langle \tuplecolor{\inputPredColor}{\emptyset}\rangle$
	\end{tabular}%
};
\coordinate (top) at ($ (leaf2.north east)+(0.6em,-0.5em) $);
\coordinate (bot) at ($ (top)+(0,-12.9em) $);

\draw [<-] (j) to (rt);
\draw [->] (j) to ($ (r12.north)$);
\draw [->] (j) to ($ (r2.north)$);
\draw [->](r2) to (i22);
\draw [<-](i2) to (i22);
\draw [<-](l2) to (i2);
\draw [<-](l1) to (i1);
\draw [->](i12) to (i1);
\draw [->](i13) to (i12);
\draw [->](r1) to (i13);
\draw [->](r12) to (r1);

\draw [dashed, bend left=0] (j) to (join);
\draw [dashed, bend right=15] (rtx) to (rt);
\draw [dashed, bend right=20] (i22) to (leaf2);
\draw [dashed, bend right=50] (i13) to (leaf1);
\draw [dashed, bend left=25] (leaf0) to (l1);
\draw [dashed, bend left=22] (leaf0x) to (r1);
\end{tikzpicture}
\caption{Selected tables obtained by algorithm~$\dpa_{\algo{SAT}}$ on tree decomposition~${\cal T}'$.}
\label{fig:running1}
\end{figure}

Interestingly, the above table algorithm \PRIM can be easily extended 
to also count models. Such a table algorithm for solving~\cSAT works similarly to \PRIM,
but additionally also maintains a counter~\cite{SamerSzeider10b}.
There, intuitively, rows of tables for leaf nodes set this counter to~$1$
and introduce nodes basically just copy the counter value of child rows.
Then, upon removing a certain variable, one has to add (sum up) counters accordingly, and for join nodes counters need to be multiplied.
\FIX{Finally, the counters of the table for the root node can be summed up to obtain the solution to the~\cSAT problem.}

\subsection{\FIX{(Re-)constructing Interpretations and Models}}\label{lab:computing}
\FIX{Even further, with the help of the obtained tables during dynamic programming,
one can actually construct (projected) models by combining
suitable predecessor rows.
The idea is to combine those obtained rows that contain parts of models that fit together.
To this end, we require the following definition, which we will also use later.} At a node~$t$
and for a row~$\vec\tabval$ of the \FIX{computed} table $\ATab{\AlgS}[t]$, it yields
the \FIX{\emph{originating rows}} in the tables of the children of~$t$ that were involved in
computing row~$\vec\tabval$ by algorithm~\AlgS.

\newcommand{\llangle}{\ensuremath{\langle\hspace{-2pt}\{\hspace{-0.2pt}}}
\newcommand{\rrangle}{\ensuremath{\}\hspace{-2pt}\rangle}}
\newcommand{\STab}{\ensuremath{\ATab{\AlgS}}}%

\begin{definition}[\FIX{Origins}, cf.,~\cite{FichteEtAl17b}]\label{def:origin}
  Let $F$ be a formula, $\TTT=(T, \chi)$ be a tree decomposition of~$F$,
  $t$ be a node of~$T$ with $\children(t)=\langle t_1, \ldots, t_{\ell}\rangle$, and 
  %
  %
  %
  $\tau_1 \in \ATabs{\AlgS}{t_1}, \ldots, \tau_{\ell}\in\ATabs{\AlgS}{t_\ell}$ be the tables computed by
  $\dpa_\AlgS$. 
  %
  %
  %

  For a given $\AlgS$-row~$\vec u$ in~$\ATabs{\AlgS}{t}$, we define its originating
  $\AlgS$-rows by
  %
    \FIX{$\orig(t,\vec \tabval) \eqdef \SB \vec s \SM \vec s \in 
    \tau_1 \times \cdots \times \tau_{\ell}, \tab{} =
    {\AlgS}(t,\chi(t),\prog_t, \cdot,\llangle \vec s\rrangle, 
    \cdot), \vec u \in \tab{} \SE.$\footnote{
  \FIX{Given a sequence~$\vec s=\langle s_1, \ldots, s_{\ell} \rangle$, we
  let
  $\llangle \vec s\rrangle \eqdef \langle \{s_1\}, \ldots,
  \{s_{\ell}\} \rangle$, for technical reasons.}} %
  %
  %
  We naturally extend this to a $\AlgS$-table~$\sigma$ by
  $\origs(t,\sigma) \eqdef$ $\bigcup_{\vec u \in \sigma}\orig(t,\vec
    u).$}
\end{definition}


\FIX{Example~\ref{ex:origins} illustrates Definition~\ref{def:origin} for
our running example, where we briefly show origins for some rows of selected tables.}

\begin{example}\label{ex:origins}
  Consider formula~$F$, tree decomposition~$\TTT'=(T,\chi)$, and
  tables $\tab{1}, \ldots, \tab{12}$ from Example~\ref{ex:sat}.  We
  focus
  on~$\vec{\tabval_{1.1}} = \langle J_{1.1} \rangle
  =\langle\emptyset\rangle$ of table~$\tab{1}$ of the leaf~$t_1$. The
  row~$\vec{\tabval_{1.1}}$ has no preceding row,
  since~$\type(t_1)=\leaf$. Hence, we have
  $\origse{\PRIM}(t_1,\vec{\tabval_{1.1}})=\{\langle \rangle\}$.
  The origins of row~$\vec{\tabval_{5.1}}$ of table~$\tab{5}$ are
  given by $\origse{\PRIM}(t_5,\vec{\tabval_{5.1}})$, which correspond
  to the preceding rows in table~$t_4$ that lead to
  row~$\vec{\tabval_{5.1}}$ of table~$\tab{5}$ when running
  algorithm~$\PRIM$,~i.e.,
  $\origse{\PRIM}(t_5,\vec{\tabval_{5.1}}) = \{\langle
  \vec{\tabval_{4.1}} \rangle, \langle\vec{\tabval_{4.4}}\rangle\}$.
  Observe that $\origse{\PRIM}(t_i,\vec\tabval)=\emptyset$ for any
  row~$\vec\tabval\not\in\tab{i}$.
  For node~$t_{11}$ of type~$\join$ and row~$\vec{\tabval_{11.1}}$, we
  obtain
  $\origse{\PRIM}(t_{11},\vec{\tabval_{11.1}}) =
  \{\langle\vec{\tabval_{6.2}},$ $\vec{\tabval_{10.1}} \rangle\}$
	(see Example~\ref{ex:sat}).
  %
  %
  More general, when using algorithm~\PRIM, at a node~$t$ of
  type~$\join$ with table~$\tau$ we have
  $\origse{\PRIM}(t, \vec u)=\{\langle \vec\tabval,
  \vec\tabval\rangle\}$ for 
	row~$\vec u \in \tau$. 
%
\end{example}

\FIX{Definition~\ref{def:origin} refers to the predecessors of rows. 
In order to reconstruct models, one needs to recursively combine
these origins from a node~$t$ down to the leafs. 
%
This idea of \emph{combining suitable rows} is formalized in the following definition,
which introduces the concept of \emph{extensions}. Thereby, rows are \emph{extended}
such that one can then reconstruct models from these extensions.
}


\begin{definition}[\FIX{Extensions}]\label{def:extensions}
  Let $F$ be a formula, $\TTT=(T, \chi)$ be a tree decomposition, $t$ be a node of~$T$, 
and $\vec u$ be a row of $\ATab{\AlgS}[t]$.

  An \emph{extension below~$t$} is a set of pairs where a pair consists
  of a node~$t'$ of~$T[t]$ and a row~$\vec v$ of $\ATab{\AlgS}[t']$
  and the cardinality of the set equals the number of nodes in the
  sub-tree~$T[t]$. We define the family of \emph{extensions below~$t$}
  recursively as follows.  If $t$ is of type~\leaf, then
  $\Ext_{\leq t}(\vec u) \eqdef \{\{\langle t,\vec u\rangle\}\}$;
  otherwise
  $\Ext_{\leq t}(\vec u) \eqdef \bigcup_{\vec v \in \origs(t,\vec u)}
  \big\SB\{\langle t,\vec u\rangle\}\cup X_1 \cup \ldots \cup X_\ell
  \SM X_i\in\Ext_{\leq t_i}({\vec v}_{(i)})\big\SE$ 
  for the~$\ell$ children~$t_1, \ldots, t_\ell$ of~$t$.
  %
  We lift this notation for a $\AlgS$-table~$\sigma$ by
  $\Ext_{\leq t}(\sigma)\eqdef \bigcup_{\vec u\in\sigma} \Ext_{\leq
    t}(\vec u)$.  Further, we
  let~$\Exts \eqdef \Ext_{\leq n}(\ATab{\AlgS}[n])$. 

\end{definition}

\FIX{Indeed, if we construct extensions below the root~$n$, it allows us
to also obtain all models of a formula~$F$.}  
\FIX{Finally, we define notation that gives us a way to
\emph{reconstruct interpretations} from such (families of) extensions.}


\begin{definition}[\FIX{Interpretations of Extensions}]\label{def:iextensions}
 Let~$(F, P)$ be an instance of \PMC, $\TTT=(T, \chi)$ be a tree decomposition
  of~$F$, $t$ be a node of~$T$. Further, let $E$ be a family of extensions below~$t$, and $P$ be a set of projection variables. We
  define the \emph{set~$I(E)$ of interpretations} of~$E$ by
  $I(E) \eqdef \big\SB \bigcup_{\langle \cdot, \vec u \rangle \in X} I(\vec u)
  \mid X \in E \big\SE$
  and the set~$I_P(E)$ of \emph{projected interpretations} by
  $I_P(E) \eqdef \big\SB \bigcup_{\langle \cdot, \vec u \rangle \in X} I(\vec
  u) \cap P \mid X \in E \big\SE$.

\end{definition}

\FIX{We briefly illustrate these concepts along the lines of our running example.}

\begin{example}
  Consider again formula~$F$ and tree decomposition~${\cal T}'$ with
  root~$n$ of~$F$ from Example~\ref{ex:sat}.
  Let~$X=\{\langle t_{12}, \langle\emptyset\rangle\rangle, \langle t_{11},
  \langle\{a\}\rangle\rangle,$
  $\langle t_6, \langle\{a\}\rangle \rangle, \langle t_5, \langle\{a,b\}\rangle\rangle,$ $\langle
  t_4,\hspace{-0.1em} \langle\{a,b\}\rangle\rangle,$
  $\langle t_3,\hspace{-0.1em} \langle\{a\}\rangle \rangle, \langle t_2,\hspace{-0.1em} \langle\{a\}\rangle\rangle, \langle t_1,\hspace{-0.1em}
  \langle\emptyset \rangle\rangle, \langle t_{10},\hspace{-0.1em} \langle\{a\}\rangle\rangle,\langle t_9,\hspace{-0.1em}
  $ $\langle\{a,p_2\}\rangle\rangle, \langle t_8,\hspace{-0.1em} \langle\{p_2\}\rangle\rangle,$ $\langle t_7, \langle\emptyset
  \rangle\rangle \}$ be an extension below~$n$.  Observe that~$X\in\Exts$ and
  that Figure~\ref{fig:running1} highlights those rows of 
  tables for nodes~$t_{12},t_{11},t_9,t_5,t_4$ and~$t_1$ that also
  occur in~$X$ (in yellow). Further, $I(\{X\})=\{a,b,p_2\}$ computes
  the corresponding model of~$X$, and $I_P(\{X\}) = \{p_2\}$ derives
  the projected model of~$X$.  $I(\Exts)$ refers to the set of
  models of~$F$, whereas~$I_P(\Exts)$ is the set of projected models of~$F$.
\end{example}

\FIX{In order to only construct extensions that correspond to (parts of) models of the formula,
we simply need to access only those extensions that contain rows
that lead to models of the formula. As already observed in the previous example,
these rows are precisely the ones contained in~$\Exts$. 
The resulting extensions for a node~$t$ are formalized in the following concept
of \emph{satisfiable extensions}, whereby we take only those extensions of~$\Ext_{\leq t}$ that are also contained in~$\Exts$.
}

\begin{definition}[\FIX{Satisfiable Extension}]\label{def:satext}
  Let $F$ be a formula, $\TTT=(T, \chi)$ be a tree decomposition of~$F$, $t$
  be a node of $T$, and $\sigma \subseteq \ATab{\AlgS}[t]$ be a set of rows.
  %
  Then, we define 
  the \emph{satisfiable
    extensions below~$t$} for~$\sigma$ by
  $\PExt_{\leq t}(\sigma)\eqdef \bigcup_{\vec u\in\sigma} \SB X \SM X
    \in \Ext_{\leq t}(\vec u), X \subseteq Y, Y \in \Exts\SE.$
\end{definition}


\section{Counting Projected Models by Dynamic Programming}\label{sec:projmodelcounting}

While the transition from deciding \SAT to solving \cSAT is quite simple
by adding an additional counter, it turns out that the problem \PMC
requires more effort. 
We solve this problem \PMC by providing an algorithm in Section~\ref{sec:algo} that utilizes treewidth and adheres to multiple passes (rounds) of computation that are guided along a tree decomposition.
Then, we give detailed formal arguments on correctness of this algorithm in Section~\ref{sec:correct}.
Later, in Section~\ref{sec:complexityresults}
we discuss complexity results in the form of 
matching upper and lower bounds, where it turns out
that our algorithm cannot be significantly improved.
%
\subsection{Solving \PMC by means of Dynamic Programming}\label{sec:algo}
Next, we introduce the dynamic programming
algorithm~\mdpa{\AlgS} to solve the projected model counting problem
(\PMC) for Boolean formulas.
\FIX{From a high-level perspective, our algorithm builds upon
the table algorithm~\AlgS from the previous section; we assume again a formula~$F$
and a tree decomposition~$\mathcal{T}=(T,\chi)$ of~$F$, and 
additionally a set~$P$ of projection variables. Thereby, 
the table for each tree decomposition node~$t$ consists of 
a set~$\sigma$ of assignments restricted
to bag variables~$\chi(t)$ (as computed by~\AlgS) that agree on
their assignment of variables in~$P\cap\chi(t)$, and a counter~$c$.
Intuitively, this counter~$c$ \emph{counts} those \emph{satisfying 
assignments} of~$F_{\leq t}$ restricted to~$P\cap\chi(t)$ that are   
among 
satisfiable extensions 
and extend any assignment in~$\sigma$.
Then, 
for the (empty) tree decomposition root~$n$, there is only one single counter 
which is the projected
model count of~$F$ with respect to~$P$.
The \emph{challenge} of our algorithm~$\mdpa{\AlgS}$ is to compute these counts~$c$
by only considering local information, i.e., previously computed tables of child nodes of~$t$.
To this end, we utilize mathematical combinatorics,
namely the principle of inclusion-exclusion
principle~\cite{GrahamGrotschelLovasz95a}, which we need to apply in an interleaved fashion.
}

Concretely, our algorithm~\mdpa{\AlgS} traverses the tree decomposition twice following a
multi-pass dynamic programming paradigm~\cite{FichteEtAl17b}.
\longversion{Figure~\ref{fig:multiarch} illustrates the steps of our
algorithm~\mdpa{\AlgS}\FIX{, which are also presented in the form of Listing~\ref{fig:mpontd}.}}
Similar to the previous section\longversion{ (cf., Figure~\ref{fig:framework})}, we
construct a graph representation and heuristically compute a tree
decomposition of this graph. Then, we run $\dpa_\AlgS$ (see
Listing~\ref{fig:dpontd}) in Step~3a as \emph{first pass}. Step~3a can
also be seen as a preprocessing step for projected model counting,
from which we immediately know \FIX{whether the formula has a
model. However, we keep the \AlgS-tables that have been computed
in Step~3a. These tables form the basis for the next step.}

\longversion{
\begin{figure}[t]
\centering
\includegraphics[scale=0.81]{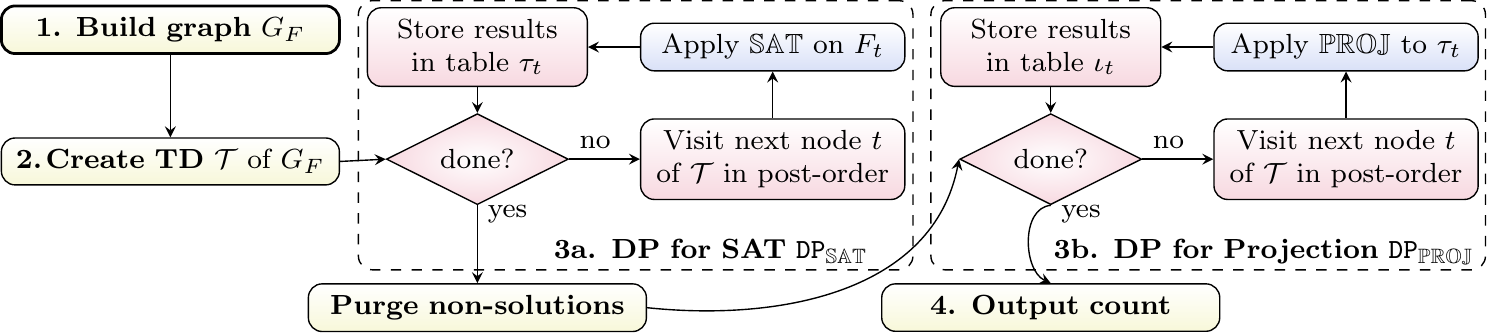}
\vspace{-2em}
\caption{Algorithm~$\mdpa{\AlgS}$ consists of~$\dpa_\AlgS$
  and~$\dpa_\PROJ$. 
}
\label{fig:multiarch}
\end{figure}%
}

\begin{algorithm}[t]%
  \KwData{%
    An instance~$(F,P)$ of \PMC.\hspace{-5em}
  }%
  \KwResult{%
    The projected model count of~$(F,P)$.\hspace{-5em}
  } %
  $\mathcal{T}=(T,\chi) \leftarrow  \text{Decompose\_via\_Heuristics}(\primal{F})\qquad\tcc*[h]{Decompose}\hspace{-1em}$
  
  $\ATab{\AlgS} \leftarrow {\dpa}_{\AlgS}((F,P), \TTT, \emptyset)\qquad \tcc*[h]{DP via table algorithm $\AlgS$}$
 
  \tcc*[h]{Purge non-solutions of $\ATab{\AlgS}$; ensured by using~$\PExt$ below.\hspace{-1em}}

  $\ATab{\PROJ} \leftarrow {\dpa}_{\PROJ}((F,P), \TTT, \ATab{\AlgS})\quad \tcc*[h]{DP via algorithm $\PROJ$}$
  
  %
 %
  \Return{$\sum_{\langle \varphi, c\rangle \in \ATab{\PROJ}[\rootOf(T)]}c$}\qquad\tcc*[h]{Return projected model count\hspace{-1em}}
 \vspace{-0.2em}%
 \caption{Algorithm $\mdpa{\AlgS}(F,P)$ for solving PMC via dynamic programming.} 
\label{fig:mpontd}
\end{algorithm}%

\FIX{There, we remove all rows from the obtained \AlgS-tables} which
cannot be extended to a model of the \SAT problem (\emph{``Purge
  non-solutions''}).
In other words, we keep only rows~$\vec u$ in table~$\ATab{\AlgS}[t]$
at node~$t$ if its interpretation~$I(\vec u)$ can be extended to a
model of~$F$.
%
%
%
%
%
Thereby, we avoid redundancies and can simplify the description and presentation of our
next step, since we then only consider rows that are (parts of) models.
\FIX{%
Intuitively, the rows involving non-models contributes only non-relevant information, as
also observed in related works~\cite{Tamaki19,BannachBerndt22}.
%
Formally, this is achieved by utilizing satisfiable extensions as defined in Definition~\ref{def:satext},
since these extensions precisely consider the rows that contribute to models.}

In Step~3b ($\dpa_\PROJ$), we perform the \emph{second pass}, where we traverse the tree decomposition a second
time to count projections of interpretations of rows in
$\AlgS$-tables.
%
%
\FIX{Observe that the tree traversal in $\dpa_\PROJ$ is the same as before. Therefore, in the following, we describe the ingredients that lead to table algorithm~$\PROJ$.}
For \PROJ, 
a row at a node~$t$ is a pair $\langle\sigma, c \rangle$ where $\sigma$ is a
$\AlgS$-table, in particular, a subset of $\ATab{\AlgS}[t]$
computed by $\dpa_\AlgS$, and $c$ is a non-negative integer.
\FIX{Below, we characterize~$\sigma$, which is based on 
grouping rows in equivalence classes.}

\bigskip
\noindent\textbf{\FIX{Equivalence Classes for $\AlgS$-Tables.}}
The following definitions provide central notions for grouping rows of
tables according to the given projection of variables\FIX{, which yields an equivalence relation.}

\newcommand{\RRR}{\ensuremath{\mathcal{R}}}


\begin{definition}
  Let $(F,P)$ be an instance of \PMC and $\sigma$ be a $\AlgS$-table.
  We define the relation~$\bucket \subseteq \sigma \times \sigma$ to
  consider equivalent rows with respect to the projection of its
  interpretations by 
  $\bucket \eqdef \SB (\vec u,\vec v) \SM \vec u, \vec v \in \sigma,
  \restrict{I(\vec u)}{P} = \restrict{I(\vec v)}{P}\SE.$
\end{definition}

\begin{observation}\label{obs:relation}
  The relation~$\bucket$ is an equivalence relation.
\end{observation}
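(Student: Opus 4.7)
The plan is to verify directly from the definition that $\bucket$ satisfies the three axioms of an equivalence relation, inheriting each property from the corresponding property of set equality on $P$-restricted interpretations. Since $\bucket$ is defined via the condition $I(\vec u)\cap P = I(\vec v)\cap P$, the argument essentially reduces to noting that $=$ on subsets of $P$ is itself an equivalence relation.

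Concretely, first I would establish \emph{reflexivity}: for every $\vec u \in \sigma$, the equality $I(\vec u)\cap P = I(\vec u)\cap P$ is trivial, so $(\vec u, \vec u)\in {\bucket}$. Next, for \emph{symmetry}, I would take $(\vec u, \vec v)\in {\bucket}$, unfold the definition to get $I(\vec u)\cap P = I(\vec v)\cap P$, and swap sides to obtain $I(\vec v)\cap P = I(\vec u)\cap P$, yielding $(\vec v, \vec u)\in {\bucket}$. Finally, for \emph{transitivity}, given $(\vec u, \vec v)\in {\bucket}$ and $(\vec v, \vec w)\in {\bucket}$, I would chain the two equalities $I(\vec u)\cap P = I(\vec v)\cap P = I(\vec w)\cap P$ to conclude $(\vec u, \vec w)\in {\bucket}$.

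Since all three axioms follow immediately from the corresponding properties of set equality, there is no real obstacle here; the observation merely records a structural fact about $\bucket$ that will be used to form the equivalence classes (buckets) central to the table algorithm $\PROJ$ in the next subsection. I would keep the proof to a few lines, essentially just these three bullet-style checks written as prose.
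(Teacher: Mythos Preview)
Your proposal is correct and matches the paper's own proof essentially verbatim: the paper also verifies reflexivity, symmetry, and transitivity directly from the condition $A\cap P = B\cap P$, inheriting each axiom from ordinary set equality. There is nothing to add.
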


\noindent \FIX{Based on this equivalence relation, we define corresponding equivalence classes.}

\begin{definition}[\FIX{Equivalence Classes}]
  Let~$\tau$ be a $\AlgS$-table and $\vec u$ be a row of $\tau$.  The
  relation~$\bucket$ induces equivalence classes~$[\vec u]_P$ on the
  $\AlgS$-table~$\tau$ in the usual way,~i.e.,
  $[\vec u]_P = \SB \vec v \SM \vec v \bucket \vec u,\vec v \in
  \tau\}$~\cite{Wilder12a}.
  We denote by~$\buckets_P(\tau)$ the set of equivalence classes
  of~$\tau$,~i.e.,
  $\buckets_P(\tau) \eqdef\, (\tau / \bucket) = \SB [\vec u]_P \SM
  \vec u \in \tau\SE$.
  %

\end{definition}

\noindent\FIX{These classes are briefly demonstrated on our running example.}

\begin{example}\label{ex:equiv}
  Consider again formula~$F$ and set~$P$ of projection variables from
  Example~\ref{ex:running0} and tree
  decomposition~$\mathcal{T}'=(T,\chi)$ and $\AlgS$-table~$\tab{4}$
  from Figure~\ref{fig:running1}.
  %
  %
  We have $\vec{ u_{4.1}} =_P \vec{ u_{4.2}}$ and
  $\vec{ u_{4.4}} =_P \vec{ u_{4.5}}$.  We obtain the
  set~$\tab{4}/\bucket$ of equivalence classes of $\tab{4}$
  by~$\buckets_P(\tab{4})=\{\{\vec{ u_{4.1}}, \vec{ u_{4.2}}, \vec{
    u_{4.3}}\}, \{\vec{ u_{4.4}},$
  $ \vec{ u_{4.5}}, \vec{ u_{4.6}}\}\}$.
\end{example}

\FIX{Indeed, the algorithm \PROJ, 
stores at a node~$t$ pairs $\langle\sigma, c \rangle$, where~$\sigma$
is actually a (non-empty) subset of the equivalence classes in~$\buckets_P(\ATab{\AlgS}[t])$. Next, we discuss how the integer~$c$ aids in projected counting for such a subset~$\sigma$.}

\bigskip
\noindent\textbf{\FIX{Counting for Equivalence Classes.}} \FIX{In fact, we store in integer~$c$ a count that expresses the number of
``intersection'' projected models ($\ipmc$) that indicates for~$\sigma$ the number of projected models up to node~$t$ that the rows in~$\sigma$ haves \emph{in common (intersection of models)}.
In the end, we aim for the
projected model count ($\pmc$), i.e., the \emph{combined} number of projected models (union of models), where~$\sigma$ is involved. However, it turns out that the process of computing these projected model counts will be heavily interleaved with the~$\ipmc$ counts.
In the following, we define both counts for a node~$t$ of a tree decomposition by means of the satisfying extensions below~$t$.}

\FIX{Notably, the effort of directly computing these counts when strictly following the definition below would not result in an algorithm that is fixed-parameter tractable. As a result, our approach is then subsequently developed thereafter, without explicitly involving \emph{every} descendant node below~$t$ in order to fulfill the desired runtime claims. %
%
%
}


%


\begin{definition}\label{def:pmc}
  Let $(F,P)$ be an instance of~$\PMC$, $\TTT=(T, \chi)$ be a tree decomposition of~$F$,
  $t$ be a node of~$T$, and $\sigma \subseteq \ATab{\AlgS}[t]$ be a set
  of~$\AlgS$-rows 
  for node~$t$. 
  Then, the \emph{intersection projected model count}
  $\ipmc_{\leq t}(\sigma)$ of $\sigma$ below~$t$ is the size of the
  intersection over projected interpretations of the satisfiable
  extensions of~$\sigma$ below~$t$,~i.e.,
  $\ipmc_{\leq t}(\sigma) \eqdef \Card{\bigcap_{\vec u\in\sigma}
    I_P(\PExt_{\leq t}(\{\vec u\}))}$.
    
   The \emph{projected model count} $\pmc_{\leq t}(\sigma)$ of
  $\sigma$ below~$t$ is the size of the union over projected
  interpretations of the satisfiable extensions of~$\sigma$ below~$t$,
  formally,
  $\pmc_{\leq t}(\sigma) \eqdef \Card{\bigcup_{\vec u\in\sigma}
    I_P(\PExt_{\leq t}(\{\vec u\}))}$.
\end{definition}

\FIX{Note that this definition relies on satisfiable extensions as given in Definition~\ref{def:satext}. Intuitively, the counts~$\ipmc_{\leq t}$ represent for a set~$\sigma$ of \AlgS-rows, the cardinality of those projected models of~$F_{\leq t}$ that can be extended to models of~$F$, where \emph{every} row in~$\sigma$ is involved.
Consequently, for the root~$n$ of a nice tree decomposition of~$F$ we have that~$\ipmc_{\leq n}(\{\langle \emptyset \rangle\}) = \pmc_{\leq n}(\{\langle \emptyset \rangle\})$ coincides with the \emph{projected model count} of~$F$.
This is the case since~$F_{\leq n}=F$, the bag of~$n$ is empty, and therefore the \AlgS-table for~$n$ contains one row if and only if~$F$ is satisfiable.
}

\FIX{Observe that when computing these counts for a node~$t$, we cannot directly count models since this would not yield a fixed-parameter tractability algorithm. 
Instead, in order to count, we may \emph{only utilize counters} for sets~$\sigma$ of rows in tables of~$t$ and direct child nodes of~$t$,
which is more involved than directly counting models. %
This is established for $\pmc$ next by relying on combinatorial counting principles like inclusion-exclusion~\cite{GrahamGrotschelLovasz95a}. %
}

\bigskip
\noindent\textbf{\FIX{Computing Projected Model Counts ($\pmc$).}}
\FIX{Since $\PROJ$ stores in $\PROJ$-tables an $\AlgS$-table together with a counter,
in the end we need to describe how these counters are maintained.
As the first step, we show how for a node~$t$, these counters ($\ipmc$ values) for child tables of~$t$
can be used to compute $\pmc$ values for~$t$.}
\shortversion{
%
%
%
Later, we use the definition in the context of looking up the already
computed projected counts for tables of \emph{children} of a given
node.
%


\begin{definition}\label{def:childpcnt}
  Given a $\PROJ$-table~$\iota$ and a $\AlgS$-table~$\sigma$ we define
  the \emph{stored $\ipmc$} for all rows of~$\sigma$ in~$\iota$ by
  $\sipmc(\iota, \sigma) \eqdef \sum_{\langle \sigma, c\rangle \in
    \iota} c.$
  %
  Later, we apply this to rows from several origins.
  Therefore, for a
  sequence~$s$ of $\PROJ$-tables of length $\ell$ and a set~$O$ of
  sequences of $\AlgS$-rows where each sequence is of length~$\ell$,
  we let
  $\sipmc(s, O)=\prod_{i \in \{1, \ldots,
    \ell\}, \langle O_{(i)}, c \rangle \in s_{(i)}} c$.
\end{definition}

When computing $\sipmc$ in Definition~\ref{def:childpcnt}, we select
the $i$-th position of the sequence together with sets of the $i$-th
position from the set of sequences. We need this somewhat technical
construction, since later at node~$t$ we apply this definition to
$\PROJ$-tables of children of~$t$ and origins of subsets of
$\AlgS$-tables. There, we may simply have several children if the node
is of type~$\join$ and hence we need to select from the right
children.%
}
%
%
Intuitively, when we are at a node~$t$ in the Algorithm~$\dpa_\PROJ$
we already computed all tables $\ATab{\AlgS}$ by $\dpa_\AlgS$
according to Step~3a, purged non-solutions, and computed
$\ATab{\PROJ}[t']$ for all nodes~$t'$ below~$t$ and in particular the
$\PROJ$-tables~$\Tab{}$ of the children of~$t$.
Then, we compute the projected model count of a subset~$\sigma$ of the
$\AlgS$-rows in~$\ATab{\AlgS}[t]$, which we formalize 
by applying the generalized inclusion-exclusion principle
to the stored intersection projected model counts of origins.
%

\FIX{%
The idea behind the following definition is that for every origin of~$\sigma$,
we lift the $\ipmc$ counts that are stored in the corresponding child tables.
However, if we sum up these counts, those models that two origins have in common are over-counted,
i.e., they need to be subtracted. But then, those models that three origins have in common
are under-counted, i.e., they need to be (re-)added again.
In turn, the inclusion-exclusion principle ensures that we obtain the correct $\pmc$ value for~$\sigma$.}

\begin{definition}\label{def:pcnt}\label{def:childpcnt}
  \FIX{Let $(F,P)$ be an instance of \PMC, $\TTT=(T, \chi)$ be a tree
  decomposition of~$F$, and~$t$ be a node of~$T$ with~$\ell$ children. }%
  \FIX{Further, let 
  $\Tab{} = \langle \ATab{\PROJ}[t_1], \ldots,
  \ATab{\PROJ}[t_{\ell}]\rangle$ be the sequence of $\PROJ$-tables
  computed by $\dpa_\PROJ((F,P),\TTT,\ATab{\AlgS})$, where
  $\children(t)=\langle t_1, \ldots, t_{\ell}\rangle$ and
  $\sigma \subseteq \ATab{\AlgS}[t]$ is a table.}
  We define the \emph{(inductive) projected model count} of $\sigma$: 
  \begin{align*}
    \pcnt(t,\sigma, \Tab{}) \eqdef & %
                                   \sum_{\emptyset \subsetneq O \subseteq {\origs(t,\sigma)}}\hspace{-1.5em} (-1)^{(\Card{O} - 1)} \cdot
                                 \sipmc(\Tab{}, O), \text{where}\\
                                 \sipmc(\Tab{}, O)\eqdef &\prod_{\substack{i \in \{1, \ldots,
    \ell\},\\\langle O_{(i)}, c \rangle \in \ATab{\PROJ}[t_i]}}\hspace{-2.5em}c\quad \text{is the \emph{stored ipmc} from child tables}.%
  \end{align*}

\end{definition}



Vaguely speaking, $\pcnt$ determines the origins of the
set~$\sigma$ of rows, goes over all subsets of these origins and looks up
the stored counts ($\sipmc$) in the $\PROJ$-tables of the children
of~$t$. 
\FIX{There, we may simply have several child nodes, i.e., nodes 
of type~$\join$, and hence in this case 
we need to multiply the corresponding children's (independent) $\ipmc$ values.}

Example~\ref{ex:pcnt} provides an idea on how to compute the
projected model count of tables of our running example using~$\pcnt$.

\begin{example}\label{ex:pcnt}
  The function defined in Definition~\ref{def:pcnt} allows us to
  compute the projected count for a given~\AlgS-table. Therefore,
  consider again formula~$F$ and tree decomposition~$\TTT'$ from
  Example~\ref{ex:running1} and Figure~\ref{fig:running1}. Say we want
  to compute the projected count $\pcnt(t_5,\{\vec{ u_{5.4}}\}, \Tab{})$
  where
  $\Tab{}\eqdef\allowdisplaybreaks[4] \big\SB \langle \{\vec{
    u_{4.3}}\}, 1\rangle,$
  $\langle \{\vec{ u_{4.6}}\},1\rangle\big\SE$ for
  row~$\vec{ u_{5.4}}$ of table $\tab{5}$. Note that~$t_5$ has
  $\ell=1$ child nodes~$\langle t_4 \rangle$ and therefore the product
  of Definition~\ref{def:childpcnt} consists of only one
  factor. Observe that
  $\origse{\PRIM}(t_5, \vec{ u_{5.4}}) = \{\langle\vec{ u_{4.3}}\rangle, \langle\vec{
    u_{4.6}}\rangle\}$. Since the rows~$\vec{ u_{4.3}}$ and~$\vec{ u_{4.6}}$
  do not occur in the same \PRIM-table of~\Tab{}, only the value of
  $\sipmc$ for the two singleton origin sets~$\{\langle\vec{ u_{4.3}}\rangle\}$ and
  $\{\langle\vec{ u_{4.6}}\rangle\}$ is non-zero; for the remaining set of origins we have zero. Hence, we obtain $\pcnt(t_5,\{\vec{ u_{5.4}}\}, \Tab{})=2$.
\end{example}

\bigskip
\noindent\textbf{\FIX{Computing Intersection Projected Model Counts ($\ipmc$).}}
\noindent Before we present algorithm~$\PROJ$
(Listing~\ref{fig:dpontd3}), we give the the definition allowing us at a
certain node~$t$ to obtain the 
$\ipmc$ value for a given~$\AlgS$-table~$\sigma$ by computing the $\pmc$ (using stored $\ipmc$ values from
$\PROJ$-tables for children of~$t$), and subtracting and adding~$\ipmc$ values for subsets~$\emptyset\subsetneq\rho\subsetneq\sigma$ accordingly.

\FIX{The intuition is that in order to obtain the number of those common projected models, where \emph{every single row} in~$\sigma$
participates, we take all involved projected models of~$\sigma$ and subtract every single row's projected model count ($\ipmc$ values).
There, we subtracted those models that two rows have in common more than once.
Again, these models need to be re-added. Then, the models that three rows have in common are subtracted and so forth. 
In turn, we end up with the intersection projected model count, i.e., those projected models,
where every row of~$\sigma$ is involved. %
}

\begin{definition}\label{def:ipmc}
  Let $\TTT=(T,\chi)$ be a tree decomposition, $t$ be a node of~$T$,
  $\sigma$ be a $\AlgS$-table, and $\Tab{}$ be a sequence of tables.
  Then, we define the \emph{(recursive) $\ipmc$} of~$\sigma$ as follows:
  \springerversion{\vspace{-1.5em}}
  \begin{align*}
    \icnt(t,\sigma,\Tab{})\eqdef
    \begin{cases} %
      1, \text{ if } \type(t) = \leaf,\\
      \big|\pcnt(t,\sigma, \Tab{})\;+ \\ \quad\sum_{\emptyset\subsetneq\rho\subsetneq\sigma}(-1)^{\Card{\rho}}
        \cdot \ipmc(t,\rho, \Tab{})\big|, \text{otherwise.}
    \end{cases}
  \end{align*}
\end{definition}

\noindent In other words, if a node is of type~$\leaf$ the $\ipmc$ is one, since
by definition of a tree decomposition the bags of nodes of
type~$\leaf$ contain only one projected interpretation (the empty set).
%
%
%
%
Otherwise, using Definition~\ref{def:pcnt}, we are able to compute the
$\ipmc$ for a given $\AlgS$-table~$\sigma$, which is by construction the same
as $\ipmc_{\leq t}(\sigma)$ (cf., proof of
Theorem~\ref{thm:correctness} later).
In more detail, we want to compute for a $\AlgS$-table~$\sigma$ its
$\ipmc$ that represents ``all-overlapping'' counts of~$\sigma$ with
respect to set~$P$ of projection variables, that is,
$\ipmc_{\leq t}(\sigma)$. Therefore, for $\ipmc$, we rearrange the
inclusion-exclusion principle.
To this end, we take $\pcnt$, which computes the ``non-overlapping''
count of~$\sigma$ with respect to~$P$, by once more exploiting the
inclusion-exclusion principle on origins of~$\sigma$ (as already discussed) such that
we count every projected model only once. Then we have to alternately subtract and add $\ipmc$ values for strict subsets~$\rho$ of~$\sigma$, accordingly.

\FIX{We provide an example on how this definition is carried out below.}

\bigskip
\noindent\textbf{\FIX{The Table Algorithm~$\PROJ$.}} Finally, Listing~\ref{fig:dpontd3} presents table algorithm~\PROJ,
which stores for given node~$t$ a \PROJ-table consisting of every
\FIX{non-empty subset of equivalence classes for the given table~$\ATabs{\AlgS}{t}$} together with its
$\ipmc$ (as presented above).




  
\begin{algorithm}[t]
  \KwData{ %
    Node~$t$, set~$P$ of projection variables, $\Tab{}$, and 
    $\ATab{\AlgS}$. 
    %
  }%
  \KwResult{Table~$\iota_{t}$ consisting of pairs~$\langle \sigma,
    c\rangle$, where $\sigma \subseteq \ATabs{\AlgS}{t}$ and $c \in
    \NAT$.\hspace{-5em}
  } %
  $\makebox[0em]{}\iota_{t} \leftarrow \big\SB\langle \sigma,
  \icnt(t,\sigma,\Tab{}) \rangle \big{|}\, C\in
  \buckets_P(\ATabs{\AlgS}{t}), \emptyset\subsetneq\sigma \subseteq C\big\SE\hspace{-5em}$ \;
  \Return{$\iota_{t}$}
  \vspace{-0.15em}
  \caption{Table algorithm $\PROJ(t, \cdot, \cdot, P, \Tab{},
    \ATab{\AlgS})$.}
  \label{fig:dpontd3}
\end{algorithm}


%
%


\begin{example}
  Recall instance~$(F,P)$ of \PMC, tree decomposition~$\TTT'$, and 
  tables~$\tab{1}$, $\ldots$, $\tab{12}$ from
  Example~\ref{ex:running0}, \ref{fig:running1}, and
  Figure~\ref{fig:running1}. Figure~\ref{fig:running2} depicts
  selected tables of~$\iota_1, \ldots \iota_{12}$ obtained after
  running~$\dpa_\PROJ$ for counting projected
  interpretations. 
  We assume numbered rows,~i.e., 
  row $i$ in table $\iota_t$ corresponds to
  $\vec{v_{t.i}} = \langle \sigma_{t.i}, c_{t.i} \rangle$.
  Note that for some nodes~$t$, there are rows among
  different~$\PRIM$-tables that occur in~$\Ext_{\leq t}$, but not
  in~$\PExt_{\leq t}$. These rows are removed
  during purging. In fact,
  rows~$\vec{\tabval_{4.1}}, \vec{\tabval_{4.2}}$,
  and~$\vec{\tabval_{4.4}}$ do not occur in table~$\iota_4$. Observe
  that purging is a crucial trick here that avoids to correct
  stored counters~$c$ by backtracking whenever a certain row of a
  table has no succeeding row in the parent table.
  
  Next, we discuss selected rows obtained
  by~$\dpa_\PROJ((F,P),\TTT',\ATab{\PRIM})$. Tables $\iota_1$,
  $\ldots$, $\iota_{12}$ that are computed at the respective nodes of
  the tree decomposition are shown in Figure~\ref{fig:running2}.
  Since~$\type(t_1)= \leaf$, we have
  $\iota_1=\langle\{\langle \emptyset \rangle \}, 1\rangle$.
  Intuitively, up to node~$t_1$ the
  \AlgS-row~$\langle\emptyset\rangle$ belongs to~$1$ equivalence class.
  Node~$t_2$ introduces variable~$a$, which results in
  table~$\iota_2\eqdef\big\SB\langle \{\langle \{a\} \rangle \},
  1\rangle\big\SE$. Note that the $\PRIM$-row
  $\langle\emptyset\rangle$ is subject to purging.
  Node~$t_3$ introduces~$p_1$ and node~$t_4$ introduces~$b$.
  Node~$t_5$ removes projection variable~$p_1$.  The
  row~$\vec{v_{5.2}}$ of $\PROJ$-table~$\iota_5$ has already been
  discussed in Example~\ref{ex:pcnt} and row~$\vec{v_{5.1}}$ works
  similar.
  For row~$\vec{v_{5.3}}$ we compute the
  count~$\ipmc(t_5,\{\vec{\tabval_{5.2}},\vec{\tabval_{5.4}}\},
  \langle \iota_4\rangle)$ by means of~$\pcnt$. Therefore, take
  for~$\rho$ the sets~$\{\vec{\tabval_{5.2}}\}$,
  $\{\vec{\tabval_{5.4}}\}$, and
  $\{\vec{\tabval_{5.2}},\vec{\tabval_{5.4}}\}$.
  For the singleton sets, we simply have
  $\pmc(t_5,\{\vec{\tabval_{5.2}}\}, \langle \iota_4\rangle) = \ipmc(t_5,\{\vec{\tabval_{5.2}}\}, \langle \iota_4\rangle) = c_{5.1}
  = 1$ and
  $\pmc(t_5,\{\vec{\tabval_{5.4}}\}, \langle \iota_4\rangle )= \ipmc(t_5,\{\vec{\tabval_{5.4}}\}, \langle \iota_4\rangle ) =
  c_{5.2}=2$.
  To compute
  $\pmc(t_5,\{\vec{\tabval_{5.2}},\vec{\tabval_{5.4}}\}, \langle
  \iota_4\rangle)$ following Definition~\ref{def:pcnt}, take for~$O$
  the sets~$\{\vec{u_{4.5}}\}$, $\{\vec{u_{4.3}}\}$, and
  $\{\vec{u_{4.6}}\}$ into account, since all other non-empty subsets
  of origins of~$\vec{\tabval_{5.2}}$ and~$\vec{\tabval_{5.4}}$
  in~$\iota_4$ do not occur in~$\iota_4$.
  Then, we take the sum over the values
  $\sipmc(\langle \iota_4\rangle, \{\langle\vec{\tabval_{4.5}}\rangle\})=1$,
  $\sipmc(\langle \iota_4\rangle, \{\langle\vec{\tabval_{4.3}}\rangle\})=1$, and
  $\sipmc(\langle \iota_4\rangle, \{\langle\vec{\tabval_{4.6}}\rangle \})$ $=1$; and
  subtract~$\sipmc(\langle \iota_4\rangle, $
  $\{\langle\vec{\tabval_{4.5}}\rangle, \langle\vec{\tabval_{4.6}}\rangle\})=1$. 
  Hence, 
  $\pmc(t_5,\{\vec{\tabval_{5.2}},\vec{\tabval_{5.4}}\},$ $\langle
  \iota_4\rangle)=2$.
  In order to
  compute~$\ipmc(t_5,\{\vec{\tabval_{5.2}},\vec{\tabval_{5.4}}\},
  \langle \iota_4\rangle) = | %
  \pmc(t_5,\{\vec{\tabval_{5.2}},\vec{\tabval_{5.4}}\},$ $\langle
  \iota_4\rangle)
  - \ipmc(t_5,\{\vec{\tabval_{5.2}}\}, \langle \iota_4\rangle) 
  - \ipmc(t_5,\{\vec{\tabval_{5.4}}\}, \langle \iota_4\rangle) 
  | = |2 -1 - 2| =|-1| = 1$.
  Hence, $c_{5.3} = 1$ represents the number of projected models, both
  rows~$\vec{u_{5.2}}$ and~$\vec{u_{5.4}}$ have in common. We then use it for table~$t_6$.

  For node~$t_{11}$ of type~$\join$ one simply in addition multiplies
  stored $\sipmc$ values for \AlgS-rows in the two children
  of~$t_{11}$ accordingly (see Definition~\ref{def:childpcnt}).
  In the end, the projected model count of~$F$ corresponds to~$\sum_{\langle \sigma, c \rangle\in\iota_{12}} c = c_{12.1} = 4$. 
\end{example}

\begin{figure}[t]
\centering
\begin{tikzpicture}[node distance=0.5mm]
\tikzset{every path/.style=thick}

\node (l1) [stdnode,label={[tdlabel, xshift=0em,yshift=+0em]right:${t_1}$}]{$\emptyset$};
\node (i1) [stdnode, above=of l1, label={[tdlabel, xshift=0em,yshift=+0em]right:${t_2}$}]{$\{a\}$};
\node (i12) [stdnode, above=of i1, label={[tdlabel, xshift=0em,yshift=+0em]right:${t_3}$}]{$\{a,p_1\}$};
\node (i13) [stdnode, above=of i12, label={[tdlabel, xshift=0em,yshift=+0em]right:${t_4}$}]{$\{a,b,p_1\}$};
\node (r1) [stdnode, above=of i13, label={[tdlabel, xshift=0em,yshift=+0em]right:${t_5}$}]{$\{a,b\}$};
\node (r12) [stdnode, above=of r1, label={[tdlabel, xshift=0em,yshift=+0em]right:${t_6}$}]{$\{a\}$};
\node (l2) [stdnode, right=2.5em of i12, label={[tdlabel, xshift=0em,yshift=+0em]left:${t_7}$}]{$\emptyset$};
\node (i2) [stdnode, above=of l2, label={[tdlabel, xshift=0em,yshift=+0em]left:${t_8}$}]{$\{p_2\}$};
\node (i22) [stdnode, above=of i2, label={[tdlabel, xshift=0em,yshift=+0em]left:${t_9}$}]{$\{a,p_2\}$};
\node (r2) [stdnode, above=of i22, label={[tdlabel, xshift=0em,yshift=+0em]left:${t_{10}}$}]{$\{a\}$};
\node (j) [stdnode, above left=of r2, yshift=-0.25em, label={[tdlabel, xshift=0em,yshift=+0.15em]right:${t_{11}}$}]{$\{a\}$};
\node (rt) [stdnode,ultra thick, above=of j, label={[tdlabel, xshift=0em,yshift=+0em]right:${t_{12}}$}]{$\emptyset$};
\node (label) [font=\scriptsize,left=of rt]{${\cal T}'$:};
\node (leaf1) [stdnode, left=1.6em of i1, yshift=-5.5em, label={[tdlabel, xshift=3.25em,yshift=1em]below right:$\iota_{4}$}]{%
	\begin{tabular}{l@{\hspace{0.0em}}r}%
		\multicolumn{1}{l}{$\langle \tuplecolor{\outputPredColor}{\sigma_{4.i}}, $}&\multicolumn{1}{r}{$\tuplecolor{\statePredColor}{c_{4.i}} \rangle$}\\
		\hline\hline
		$\langle\tuplecolor{\outputPredColor}{\{\langle}\tuplecolor{\inputPredColor}{\{a,b\}}\tuplecolor{\outputPredColor}{\rangle\}}, $&$\tuplecolor{\statePredColor}{1}\rangle$\\\specialrule{.1em}{.05em}{.05em}

		$\langle\tuplecolor{\outputPredColor}{\{\langle}\tuplecolor{\inputPredColor}{\{a,p_1\}}\tuplecolor{\outputPredColor}{\rangle\}}, $&$\tuplecolor{\statePredColor}{1}\rangle$\\\hline
		$\langle\tuplecolor{\outputPredColor}{\{\langle}\tuplecolor{\inputPredColor}{\{a,b,p_1\}}\tuplecolor{\outputPredColor}{\rangle\}}, $&$\tuplecolor{\statePredColor}{1}\rangle$\\\hline
		$\langle\tuplecolor{\outputPredColor}{\{\langle}\tuplecolor{\inputPredColor}{\{a,p_1\}}\tuplecolor{\outputPredColor}{\rangle, } \tuplecolor{\outputPredColor}{\langle} \tuplecolor{\inputPredColor}{\{a,b,p_1\}}\tuplecolor{\outputPredColor}{\rangle\}}, $&$\tuplecolor{\statePredColor}{1}\rangle$
	\end{tabular}%
};
\node (leaf1b) [stdnodenum,left=of leaf1,xshift=0.6em,yshift=0pt]{%
	\begin{tabular}{c}%
		\multirow{1}{*}{$i$}\\ 
		\hline\hline
		$1$ \\\specialrule{.1em}{.05em}{.05em}
		$2$ \\\hline
		$3$ \\\hline
		$4$
	\end{tabular}%
};
\node (leaf0x) [stdnode, left=-11.25em of leaf1b, yshift=7.5em, label={[tdlabel, xshift=0em,yshift=-1em]right:$\iota_{5}$}]{%
	\begin{tabular}{l@{\hspace{0.0em}}r}%
		\multicolumn{1}{l}{$\langle \tuplecolor{\outputPredColor}{\sigma_{5.i}}, $}&\multicolumn{1}{r}{$\tuplecolor{\statePredColor}{c_{5.i}} \rangle$}\\
		\hline\hline
		$\langle\tuplecolor{\outputPredColor}{\{\langle}\tuplecolor{\inputPredColor}{\{a\}}\tuplecolor{\outputPredColor}{\rangle\}}, $&$\tuplecolor{\statePredColor}{1}\rangle$\\\hline
		$\langle\tuplecolor{\outputPredColor}{\{\langle}\tuplecolor{\inputPredColor}{\{a,b\}}\tuplecolor{\outputPredColor}{\rangle\}}, $&$\tuplecolor{\statePredColor}{2}\rangle$\\\hline	
		$\langle\tuplecolor{\outputPredColor}{\{\langle}\tuplecolor{\inputPredColor}{\{a\}}\tuplecolor{\outputPredColor}{\rangle, } \tuplecolor{\outputPredColor}{\langle} \tuplecolor{\inputPredColor}{\{a,b\}}\tuplecolor{\outputPredColor}{\rangle\}}, $&$\tuplecolor{\statePredColor}{1}\rangle$\\
	\end{tabular}%
};
\node (leaf0b) [stdnodenum,left=of leaf0x,xshift=0.6em,yshift=0pt]{%
	\begin{tabular}{c}%
		\multirow{1}{*}{$i$}\\ 
		\hline\hline
		$1$ \\\hline
		$2$ \\\hline
		$3$ \\
	\end{tabular}%
};
\node (leaf2b) [stdnodenum,right=3em of j,xshift=-0.25em,yshift=-6.25em]  {%
	\begin{tabular}{c}%
		\multirow{1}{*}{$i$}\\ 
		\hline\hline
		$1$\\\specialrule{.1em}{.05em}{.05em}
		$2$\\
	\end{tabular}%
};
\node (leaf2) [stdnode,right=-0.4em of leaf2b, label={[tdlabel, xshift=0em,yshift=-0.25em]below:$\iota_{9}$}]  {%
	\begin{tabular}{l@{\hspace{0.0em}}r}%
		\multirow{1}{*}{$\langle \tuplecolor{\outputPredColor}{\sigma_{9.i}},$}& $\tuplecolor{\statePredColor}{c_{9.i}} \rangle$\\ 
		\hline\hline
		$\langle \tuplecolor{\outputPredColor}{\{\langle}\tuplecolor{\inputPredColor}{\{a\}}\tuplecolor{\outputPredColor}{\rangle\}},$ &$\tuplecolor{\statePredColor}{1}\rangle$\\\specialrule{.1em}{.05em}{.05em}
		$\langle \tuplecolor{\outputPredColor}{\{\langle}\tuplecolor{\inputPredColor}{\{a,p_2\}}\tuplecolor{\outputPredColor}{\rangle\}}, $& $\tuplecolor{\statePredColor}{1}\rangle$\\
	\end{tabular}%
};
\coordinate (middle) at ($ (leaf1.north east)!.5!(leaf2.north west) $);
\node (join) [stdnode,left=0.75em of r12, yshift=4.5em, label={[tdlabel, xshift=0.1em,yshift=+0.5em]right:$\iota_{{11}}$}] {%
	\begin{tabular}{l@{\hspace{0.0em}}r}%
		\multirow{1}{*}{$\langle \tuplecolor{\outputPredColor}{\sigma_{11.i}},$}& $\tuplecolor{\statePredColor}{c_{11.i}} \rangle$\\
		\hline\hline
		$\langle \tuplecolor{\outputPredColor}{\{\langle}\tuplecolor{\inputPredColor}{\{a\}}\tuplecolor{\outputPredColor}{\rangle\}},$ & $\tuplecolor{\statePredColor}{4}\rangle$\\
	\end{tabular}
};
\node (joinb) [stdnodenum,left=-0.45em of join] {%
	\begin{tabular}{c}
		\multirow{1}{*}{$i$}\\
		\hline\hline
		$1$ \\
	\end{tabular}%
};
\node (leaf0n) [stdnodenum,yshift=-1.5em, right=2.5em of l1] {%
	\begin{tabular}{c}%
		\multirow{1}{*}{$i$}\\ 
		\hline\hline
		$1$
	\end{tabular}%
};
\node (leaf0) [stdnode,right=-0.5em of leaf0n, label={[tdlabel, xshift=0.1em,yshift=0.15em]right:$\iota_{1}$}] {%
	\begin{tabular}{l@{\hspace{0.0em}}r}%
		\multicolumn{1}{l}{$\langle \tuplecolor{\outputPredColor}{\sigma_{1.i}}, $}&$\tuplecolor{\statePredColor}{c_{1.i}} \rangle$\\
		\hline\hline
		$\langle\tuplecolor{\outputPredColor}{\{\langle} \tuplecolor{\inputPredColor}{\emptyset}\tuplecolor{\outputPredColor}{\rangle\}}, $&$\tuplecolor{\statePredColor}{1}\rangle$
	\end{tabular}%
};
\node (joinrrt) [stdnode,right=6.5em of r12, yshift=4.5em, label={[tdlabel, xshift=0.1em,yshift=+0.25em]right:$\iota_{{12}}$}] {%
	\begin{tabular}{l@{\hspace{0.0em}}r}%
		\multirow{1}{*}{$\langle \tuplecolor{\outputPredColor}{\sigma_{12.i}},$} & $\tuplecolor{\statePredColor}{c_{12.i}} \rangle$\\
		\hline\hline
		$\langle \tuplecolor{\outputPredColor}{\{\langle}\tuplecolor{\inputPredColor}{\emptyset}\tuplecolor{\outputPredColor}{\rangle\}},$ &$\tuplecolor{\statePredColor}{4}\rangle$\\
	\end{tabular}
};
\node (joinrbrt) [stdnodenum,left=-0.45em of joinrrt] {%
	\begin{tabular}{c}
		\multirow{1}{*}{$i$}\\
		\hline\hline
		$1$ \\
	\end{tabular}%
};
\node (joinr) [stdnode,right=6.5em of r12, yshift=0.5em, label={[tdlabel, xshift=0.1em,yshift=+0.25em]right:$\iota_{{10}}$}] {%
	\begin{tabular}{l@{\hspace{0.0em}}r}%
		\multirow{1}{*}{$\langle \tuplecolor{\outputPredColor}{\sigma_{10.i}},$} & $\tuplecolor{\statePredColor}{c_{10.i}} \rangle$\\
		\hline\hline
		$\langle \tuplecolor{\outputPredColor}{\{\langle}\tuplecolor{\inputPredColor}{\{a\}}\tuplecolor{\outputPredColor}{\rangle\}},$ &$\tuplecolor{\statePredColor}{2}\rangle$\\
	\end{tabular}
};
\node (joinrb) [stdnodenum,left=-0.45em of joinr] {%
	\begin{tabular}{c}
		\multirow{1}{*}{$i$}\\
		\hline\hline
		$1$ \\
	\end{tabular}%
};
\node (joinl) [stdnode,left=1.6em of r12, yshift=0.5em, label={[tdlabel, xshift=5.9em,yshift=-1em]above left:$\iota_{{6}}$}] {%
	\begin{tabular}{l@{\hspace{0.0em}}r}%
		\multirow{1}{*}{$\langle \tuplecolor{\outputPredColor}{\sigma_{6.i}}, $}&\multirow{1}{*}{$\tuplecolor{\statePredColor}{c_{6.i}} \rangle$}\\
		\hline\hline
		$\langle \tuplecolor{\outputPredColor}{\{\langle}\tuplecolor{\inputPredColor}{\{a\}}\tuplecolor{\outputPredColor}{\rangle\}}, $&$\tuplecolor{\statePredColor}{2}\rangle$\\
	\end{tabular}
};
\node (joinlb) [stdnodenum,left=-0.45em of joinl] {%
	\begin{tabular}{c}
		\multirow{1}{*}{$i$}\\
		\hline\hline
		$1$ \\
	\end{tabular}%
};
\coordinate (top) at ($ (leaf2.north east)+(0.6em,-0.5em) $);
\coordinate (bot) at ($ (top)+(0,-12.9em) $);

\draw [<-] (j) to (rt);
\draw [->] (j) to ($ (r12.north)$);
\draw [->] (j) to ($ (r2.north)$);
\draw [->](r2) to (i22);
\draw [<-](i2) to (i22);
\draw [<-](l2) to (i2);
\draw [<-](l1) to (i1);
\draw [->](i12) to (i1);
\draw [->](i13) to (i12);
\draw [->](r1) to (i13);
\draw [->](r12) to (r1);

\draw [dashed, bend right=30] (joinrrt) to (rt);
\draw [dashed] (j) to (join);
\draw [dashed, bend left=20] (i22) to (leaf2);
\draw [dashed, bend right=5] (i13) to (leaf1);
\draw [dashed, bend left=35] (leaf0) to (l1);
\draw [dashed, bend right=14] (leaf0x) to (r1);
\draw [dashed, bend right=30] (joinr) to (r2);
\draw [dashed, bend right=15] (joinl) to (r12);
\end{tikzpicture}
\caption{Selected tables obtained by~$\dpa_{\algo{PROJ}}$ on
  TD~${\cal T}'$ using~$\dpa_{\PRIM}$ (cf.,
  Figure~\ref{fig:running1}).}
\label{fig:running2}
\end{figure}



\subsection{Correctness of the Algorithm}\label{sec:correct}

In the following, we state definitions required for the correctness
proofs of our algorithm \PROJ. In the end, we only store rows that
are restricted to the bag content to maintain runtime bounds. 
\longversion{In
related work~\cite{SamerSzeider10b}, it was shown that this suffices
for table algorithm~$\AlgS$, i.e., \PRIM 
is both sound and
complete.} Similar to related work~\cite{FichteEtAl17a,SamerSzeider10b}, we proceed in two steps. First, we define properties of
so-called \emph{$\PROJ$-solutions up to~$t$}, and then restrict
these to~\emph{$\PROJ$-row solutions} at~$t$.

\paragraph{Assumptions}

For the following statements, we assume that we have given an arbitrary instance~$(F,P)$ of \PMC and
a tree decomposition~$\TTT = (T,\chi)$ of
formula~$F$, where $T=(N, A)$, node $n=\rootOf(T)$ is the root and $\TTT$ is of width~$k$.
Moreover, for every~$t \in N$ of tree decomposition~$\TTT$, we let
$\ATabs{\AlgS}{t}$ be the tables that have been computed by running
algorithm~$\dpa_\AlgS$ for the dedicated input. Analogously, let
$\ATabs{\PROJ}{t}$ be the tables computed by running~$\dpa_\PROJ$.

%
%

\begin{definition}\label{def:globalsol}
  Let~$\emptyset \subsetneq \sigma \subseteq \ATab{\AlgS}[t]$ be a
  table with $\sigma\subseteq C$ for some $C \in \buckets_P(\ATab{\AlgS}[t])$.
  \hspace{-0.2em}We define a \emph{${\PROJ}$-solution up to~$t$} to be the sequence
  $\langle \hat \sigma\rangle \hspace{-0.15em}=\hspace{-0.15em} \langle\PExt_{\leq t}(\sigma)\rangle$.
\end{definition}

%
%

Next, we recall that we can reconstruct all models from the tables.

\begin{proposition}\label{prop:sat}
  $I(\PExt_{\leq n}(\ATab{\AlgS}[n])) \hspace{-0.1em}=\hspace{-0.1em} I(\Exts) \hspace{-0.1em}=\hspace{-0.1em} \{J \in
    \ta{\var(F)} | \alpha_J\models F\}.$
\end{proposition}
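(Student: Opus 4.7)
The plan is to establish the two equalities separately, handling the first essentially by unfolding definitions at the root, and then reducing the second to the known correctness of the underlying \SAT algorithm.

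First I would prove $I(\PExt_{\leq n}(\ATab{\AlgS}[n])) = I(\Exts)$. By Definition~\ref{def:satext} we have
\[
\PExt_{\leq n}(\ATab{\AlgS}[n]) = \bigcup_{\vec u \in \ATab{\AlgS}[n]} \{X \mid X \in \Ext_{\leq n}(\vec u),\ X \subseteq Y,\ Y \in \Exts\}.
\]
The key observation is that since $n$ is the \emph{root}, every extension $X \in \Ext_{\leq n}(\vec u)$ is, by the recursive construction in Definition~\ref{def:extensions}, a set containing exactly one pair $\langle t',\vec v\rangle$ for each node $t'$ of $T=T[n]$. Hence two distinct extensions below $n$ always have the same cardinality $|N|$, so any inclusion $X \subseteq Y$ forces $X = Y$. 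Combined with $\Exts = \Ext_{\leq n}(\ATab{\AlgS}[n])$, this immediately gives $\PExt_{\leq n}(\ATab{\AlgS}[n]) = \Exts$, hence the equality of their interpretation sets.

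Next I would prove $I(\Exts) = \{J \in \ta{\var(F)} \mid \alpha_J \models F\}$. This is the correctness statement of the dynamic programming algorithm $\AlgS$ of Samer and Szeider extended to the ability of reconstructing models via origins. For ($\subseteq$): given $X \in \Exts$, let $J = \bigcup_{\langle\cdot,\vec u\rangle \in X} I(\vec u)$. By induction on the tree (post-order) and inspection of Listing~\ref{alg:prim}, every bag clause $c \in F_t$ encountered at some node $t$ is satisfied by the row $\vec u$ with $\langle t,\vec u\rangle \in X$ (the $\intr$ case explicitly checks $K \models \prog_t$, and the other cases preserve satisfiability of the relevant sub-formula). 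Since the primal graph guarantees $\var(c) \subseteq \chi(t)$ for some $t$, and $\prog = \prog_{\leq n}$, this shows $\alpha_J \models F$. For ($\supseteq$): given a model $J$ of $F$, I construct an extension $X \in \Exts$ top-down from the root. At the root $n$ (with $\chi(n) = \emptyset$), the unique row is $\langle\emptyset\rangle$. At each node $t$ with chosen row $\vec u = \langle J \cap \chi(t)\rangle$, a straightforward case analysis on $\type(t)$ — introduce, remove, join, leaf — using Listing~\ref{alg:prim} shows that $\vec u \in \ATab{\AlgS}[t]$ and that appropriate predecessor rows exist in the children's tables and lie in $\orig(t,\vec u)$. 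Collecting these rows yields the desired $X \in \Ext_{\leq n}(\ATab{\AlgS}[n]) = \Exts$ with $I(\{X\}) = J$.

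The only mild obstacle is the ($\supseteq$) direction of the second equality, which requires checking that the locally defined origins in Definition~\ref{def:origin} are rich enough to actually witness the existence of the needed child rows at every node type; this is routine from Listing~\ref{alg:prim} but is the step where the match between $\AlgS$ and the $\orig$-based reconstruction has to be verified carefully. Everything else is a direct unfolding of the definitions.
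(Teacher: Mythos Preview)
Your proposal is correct and follows essentially the same approach as the paper, which only gives a brief sketch deferring to the Samer--Szeider construction and the enumeration framework of Pichler, R\"ummele, and Woltran for reconstructing models from the collected rows. As a minor simplification, note that the first equality is even more immediate than your cardinality argument: since $\Exts = \Ext_{\leq n}(\ATab{\AlgS}[n])$ by definition, any $X \in \Ext_{\leq n}(\vec u)$ with $\vec u \in \ATab{\AlgS}[n]$ is already in $\Exts$, so the condition $X \subseteq Y \in \Exts$ in Definition~\ref{def:satext} is satisfied with $Y=X$.
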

\begin{proof}[Proof (Sketch)]
\shortversion{We use a construction similar to Samer and Szeider~\cite{SamerSzeider10b} and Pichler, R\"ummele, and
  Woltran~\cite[Fig.~1]{PichlerRuemmeleWoltran10}, where we simply collect preceding rows.}
  \longversion{In fact, we can use the construction by Samer and
  Szeider~\cite{SamerSzeider10b} of the tables. Then, the extensions simply collect the corresponding,
  preceding rows. By taking the interpretation parts $I(\cdots)$ of
  these collected rows we obtain the set of all models of the formula.
  A similar construction is used by Pichler, R\"ummele, and
  Woltran~\cite[Fig.~1]{PichlerRuemmeleWoltran10}, which they use in
  a general algorithm to enumerate solutions by means of tables obtained during dynamic programming.}
\end{proof}

Before we present equivalence results between~$\ipmc_{\leq t}(\ldots)$
and the recursive version~$\ipmc(t, \ldots)$
(Definition~\ref{def:ipmc}) used during the computation of
$\dpa_\PROJ$, recall that~$\ipmc_{\leq t}$ and~$\pmc_{\leq t}$
(Definition~\ref{def:pmc}) are key to compute the projected model
count. The following corollary states that computing $\ipmc_{\leq n}$
at the root~$n$ actually suffices to compute the
projected model count~$\pmc_{\leq n}$ of the formula.

\begin{corollary}\label{cor:psat}
    $\ipmc_{\leq n}(\ATab{\AlgS}[n]) = \pmc_{\leq n}(\ATab{\AlgS}[n])
    =$\\ $\Card{I_P(\PExt_{\leq n}(\ATab{\AlgS}[n]))}$
    $=\hspace{-0.15em} \Card{I_P(\Exts)} = \Card{\{J \cap P
       \mid J \in \ta{\var(F)}, \alpha_J\models F\}}$
\end{corollary}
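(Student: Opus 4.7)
The plan is to reduce the corollary to Proposition~\ref{prop:sat} via a short case analysis exploiting that the root bag of a nice tree decomposition is empty, and then to push the cardinality under the projection~$\cdot\cap P$.

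First, I would observe that since $\TTT$ is nice, $\chi(n)=\emptyset$, and therefore every $\AlgS$-row in $\ATab{\AlgS}[n]$ has an empty interpretation part. In particular, $\ATab{\AlgS}[n]\subseteq\{\langle\emptyset\rangle\}$, so $\ATab{\AlgS}[n]$ is either the singleton $\{\langle\emptyset\rangle\}$ (when $F$ is satisfiable) or $\emptyset$ (otherwise). In both cases the union and the intersection that appear in Definition~\ref{def:pmc} range over at most one index, hence
\[
\bigcap_{\vec u\in\ATab{\AlgS}[n]} I_P(\PExt_{\leq n}(\{\vec u\}))\; =\; \bigcup_{\vec u\in\ATab{\AlgS}[n]} I_P(\PExt_{\leq n}(\{\vec u\}))\; =\; I_P(\PExt_{\leq n}(\ATab{\AlgS}[n])),
\]
so that $\ipmc_{\leq n}(\ATab{\AlgS}[n])=\pmc_{\leq n}(\ATab{\AlgS}[n])=|I_P(\PExt_{\leq n}(\ATab{\AlgS}[n]))|$. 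This settles the first three equalities of the corollary.

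Next, I would apply Proposition~\ref{prop:sat}, which already gives $I(\PExt_{\leq n}(\ATab{\AlgS}[n])) = I(\Exts) = \{J\in 2^{\var(F)}\mid \alpha_J\models F\}$. Since, by Definition~\ref{def:iextensions}, $I_P$ differs from $I$ only by intersecting each element with $P$, we immediately obtain
\[
I_P(\PExt_{\leq n}(\ATab{\AlgS}[n])) \;=\; I_P(\Exts) \;=\; \{J\cap P\mid J\in 2^{\var(F)},\ \alpha_J\models F\}.
\]
Taking cardinalities yields the remaining two equalities $|I_P(\PExt_{\leq n}(\ATab{\AlgS}[n]))|=|I_P(\Exts)|=|\{J\cap P\mid \alpha_J\models F\}|$.

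There is no real obstacle here: the only thing to be careful about is the degenerate case when $F$ is unsatisfiable, where $\ATab{\AlgS}[n]=\emptyset$ and both $\PExt_{\leq n}(\emptyset)$ and $\{J\cap P\mid \alpha_J\models F\}$ are empty, so all five quantities equal~$0$. Everything else is a direct consequence of the definitions together with Proposition~\ref{prop:sat}.
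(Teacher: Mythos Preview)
Your proposal is correct and follows essentially the same approach as the paper: the paper's proof simply invokes Proposition~\ref{prop:sat} together with the observation that $\Card{\ATab{\AlgS}[n]}\leq 1$ because $\chi(n)=\emptyset$, and you have spelled out exactly how these two facts yield each of the five equalities. Your treatment is just a more detailed version of the same argument.
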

\begin{proof}
  The corollary immediately follows from Proposition~\ref{prop:sat}
  and the observation that $\Card{\ATab{\AlgS}[n]} \leq 1$ by properties of algorithm~$\AlgS$ and
  since $\chi(n) = \emptyset$.
\end{proof}

The following lemma establishes that the \PROJ-solutions up to
root~$n$ of a given tree decomposition solve the \PMC problem.

\begin{lemma}\label{lem:global}
  The
  value~$\sum_{\langle\hat\sigma\rangle\text{ is a \PROJ-solution
      up to } n}\Card{I_P(\hat \sigma)}$ corresponds to the
  projected model count~$c$ of~$F$ with respect to the set~$P$ of
  projection variables.
\end{lemma}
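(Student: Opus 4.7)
The plan is to exploit the fact that the root $n$ of a nice tree decomposition has bag $\chi(n) = \emptyset$, which forces the table $\ATab{\AlgS}[n]$ to contain at most the single row $\langle \emptyset \rangle$. This collapses the sum over \PROJ-solutions to a single summand (or zero), which then matches Corollary~\ref{cor:psat} directly.

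First, I would split on whether $F$ is satisfiable. If $F$ is unsatisfiable, then by Proposition~\ref{prop:sat} we have $I(\PExt_{\leq n}(\ATab{\AlgS}[n])) = \emptyset$; after the purging step prescribed in Listing~\ref{fig:mpontd}, this forces $\ATab{\AlgS}[n] = \emptyset$. Hence no table $\sigma$ satisfies the condition $\emptyset \subsetneq \sigma \subseteq \ATab{\AlgS}[n]$ required by Definition~\ref{def:globalsol}, so the sum is empty and evaluates to $0$. Simultaneously, the projected model count $c$ is $0$ as well, since $F$ has no models at all, so the claimed equality is trivial in this case.

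Second, I would handle the satisfiable case. Here $\ATab{\AlgS}[n] = \{\langle \emptyset \rangle\}$, since every row restricts to the bag $\chi(n) = \emptyset$, and purging removes nothing by Proposition~\ref{prop:sat}. Consequently $\buckets_P(\ATab{\AlgS}[n])$ consists of a single equivalence class $C = \{\langle \emptyset \rangle\}$, and the only admissible choice is $\sigma = C$ itself. Thus there is exactly one \PROJ-solution up to $n$, namely $\hat\sigma = \PExt_{\leq n}(\{\langle \emptyset \rangle\}) = \PExt_{\leq n}(\ATab{\AlgS}[n])$, and the sum reduces to $\Card{I_P(\hat\sigma)}$. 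By Corollary~\ref{cor:psat}, this cardinality equals $\Card{I_P(\Exts)}$, which is precisely the projected model count of $F$ with respect to $P$.

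The main (and rather mild) obstacle is ensuring that the purging step actually leaves $\ATab{\AlgS}[n]$ empty in the unsatisfiable case and leaves the singleton row untouched in the satisfiable case; both facts are immediate from Definition~\ref{def:satext} combined with Proposition~\ref{prop:sat}. Once this is in place, the lemma is essentially a restatement of Corollary~\ref{cor:psat}, as the empty root bag guarantees that there is at most one \PROJ-solution to aggregate.
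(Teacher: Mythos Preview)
Your proposal is correct and follows essentially the same approach as the paper: both arguments exploit that $\chi(n)=\emptyset$ forces $\ATab{\AlgS}[n]$ to have at most one row, then split into the unsatisfiable case (empty table, sum is zero, projected count is zero via Proposition~\ref{prop:sat}) and the satisfiable case (single \PROJ-solution, sum collapses to Corollary~\ref{cor:psat}). Your version is slightly more explicit about the equivalence-class structure and the role of purging, but the substance is the same.
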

\begin{proof}
  (``$\Longrightarrow$''): 
  Assume
  that~$c = \sum_{\langle\hat\sigma\rangle\text{ is a \PROJ-solution
      up to } n}\Card{I_P(\hat \sigma)}$. Observe that there can be at
  most one projected solution up to~$n$, since~$\chi(n)=\emptyset$. %
  If~$c=0$, then $\ATab{\AlgS}[n]$ contains no rows. Hence, $F$ has no
  models,~cf., Proposition~\ref{prop:sat}, and obviously also no
  models projected to~$P$. Consequently, $c$ is the projected model
  count of~$F$.  
  If~$c>0$ we have by Corollary~\ref{cor:psat} that~$c$ is
  equivalent to the projected model count of~$F$ with respect to~$P$.
%
  %

  \FIX{(``$\Longleftarrow$''): 
  We proceed similar in the if direction. 
Assume that~$c$ is the projected model count of~$F$ and~$P$.
If~$c=0$, we have by Proposition~\ref{prop:sat} that $\PExt_{\leq n}(\ATab{\AlgS}[n])=\emptyset$
and therefore $\ATab{\AlgS}[n]=\emptyset$. As a result for~$c=0$, there does not exist any $\PROJ$-solution up to~$n$.
Otherwise, i.e., if~$c>0$, the result follows immediately by Corollary~\ref{cor:psat}.
}
\end{proof}

In the following, we provide for a given node~$t$ and a given \PROJ-solution up to~$t$, the definition of a \PROJ-row solution at~$t$.

\begin{definition}\label{def:loctab}
Let~$t, t'\in N$ be nodes of a given tree decomposition~${\cal T}$, and $\hat\sigma$ be a~\PROJ-solution up to~$t$. Then, we define \emph{the local table for}~$t'$ as
  $\local(t',\hat\sigma)\eqdef \{ \langle \vec{\tabval}\rangle |$ $
  \langle t', \vec{\tabval}\rangle \in \hat\sigma\}$, and
 if~$t=t'$, the \emph{$\PROJ$-row solution at $t$} by
  $\langle \local(t,\hat\sigma), \Card{I_P(\hat\sigma)}\rangle$.
  %
  %

\end{definition}






\begin{observation}\label{obs:unique}
  Let $\langle \hat \sigma\rangle$ be a \PROJ-solution up to a
  node~$t\in N$.  There is exactly one corresponding \PROJ-row
  solution
  $\langle \local(t,\hat\sigma), \Card{I_P(\hat\sigma)}\rangle$ at~$t$.

  Vice versa, let $\langle \sigma, c\rangle$ be a \PROJ-row
  solution at~$t$ for some integer~$c$. Then, there is exactly one
  corresponding \PROJ-solution~$\langle\PExt_{\leq t}(\sigma)\rangle$
  up to~$t$.
\end{observation}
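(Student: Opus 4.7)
The plan is to verify that the two maps stated in the observation are mutually inverse bijections between \PROJ-solutions up to~$t$ and \PROJ-row solutions at~$t$, by unfolding Definitions~\ref{def:globalsol} and~\ref{def:loctab}.

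For the first direction, suppose $\langle \hat\sigma \rangle$ is a \PROJ-solution up to~$t$. By Definition~\ref{def:globalsol}, there is some $\sigma \subseteq C$ with $C \in \buckets_P(\ATab{\AlgS}[t])$ such that $\hat\sigma = \PExt_{\leq t}(\sigma)$. The candidate row solution $\langle \local(t,\hat\sigma), \Card{I_P(\hat\sigma)}\rangle$ is uniquely determined by $\hat\sigma$ because both $\local(t, \hat\sigma)$ and $\Card{I_P(\hat\sigma)}$ are functions of $\hat\sigma$. What remains is to confirm that this pair actually has the form required by Definition~\ref{def:loctab}, i.e., that $\local(t, \hat\sigma) = \sigma$. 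This equality follows by unfolding $\local$ and $\PExt_{\leq t}$: every extension in $\PExt_{\leq t}(\sigma)$ begins with a pair $\langle t, \vec u\rangle$ for some $\vec u \in \sigma$, and, conversely, after the purging step in~$\dpa_\PROJ$ every $\vec u \in \sigma \subseteq \ATab{\AlgS}[t]$ admits at least one satisfiable extension in $\Exts$ (by Proposition~\ref{prop:sat}) and hence contributes an element $\langle t, \vec u\rangle$ to some $X \in \PExt_{\leq t}(\sigma)$.

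For the converse direction, let $\langle \sigma, c\rangle$ be a \PROJ-row solution at~$t$. By Definition~\ref{def:loctab}, there exists some \PROJ-solution $\langle \hat\sigma\rangle$ up to~$t$ with $\local(t, \hat\sigma) = \sigma$ and $c = \Card{I_P(\hat\sigma)}$. By Definition~\ref{def:globalsol}, this $\hat\sigma$ equals $\PExt_{\leq t}(\sigma')$ for some $\sigma'$, and by the argument from the previous paragraph we have $\sigma' = \local(t, \hat\sigma) = \sigma$. Therefore $\hat\sigma = \PExt_{\leq t}(\sigma)$ is uniquely determined by $\sigma$, establishing existence and uniqueness of the corresponding \PROJ-solution.

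The main obstacle is the identity $\local(t, \PExt_{\leq t}(\sigma)) = \sigma$, since this is the only non-syntactic step: it depends on the fact that every row kept in $\ATab{\AlgS}[t]$ after purging is extendable to a model of~$F$, so no row in $\sigma$ is ``lost'' when we form $\PExt_{\leq t}(\sigma)$ and then read off the local part at~$t$. Once this equality is in place, both directions of the bijection are immediate from the definitions.
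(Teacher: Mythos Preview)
Your proof is correct and in fact more explicit than the paper, which states Observation~\ref{obs:unique} without proof and treats it as immediate from Definitions~\ref{def:globalsol} and~\ref{def:loctab}. Your identification of $\local(t,\PExt_{\leq t}(\sigma))=\sigma$ as the only non-syntactic step, together with its justification via the purging step (so that every remaining row in $\ATab{\AlgS}[t]$ has a satisfiable extension), makes precise exactly what the paper leaves implicit.
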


We need to ensure that storing~$\PROJ$-row solutions at a
node suffices to solve the~\PMC problem, which is necessary
to obtain runtime bounds (cf., Corollary~\ref{cor:runtime}).

\begin{lemma}\label{lem:local}
  Let $t\in N$ be a node of the tree decomposition~$\TTT$.  There is a
  \PROJ-row solution at root~$n$ if and only if the projected
  model count of~$F$ with respect to the set~$P$ of projection variables is larger than~$0$.
\end{lemma}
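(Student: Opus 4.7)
My plan is to reduce this statement to the machinery already built up: the bijection in Observation~\ref{obs:unique}, the satisfiability characterization in Proposition~\ref{prop:sat}, and the counting characterization in Corollary~\ref{cor:psat}. Since $\chi(n)=\emptyset$ and $\AlgS$ stores interpretations restricted to the bag, the table $\ATab{\AlgS}[n]$ contains either no row or a single row $\langle \emptyset\rangle$. Hence the entire question collapses to whether $\ATab{\AlgS}[n]$ is empty.

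For the forward direction, I would assume a \PROJ-row solution $\langle \sigma, c\rangle$ at $n$ exists. By Definition~\ref{def:globalsol} and Observation~\ref{obs:unique}, such a row solution corresponds to a \PROJ-solution $\langle \PExt_{\leq n}(\sigma)\rangle$ up to $n$ with $\emptyset \subsetneq \sigma \subseteq \ATab{\AlgS}[n]$. Hence $\ATab{\AlgS}[n]$ is non-empty, so by Proposition~\ref{prop:sat} the formula $F$ has at least one model. Corollary~\ref{cor:psat} then yields that the projected model count equals $\Card{I_P(\PExt_{\leq n}(\ATab{\AlgS}[n]))}\geq 1$.

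For the converse, I would assume the projected model count is positive. By Corollary~\ref{cor:psat}, $\Card{I_P(\PExt_{\leq n}(\ATab{\AlgS}[n]))}>0$, which forces $\ATab{\AlgS}[n]\neq\emptyset$; together with $\chi(n)=\emptyset$ this means $\ATab{\AlgS}[n]=\{\langle\emptyset\rangle\}$. Taking $\sigma\eqdef\ATab{\AlgS}[n]$, which trivially lies in a single equivalence class of $\buckets_P(\ATab{\AlgS}[n])$, gives a \PROJ-solution $\langle \PExt_{\leq n}(\sigma)\rangle$ up to $n$. Observation~\ref{obs:unique} then produces the required \PROJ-row solution at $n$, namely $\langle\sigma, \Card{I_P(\PExt_{\leq n}(\sigma))}\rangle$.

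I do not expect any real obstacle here: the lemma is essentially a packaging result that lifts the global characterization (Lemma~\ref{lem:global}) to the level of single stored rows, exploiting that the root bag is empty and that purging guarantees that every row kept in $\ATab{\AlgS}[n]$ extends to a model. The only subtle point is to be explicit that, because $\chi(n)=\emptyset$, there is at most one equivalence class in $\buckets_P(\ATab{\AlgS}[n])$, so the conditions for being a \PROJ-solution in Definition~\ref{def:globalsol} are automatically satisfied whenever the table is non-empty.
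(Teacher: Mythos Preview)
Your proposal is correct and follows essentially the same approach as the paper: both directions hinge on the fact that $\chi(n)=\emptyset$ forces $\ATab{\AlgS}[n]$ to be either empty or the singleton $\{\langle\emptyset\rangle\}$, and then the claim reduces to the earlier characterizations of satisfiability and the projected count. The only cosmetic difference is that the paper routes the backward direction through Lemma~\ref{lem:global} (existence of a \PROJ-solution up to $n$) and then Definition~\ref{def:loctab}, whereas you invoke Corollary~\ref{cor:psat} directly and construct $\sigma=\ATab{\AlgS}[n]$ by hand; both arrive at the same \PROJ-row solution via Observation~\ref{obs:unique}.
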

\begin{proof}%

  (``$\Longrightarrow$''): Let $\langle \sigma, c\rangle$ be a
  \PROJ-row solution at root~$n$ where $\sigma$ is a $\AlgS$-table and
  $c$ is a positive integer. Then, by Definition~\ref{def:loctab},
  there also exists a
  corresponding~$\PROJ$-solution~$\langle \hat\sigma \rangle$ up
  to~$n$ such that $\sigma = \local(n,\hat\sigma)$ and
  $c=\Card{I_P(\hat\sigma)}$.
  Moreover, since~$\chi(n)=\emptyset$, we
  have~$\Card{\ATab{\AlgS}[n]}=1$.  
  Then, by Definition~\ref{def:globalsol},
  $\hat \sigma = \ATab{\AlgS}[n]$. By Corollary~\ref{cor:psat}, we
  have $c=\Card{I_P(\ATab{\AlgS}[n])}$.
  Finally, the claim follows.
  %
%

  \FIX{(``$\Longleftarrow$''): Assume that the projected model count of~$F$ with respect to~$P$ is larger than zero.
Then, by Lemma~\ref{lem:global}, there is at least one \PROJ-solution~$\hat\sigma$ up to the root~$n$.
As a result, by Definition~\ref{def:loctab}, there is also a \PROJ-row solution at~$t$, which is precisely
$\langle \local(n,\hat\sigma), \Card{I_P(\hat\sigma)}\rangle$.
}
\end{proof}

\begin{observation}\label{obs:main_incl_excl}
  Let $X_1$, $\ldots$, $X_n$ be  finite sets. 
  The number~$\Card{\bigcap_{i \in X} X_i}$ is given by
  $\Card{\bigcap_{i \in X} X_i} = \big|\Card{\bigcup^n_{j = 1} X_j} + \sum_{\emptyset \subsetneq I \subsetneq X} (-1)^{\Card{I}} 
                                              \Card{\bigcap_{i \in I} X_i}\big|.$
  \longversion{\begin{align*}
    \Card{\bigcap_{i \in X} X_i} 
    =& %
       \Bigg|\Card{\bigcup^n_{j = 1} X_j} &&- %
                                       \sum_{\emptyset \subsetneq I \subsetneq X, \Card{I}=1}
                                       \Card{\bigcap_{i \in I} X_i} + %
                                       \sum_{\emptyset \subsetneq I \subsetneq X, \Card{I}=2}
                                       \Card{\bigcap_{i \in I} X_i} - \ldots \\
                                         & &&+ \sum_{\emptyset \subsetneq I \subsetneq X, \Card{I}=n-1} (-1)^{\Card{I}} 
                                              \Card{\bigcap_{i \in I} X_i}\Bigg|.
  \end{align*}}
\end{observation}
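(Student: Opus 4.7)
The plan is to derive the claimed identity by a direct algebraic rearrangement of the classical inclusion--exclusion principle, which was already recalled at the beginning of the preliminaries. The key observation is that the right-hand side is essentially the inclusion--exclusion expansion of $|\bigcup_{j=1}^n X_j|$ with the top-level full-intersection term $|\bigcap_{i=1}^n X_i|$ separated out.

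Concretely, I would first write out the standard inclusion--exclusion identity
\[
\Card{\bigcup^n_{j=1} X_j} \;=\; \sum_{\emptyset\subsetneq I\subseteq \{1,\ldots,n\}} (-1)^{\Card{I}-1}\Card{\bigcap_{i\in I} X_i}
\]
and then split the sum on the right into the contribution of the index set $I=\{1,\ldots,n\}$ (which contributes $(-1)^{n-1}\Card{\bigcap_{i=1}^n X_i}$) and the contribution of all proper non-empty subsets $\emptyset\subsetneq I\subsetneq\{1,\ldots,n\}$. Solving the resulting equation for $\Card{\bigcap_{i=1}^n X_i}$ yields
\[
(-1)^{n-1}\Card{\bigcap_{i=1}^n X_i} \;=\; \Card{\bigcup^n_{j=1} X_j} \;+\; \sum_{\emptyset\subsetneq I\subsetneq \{1,\ldots,n\}} (-1)^{\Card{I}}\Card{\bigcap_{i\in I} X_i},
\]
using that $-(-1)^{\Card{I}-1}=(-1)^{\Card{I}}$ when moving the proper-subset terms to the other side.

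Finally, since the left-hand quantity $\Card{\bigcap_{i=1}^n X_i}$ is a non-negative integer, I would take absolute values on both sides to absorb the factor $(-1)^{n-1}$, yielding exactly the stated formula. The only mild subtlety is the sign bookkeeping (both the $(-1)^{n-1}$ prefactor and the sign flip on the moved terms), but no deeper step is required; the statement is just inclusion--exclusion rewritten to put $\Card{\bigcap_{i\in X} X_i}$ on the left. I expect no real obstacle here -- the statement is essentially a one-line corollary of the inclusion--exclusion formula already invoked in the preliminaries, and it is stated as an observation precisely to be applied later in computing $\ipmc$ from $\pcnt$ (cf.\ Definition~\ref{def:ipmc}).
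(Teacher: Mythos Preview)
Your proposal is correct and follows essentially the same approach as the paper: start from the standard inclusion--exclusion identity, split off the full-index-set term, move the proper-subset contributions to the other side (flipping the sign), and finally take absolute values to eliminate the $(-1)^{n-1}$ prefactor. The paper's proof is exactly this algebraic rearrangement, line by line.
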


\begin{lemma}\label{lem:main_incl_excl}
  Let $t\in N$ be a node of the tree decomposition~$\TTT$
  with~$\children(t) = \langle t_1, \ldots, t_\ell \rangle$ and let
  $\langle\sigma,\cdot\rangle$ be a~\PROJ-row solution at~$t$.
  Then, \vspace{-0.75em}
  \begin{enumerate}
  \item %
    $\ipmc(t,\sigma,\langle\ATab{\PROJ}[t_1],\makebox[1em][c]{.\hfil.\hfil.},
    \ATab{\PROJ}[t_{\ell}]\rangle) \hspace{-0.15em}=\hspace{-0.15em} \ipmc_{\leq t}(\sigma)$
  \item %
    If $\type(t) \neq \leaf$: $\pmc(t,\sigma,\langle\ATab{\PROJ}[t_1],\makebox[1em][c]{.\hfil.\hfil.},
    \ATab{\PROJ}[t_{\ell}]\rangle) \hspace{-0.15em}=\hspace{-0.15em} \pmc_{\leq t}(\sigma)$.
  \end{enumerate}
\end{lemma}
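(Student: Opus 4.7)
The plan is to prove both parts simultaneously by structural (bottom-up) induction on the node~$t$ in the tree decomposition. The base case $\type(t)=\leaf$ handles part (1) directly: by Definition~\ref{def:ipmc} we have $\ipmc(t,\sigma,\cdot)=1$, and on the semantic side, the bag of a leaf in a nice TD is empty, so $\ATab{\AlgS}[t]=\{\langle\emptyset\rangle\}$, its unique satisfiable extension is $\{\langle t,\langle\emptyset\rangle\rangle\}$, and $\ipmc_{\leq t}(\sigma)=|\{\emptyset\}|=1$. Part~(2) is vacuous at leaves.

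For the inductive step with $\type(t)\neq\leaf$, I would first establish part~(2). The key structural observation, read off from Definition~\ref{def:extensions}, is that satisfiable extensions decompose along origins:
\[
\PExt_{\leq t}(\{\vec u\}) = \bigcup_{\vec s\in\orig(t,\vec u)}\SB\{\langle t,\vec u\rangle\}\cup X_1\cup\cdots\cup X_\ell \SM X_i\in\PExt_{\leq t_i}(\{\vec s_{(i)}\})\SE.
\]
Because $\sigma$ lies in a single equivalence class of $\bucket$, all rows of $\sigma$ agree on $\chi(t)\cap P$, so passing to projected interpretations and unioning over $\vec u\in\sigma$ yields a union indexed by $\vec s\in\origs(t,\sigma)$. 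The tree-decomposition separator property (the subtrees rooted at $t_1,\dots,t_\ell$ share variables only through $\chi(t)$) implies that the projected-interpretation contributions from distinct children are independent, so for any non-empty $O\subseteq\origs(t,\sigma)$ the intersection $\bigcap_{\vec s\in O}I_P(\PExt_{\leq t}(\{\vec s\}))$ factors as a product $\prod_{i}\bigcap_{\vec s\in O}I_P(\PExt_{\leq t_i}(\{\vec s_{(i)}\}))$. By the induction hypothesis applied to each child~$t_i$, the cardinality of each factor equals $\ipmc_{\leq t_i}(O_{(i)})=\ipmc(t_i,O_{(i)},\cdot)$, which is exactly the counter stored in $\ATab{\PROJ}[t_i]$ entry $\langle O_{(i)},\cdot\rangle$. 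Applying the generalized inclusion-exclusion principle to the outer union, the alternating sum over $\emptyset\subsetneq O\subseteq\origs(t,\sigma)$ of these products is precisely $\pcnt(t,\sigma,\Tab{})$, establishing~(2).

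With part~(2) in hand, part~(1) follows by applying Observation~\ref{obs:main_incl_excl} to the family $\SB I_P(\PExt_{\leq t}(\{\vec u\}))\SM \vec u\in\sigma\SE$, which expresses
\[
\ipmc_{\leq t}(\sigma) = \Bigl|\pmc_{\leq t}(\sigma) + \sum_{\emptyset\subsetneq\rho\subsetneq\sigma}(-1)^{|\rho|}\ipmc_{\leq t}(\rho)\Bigr|.
\]
Substituting $\pmc_{\leq t}(\sigma)=\pcnt(t,\sigma,\Tab{})$ from (2), and $\ipmc_{\leq t}(\rho)=\ipmc(t,\rho,\Tab{})$ by a nested induction on $|\sigma|$ (each $\rho\subsetneq\sigma$ is itself a subset of an equivalence class and inherits the \PROJ-row solution property), the right-hand side matches the recursive definition of $\ipmc(t,\sigma,\Tab{})$ verbatim.

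The main obstacle will be the factorization at join nodes: one must argue rigorously that for a row $\vec u$ at a join node $t$ with $\orig(t,\vec u)=\{\langle\vec u,\vec u\rangle\}$, the projected-interpretation contributions from the two subtrees genuinely multiply. This relies on combining (i) the separator property, so that models in the two subtrees are conditionally independent given the bag assignment, and (ii) the fact that $\sigma$ projects uniformly on $\chi(t)\cap P$, which ensures that the shared bag variables contribute a fixed "constant" factor that does not spoil the inclusion-exclusion alternation. Once this factorization is pinned down, both the inductive equality for $\pcnt$ and the rearrangement for $\ipmc$ fall out mechanically.
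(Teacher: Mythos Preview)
Your proposal is correct and follows essentially the same approach as the paper: a simultaneous structural induction on~$t$ together with a nested induction on~$|\sigma|$, establishing part~(2) first via inclusion--exclusion over subsets of $\origs(t,\sigma)$ using the stored $\ipmc$ values at the children (with multiplication of independent child counts at join nodes), and then deriving part~(1) by applying the rearranged inclusion--exclusion of Observation~\ref{obs:main_incl_excl} and the inner induction on strict subsets~$\rho\subsetneq\sigma$. The paper's write-up is more formula-driven (expanding Definitions~\ref{def:pcnt} and~\ref{def:ipmc} and matching against Definition~\ref{def:pmc}) whereas you frame things more semantically through the decomposition of $\PExt_{\leq t}$, but the underlying argument is the same.
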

\begin{proof}[Proof]
  We prove the statement by simultaneous induction.

  (``Induction Hypothesis''): Lemma~\ref{lem:main_incl_excl} holds for the nodes in~$\children(t)$ and also for node~$t$, but on strict subsets~$\rho\subsetneq\sigma$.

  (``Base Cases''): \FIX{Towards showing the base case of the first claim, let $\type(t) = \leaf$.}
  By definition,
  $\ipmc(t,\emptyset, \langle \rangle) = \ipmc_{\leq t}(\emptyset) =
  1$.
  \FIX{Next, we establish the base case for the second claim. Since~$\type(t)\neq\leaf$, let~$t$
  be a node that has a node~$t'\in N$ with~$\type(t') = \leaf$ as child node. Observe that by definition of~$\mathcal{T}$, $t$ has 
  exactly one child.}
  Then, we have
  $\pmc(t,\sigma,\langle\ATab{\PROJ}[t']\rangle) = \sum_{\emptyset
    \subsetneq O \subseteq {\origs(t,\sigma)}} (-1)^{(\Card{O} - 1)}
  \cdot \sipmc(\langle \ATab{\AlgS}[t']\rangle, O) =
  \Card{\bigcup_{\vec u\in\sigma} I_P(\PExt_{\leq t}($ $\{\vec u\}))} =
  \pmc_{\leq t}(\sigma) = 1$ for \PROJ-row
  solution~$\langle\sigma,\cdot\rangle$ at~$t$.

  (``Induction Step''): 
  %
  %
%
  %
  %
  %
We distinguish two cases.

\FIX{\underline{Case~(i)}: Assume that $\ell=1$.
  Let~$\langle \sigma, c \rangle$ be a \PROJ-row solution at~$t$ for
  some integer~$c$, and~$t'=t_1$.%
%
%
}

\FIX{%
First, we show the second claim on $\pmc$ values.
  By Definition~\ref{def:pcnt}, we have
  $\pmc(t,\sigma,\langle\ATab{\PROJ}[t']\rangle) = \sum_{\emptyset \subsetneq O \subseteq {\origs(t,\sigma)}} (-1)^{(\Card{O} - 1)} \cdot
                                 \sipmc(\ATab{\PROJ}[t'], O)$, which by definition of~$\sipmc$ results in 
$\sum_{\emptyset \subsetneq O \subseteq {\origs(t,\sigma)}} (-1)^{(\Card{O} - 1)} \cdot
                                 \ipmc(t',O,\ATab{\PROJ}[t'])$.
By the induction hypothesis, this evaluates to $\sum_{\emptyset \subsetneq O \subseteq {\origs(t,\sigma)}} (-1)^{(\Card{O} - 1)} \cdot
                                 \ipmc_{\leq t'}(O)$. Then, by the construction based on the 
inclusion-exclusion principle (cf., Observation~\ref{obs:main_incl_excl}), this expression further simplifies to $\pmc_{\leq t'}(\orig(t,\sigma))$.
  %
%
By Definition~\ref{def:pmc}, $\pmc_{\leq t'}(\orig(t,\sigma))=\Card{\bigcup_{\vec u\in\orig(t,\sigma)}
    I_P(\PExt_{\leq t'}(\{\vec u\}))}$. 
However, since by construction of \PROJ, $\Card{\buckets_P(\sigma)} = 1$, i.e., $\sigma$ is contained in one equivalence class, we have $\Card{\bigcup_{\vec u\in\orig(t,\sigma)}
    I_P(\PExt_{\leq t'}(\{\vec u\}))}$ = $\Card{\bigcup_{\vec u\in\sigma}
    I_P(\PExt_{\leq t}(\{\vec u\}))}$.
This corresponds to $\pmc_{\leq t}(\sigma)$ and, 
consequently, $\pmc_{\leq t'}(\orig(t,\sigma))$=$\pmc_{\leq t}(\sigma)$.
%
This
  concludes the proof for the second claim on $\pmc$ values.} 

\FIX{%
The induction step for $\ipmc$ works
  similar.
By Definition~\ref{def:ipmc}, we have
$\ipmc(t,\sigma,\ATab{\PROJ}[t'])=\Card{\pcnt(t,\sigma, \ATab{\PROJ}[t'])} + \sum_{\emptyset\subsetneq\rho\subsetneq\sigma}(-1)^{\Card{\rho}}
        \cdot \ipmc($ $t,\rho, \ATab{\PROJ}[t'])$. 
By the proof on the second claim above, 
$\Card{\pcnt(t,\sigma, \ATab{\PROJ}[t'])}=\pmc_{\leq t}(\sigma)$.
Then, by the induction hypothesis on~$\rho$, we have 
$\ipmc(t,\sigma,\ATab{\PROJ}[t'])=\pmc_{\leq t}(\sigma) + \sum_{\emptyset\subsetneq\rho\subsetneq\sigma}(-1)^{\Card{\rho}}
        \cdot \ipmc_{\leq t}(\rho)$. 
Further, we follow by Definition~\ref{def:pmc} that $\ipmc(t,\sigma,\ATab{\PROJ}[t'])$ corresponds to the expression
$\Card{\bigcup_{\vec u\in\sigma}
    I_P(\PExt_{\leq t}(\{\vec u\}))} + \sum_{\emptyset\subsetneq\rho\subsetneq\sigma}(-1)^{\Card{\rho}}
        \cdot \Card{\bigcap_{\vec u\in\rho}
    I_P(\PExt_{\leq t}(\{\vec u\}))}$.
Finally, by Observation~\ref{obs:main_incl_excl}, this yields $\Card{\bigcap_{\vec u\in\sigma}
    I_P(\PExt_{\leq t}(\{\vec u\}))}$, which simplifies to~$\ipmc_{\leq t}(\sigma)$.
This concludes the proof for the first claim on $\ipmc$ values.
}

\FIX{\underline{Case~(ii)}: Assume that $\ell=2$.
}

\FIX{%
First, we show the induction step on the second claim over~$\pmc$.
By Definition~\ref{def:pcnt}, we have
  $\pmc(t,\sigma,\langle\ATab{\PROJ}[t_1], \ATab{\PROJ}[t_2]\rangle) = \sum_{\emptyset \subsetneq O \subseteq {\origs(t,\sigma)}}$ $(-1)^{(\Card{O} - 1)} \cdot
                                 \sipmc(\langle\ATab{\PROJ}[t_1], \ATab{\PROJ}[t_2]\rangle, O)$. This 
then results in 
$\sum_{\emptyset \subsetneq O \subseteq {\origs(t,\sigma)}} (-1)^{(\Card{O} - 1)} \cdot
                                 \ipmc(t_1,$ $O_{(1)},\ATab{\PROJ}[t_1]) \cdot \ipmc(t_2,O_{(2)},\ATab{\PROJ}[t_2])$.
By the induction hypothesis, this then evaluates to $\sum_{\emptyset \subsetneq O \subseteq {\origs(t,\sigma)}}$ $(-1)^{(\Card{O} - 1)} \cdot
                                 \ipmc_{\leq t_1}(O_{(1)}) \cdot \ipmc_{\leq t_2}(O_{(2)})$.
By expansion via Definition~\ref{def:pmc} and applying Observation~\ref{obs:main_incl_excl}, i.e., the inclusion-exclusion principle, this corresponds to $\Card{\bigcup_{\langle \vec u_1, \vec u_2 \rangle\in\orig(t,\sigma)}
    I_P(\PExt_{\leq t_1}(\{\vec u_1\})) \cdot I_P(\PExt_{\leq t_2}(\{\vec u_2\}))}$.
Since we have that $\Card{\buckets_P(\sigma)} = 1$, i.e., $\sigma$ is contained in one equivalence class and by Definition~\ref{def:satext} of~$\PExt$, this expression simplifies to $\Card{\bigcup_{\vec u\in\sigma}
    I_P(\PExt_{\leq t}(\{\vec u\}))}$.
This corresponds to $\pmc_{\leq t}(\sigma)$, which concludes the proof for~$\pmc$ of Case~(ii).
}%

\FIX{%
The induction step for~$\ipmc$ also works analogously to the proof for~$\ipmc$ of Case~(i), since it does not need to directly consider origins in multiple child nodes. This concludes the proof.%
}
%
%
\end{proof}


\begin{lemma}[Soundness]\label{lem:correct}
  Let $t\in N$ be a node of the tree decomposition~$\TTT$ with
  $\children(t) = \langle t_1, \ldots, t_\ell \rangle$.
  Then, each row~$\langle \tab{}, c \rangle$ at node~$t$ obtained by~$\PROJ$ 
  is a~\PROJ-row solution for~$t$.
\end{lemma}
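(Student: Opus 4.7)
The plan is to exhibit, for every row produced by Listing~\ref{fig:dpontd3} at an arbitrary node $t$, a matching \PROJ-solution up to $t$ in the sense of Definitions~\ref{def:globalsol} and~\ref{def:loctab}. Reading off Listing~\ref{fig:dpontd3}, every row of $\iota_t$ has the form $\langle \tau, \ipmc(t, \tau, \Tab{}) \rangle$ where $\tau$ is a nonempty subset of some equivalence class $C \in \buckets_P(\ATab{\AlgS}[t])$, and $\Tab{} = \langle \ATab{\PROJ}[t_1], \ldots, \ATab{\PROJ}[t_\ell] \rangle$. The natural candidate witness is $\hat\sigma := \PExt_{\leq t}(\tau)$, which by Definition~\ref{def:globalsol} qualifies as a \PROJ-solution up to $t$ precisely because $\emptyset \subsetneq \tau \subseteq C$. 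It then remains to check that the associated \PROJ-row solution $\langle \local(t, \hat\sigma), \Card{I_P(\hat\sigma)} \rangle$ coincides with $\langle \tau, c \rangle$.

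For the table component, I would establish $\local(t, \hat\sigma) = \tau$ by two inclusions. The inclusion $\local(t, \hat\sigma) \subseteq \tau$ is immediate from the recursive definition of $\Ext_{\leq t}$ in Definition~\ref{def:extensions}: every element of $\Ext_{\leq t}(\vec u)$ contains the pair $\langle t, \vec u \rangle$ and none other at node~$t$, so only rows originally in $\tau$ can appear at the $t$-level of any $X \in \hat\sigma$. The reverse inclusion $\tau \subseteq \local(t, \hat\sigma)$ relies on the purging step preceding $\dpa_\PROJ$, which by Proposition~\ref{prop:sat} guarantees $\PExt_{\leq t}(\{\vec u\}) \neq \emptyset$ for every surviving $\vec u \in \ATab{\AlgS}[t]$; hence each $\vec u \in \tau$ actually contributes at least one extension to $\hat\sigma$ and therefore occurs in $\local(t, \hat\sigma)$.

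For the count component, I would invoke Lemma~\ref{lem:main_incl_excl} to reduce $c = \ipmc(t, \tau, \Tab{})$ to $\ipmc_{\leq t}(\tau)$. Unfolding Definition~\ref{def:pmc} together with Definition~\ref{def:iextensions} then identifies $\ipmc_{\leq t}(\tau)$ with the cardinality of the intersection of projected interpretations of extensions starting from the individual rows of $\tau$, which is exactly what Definition~\ref{def:loctab} requires of the count entry $\Card{I_P(\hat\sigma)}$. Since $\tau$ sits inside a single equivalence class $C$, all involved extensions agree on their $P$-projection at the bag level, so the intersection is well-defined at the top level and purely records the variation in the projected interpretations contributed by the descendants of~$t$.

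I expect the main obstacle to lie in the count step: one must carefully track how the stored $\ipmc$ values in child $\PROJ$-tables are combined by Definition~\ref{def:pcnt} and Definition~\ref{def:ipmc}, and then check that the resulting inclusion–exclusion unfolding produces precisely $\Card{\bigcap_{\vec u \in \tau} I_P(\PExt_{\leq t}(\{\vec u\}))}$. The bookkeeping is most delicate at join nodes, where independent contributions from the two children must be multiplied, and at remove nodes, where several predecessor rows may collapse to the same bag-row once a projection variable is dropped. Fortunately, this combinatorial identity is exactly the content of Lemma~\ref{lem:main_incl_excl}, so once the preceding setup is in place the soundness proof reduces essentially to a clean application of that lemma.
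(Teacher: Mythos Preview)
Your proposal is correct and takes essentially the same approach as the paper: both observe that every row produced by $\PROJ$ at $t$ has the form $\langle\sigma,\ipmc(t,\sigma,\Tab{})\rangle$ for some nonempty $\sigma\subseteq C$ with $C\in\buckets_P(\ATabs{\AlgS}{t})$, and both conclude via Lemma~\ref{lem:main_incl_excl}. You additionally make the witness $\hat\sigma=\PExt_{\leq t}(\tau)$ and the verification of $\local(t,\hat\sigma)=\tau$ explicit, which the paper's two-sentence proof leaves implicit.
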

\begin{proof}[Proof]
  Observe that Listing~\ref{fig:dpontd3} computes a row for each
  subset $\sigma$ with~$\emptyset\subsetneq\sigma\subseteq C$ for some~$C \in \buckets_P(\ATab{\AlgS}[t])$. The
  resulting row~$\langle\sigma, c \rangle$ obtained by~$\ipmc$ is
  indeed a \PROJ-row solution for~$t$ according to
  Lemma~\ref{lem:main_incl_excl}.
\end{proof}


\begin{lemma}[Completeness]\label{lem:complete}
  Let~$t\in N$ be a node of tree decomposition~$\TTT$ where~$\children(t) = \langle t_1, \ldots, t_\ell \rangle$
  and~$\type(t) \neq \leaf$. Given a 
	  \PROJ-row solution~$\langle \sigma, c \rangle$ at~$t$.
  Then, there is $\langle C_1, \ldots, C_\ell\rangle$ where each $C_i$ is a set
  of \PROJ-row solutions at~$t_i$
  with
  $\sigma = \PROJ(t, \cdot, \cdot, P, \langle C_1, \ldots,
  C_\ell\rangle, \ATab{\AlgS})$.
\end{lemma}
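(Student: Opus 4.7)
The plan is to pick $C_i := \ATab{\PROJ}[t_i]$ for each child $t_i$ of $t$, i.e., the table already computed by $\dpa_{\PROJ}$ at that child. Each $C_i$ is then a set of $\PROJ$-row solutions at $t_i$ by Lemma~\ref{lem:correct} (Soundness), which discharges the side condition on the $C_i$'s at once and means that no ad~hoc construction is required.

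Next, I would unpack the hypothesis. By Observation~\ref{obs:unique} together with Definitions~\ref{def:globalsol} and~\ref{def:loctab}, the $\PROJ$-row solution $\langle \sigma, c\rangle$ at $t$ corresponds to a unique $\PROJ$-solution $\hat\sigma = \PExt_{\leq t}(\sigma)$ up to $t$; in particular, $\sigma$ is a non-empty subset of some equivalence class $C \in \buckets_P(\ATab{\AlgS}[t])$. Inspecting Listing~\ref{fig:dpontd3}, the algorithm iterates over \emph{every} such pair $(C,\sigma')$ with $\emptyset \subsetneq \sigma' \subseteq C$ and emits a row $\langle \sigma', \icnt(t,\sigma',\langle C_1,\ldots,C_\ell\rangle)\rangle$, so it in particular emits one indexed by our specific $\sigma$.

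It then remains to match counters. Here I would invoke Lemma~\ref{lem:main_incl_excl} to rewrite $\icnt(t,\sigma,\langle C_1,\ldots,C_\ell\rangle)$ as $\ipmc_{\leq t}(\sigma)$, and exploit (as was already done inside the proof of Lemma~\ref{lem:main_incl_excl}) that all rows of $\sigma$ share the same $P$-projection because $\sigma$ lies in a single equivalence class; this is what lets the intersection-style quantity coincide with the count $c$ stipulated for a $\PROJ$-row solution via $\hat\sigma = \PExt_{\leq t}(\sigma)$. The main obstacle I anticipate is precisely this counter reconciliation: the existence of the $C_i$'s and the occurrence of $\sigma$ in the algorithm's output are essentially bookkeeping, whereas the numeric equality hinges on the inductive lemma just proved, so a clean case distinction on $\type(t) \in \{\intr,\rem,\join\}$ (the hypothesis excludes $\leaf$) mirroring the case analysis in Lemma~\ref{lem:main_incl_excl} is the natural way to close the argument.
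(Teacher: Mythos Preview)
Your approach is correct but takes a genuinely different route from the paper. The paper does \emph{not} pick $C_i=\ATab{\PROJ}[t_i]$; instead it constructs the child tables explicitly from the given $\PROJ$-solution $\hat\sigma$ up to $t$: it sets $\hat{\sigma'}\eqdef\{(t'',\hat\rho)\mid(t'',\hat\rho)\in\hat\sigma,\ t''\neq t\}$ and, for each child $t_i$ and each non-empty subset $\rho\subseteq\local(t_i,\hat{\sigma'})$, includes the pair $\langle\rho,\,|I_P(\PExt_{\leq t_i}(\rho))|\rangle$ in $C_i$, invoking Observation~\ref{obs:unique} to certify that these are $\PROJ$-row solutions at $t_i$. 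This construction is purely semantic---it never references the algorithm's output at the children---and that is precisely what makes it fit the top-down induction in Theorem~\ref{thm:correctness}, which walks from the root down to the leaves producing row solutions at every node without presupposing that $\dpa_\PROJ$ has already computed anything there.

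Your shortcut of taking $C_i=\ATab{\PROJ}[t_i]$ and citing Lemma~\ref{lem:correct} (Soundness) together with Lemma~\ref{lem:main_incl_excl} is not circular (neither of those lemmas depends on completeness), and it buys you a more direct statement: with this choice, $\PROJ(t,\cdot,\cdot,P,\langle C_1,\ldots,C_\ell\rangle,\ATab{\AlgS})$ is literally $\ATab{\PROJ}[t]$, so your argument amounts to showing that every $\PROJ$-row solution at $t$ already sits in $\ATab{\PROJ}[t]$. That would in fact let one bypass the root-to-leaves induction in Theorem~\ref{thm:correctness} altogether. The trade-off is that the paper's explicit construction keeps the completeness lemma self-contained and independent of the algorithm's execution, whereas your version entangles it with the computed tables. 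The counter reconciliation you flag (aligning $\icnt$ with the count $c$ stipulated by Definition~\ref{def:loctab}) is handled in both approaches by deferring to Lemma~\ref{lem:main_incl_excl}; the paper's own proof is no more explicit about this step than yours, so your anticipated case split on $\type(t)$ is unnecessary once that lemma is in hand.
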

\begin{proof}[Proof] 
Since~$\langle\sigma,c \rangle$ is a~\PROJ-row solution for~$t$, there is by Definition~\ref{def:loctab} a corresponding ~\PROJ-solution~$\langle\hat\sigma\rangle$ up to~$t$ such that~$\local(t,\hat\sigma) = \sigma$. 
Then we define~$\hat{\sigma'}\eqdef \{(t'',\hat\rho) \mid (t'', \hat\rho)\in \sigma, t'' \neq t\}$ and proceed again by case distinction. 

\underline{Case (i)}: Assume that~$\ell=1$ and~$t'=t_1$. For each subset~$\emptyset\subsetneq\rho\subseteq\local(t',\hat{\sigma'})$, we define~$\langle \rho, \Card{I_P(\PExt_{\leq t}(\rho))}\rangle$ in accordance with Definition~\ref{def:loctab}. By Observation~\ref{obs:unique}, we have that~$\langle \rho, \Card{I_P(\PExt_{\leq t}(\rho))}\rangle$ is a \AlgS-row solution at~$t'$. 
Since we defined~\PROJ-row solutions for~$t'$ for all respective \PROJ-solutions up to~$t$, we encountered every~\PROJ-row solution for~$t'$ 
required for deriving~$\langle \sigma, c\rangle$ via~\PROJ (cf., Definitions~\ref{def:pcnt} and~\ref{def:ipmc}).
%
%

\underline{Case (ii)}: \FIX{Assume that~$\ell=2$, i.e., $t$ is a join node.
Similarly to above, we define~$\PROJ$-row solutions at~$t_1$ and~$t_2$. 
%
Analogously, we define for each subset~$\emptyset\subsetneq\rho\subseteq\local(t_1,\hat{\sigma'})$, a $\PROJ$-row solution~$\langle \rho, \Card{I_P(\PExt_{\leq t_1}(\rho))}\rangle$ at~$t_1$. Additionally, for each subset~$\emptyset\subsetneq\rho\subseteq\local(t_2,\hat{\sigma'})$, we construct a~$\PROJ$-row solution~$\langle \rho, \Card{I_P(\PExt_{\leq t_2}(\rho))}\rangle$ at~$t_2$ in accordance with Definition~\ref{def:loctab}. By Observation~\ref{obs:unique}, we have that these constructed rows are indeed a \AlgS-row solution at~$t_1$ and a \AlgS-row solution at~$t_2$, respectively. 
Since also for this case we defined~\PROJ-row solutions for~$t_1$ and~$t_2$ for all respective \PROJ-solutions up to~$t$, we encountered every~\PROJ-row solution for~$t_1$ and~$t_2$ 
required for deriving~$\langle \sigma, c\rangle$ via~\PROJ.
This concludes the proof.
}
\end{proof}


\begin{theorem}\label{thm:correctness}
  The algorithm~$\dpa_\PROJ$ is correct. 
  More precisely, 
  %
  $\dpa_\PROJ((F,P),\TTT,$ $\ATab{\AlgS})$ returns
  tables~$\ATab{\PROJ}$ such that $p=\sum_{\langle \sigma, c\rangle \in \ATab{\AlgS}[n]} c$ 
  is the projected model count of~$F$ with respect to the set~$P$ of
  projection variables.
\end{theorem}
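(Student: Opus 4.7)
The plan is to proceed by induction on the tree decomposition~$\TTT$ (traversed in post-order, as~$\dpa_\PROJ$ does), establishing at each node~$t\in N$ that the table~$\ATab{\PROJ}[t]$ computed by the algorithm coincides exactly with the set of all~$\PROJ$-row solutions at~$t$. Once this is shown up to the root~$n$, the claim follows by invoking Lemma~\ref{lem:global} together with Observation~\ref{obs:unique}.

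First, I would set up the induction. For the base case, let~$t$ be a $\leaf$ node. Here~$\chi(t)=\emptyset$ and~$\ATab{\AlgS}[t]=\{\langle\emptyset\rangle\}$ has a single equivalence class under~$\bucket$, so Listing~\ref{fig:dpontd3} produces exactly one row~$\langle\{\langle\emptyset\rangle\}, 1\rangle$, which matches the unique~$\PROJ$-row solution at~$t$ (by Definition~\ref{def:pmc} and Definition~\ref{def:loctab}, since~$\PExt_{\leq t}(\{\langle\emptyset\rangle\})=\{\{\langle t,\langle\emptyset\rangle\rangle\}\}$). For the inductive step, assume the claim holds for all children~$t_1,\ldots,t_\ell$ of~$t$, so that each~$\ATab{\PROJ}[t_i]$ is precisely the set of all~$\PROJ$-row solutions at~$t_i$. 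The soundness direction is immediate from Lemma~\ref{lem:correct}: every row that~$\PROJ(t,\cdot,\cdot,P,\Tab{},\ATab{\AlgS})$ writes is a~$\PROJ$-row solution at~$t$. The completeness direction is then an application of Lemma~\ref{lem:complete}: any~$\PROJ$-row solution~$\langle\sigma,c\rangle$ at~$t$ arises from some tuple~$\langle C_1,\ldots,C_\ell\rangle$ of~$\PROJ$-row solutions at the children, which by the induction hypothesis are contained in~$\langle\ATab{\PROJ}[t_1],\ldots,\ATab{\PROJ}[t_\ell]\rangle$, so the algorithm produces~$\langle\sigma,c\rangle$. This uses crucially that the algorithm iterates over every nonempty subset~$\sigma$ of every equivalence class in~$\buckets_P(\ATab{\AlgS}[t])$ and assigns the count via~$\ipmc(t,\sigma,\Tab{})$, which by Lemma~\ref{lem:main_incl_excl}(1) equals~$\ipmc_{\leq t}(\sigma)=\Card{I_P(\PExt_{\leq t}(\sigma))}$, matching Definition~\ref{def:loctab}.

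Having established that~$\ATab{\PROJ}[n]$ is exactly the collection of~$\PROJ$-row solutions at~$n$, I would finish by using the fact that~$\chi(n)=\emptyset$, so~$\ATab{\AlgS}[n]$ contains at most one row, and hence by Definition~\ref{def:globalsol} and Observation~\ref{obs:unique} there is a bijection between~$\PROJ$-row solutions at~$n$ and~$\PROJ$-solutions up to~$n$ that preserves counts. Then Lemma~\ref{lem:global} identifies~$\sum_{\langle\sigma,c\rangle\in\ATab{\PROJ}[n]}c$ with~$\sum_{\langle\hat\sigma\rangle}\Card{I_P(\hat\sigma)}$, which equals the projected model count of~$F$ with respect to~$P$. (I would note in passing that the theorem statement's~$\ATab{\AlgS}[n]$ should read~$\ATab{\PROJ}[n]$.)

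The main obstacle is the inductive step, specifically checking that the algorithm's enumeration of subsets~$\sigma$ of equivalence classes in~$\buckets_P(\ATab{\AlgS}[t])$ is in one-to-one correspondence with all~$\PROJ$-row solutions at~$t$: one must argue both that every subset the algorithm considers yields a genuine~$\PROJ$-row solution (easy via Lemma~\ref{lem:correct}) and that no~$\PROJ$-row solution escapes this enumeration. The latter requires the observation that if~$\sigma=\local(t,\hat\sigma)$ for some~$\PROJ$-solution~$\hat\sigma$ up to~$t$, then all interpretations~$I(\vec u)$ for~$\vec u\in\sigma$ must agree on~$P\cap\chi(t)$ (so~$\sigma$ lies in a single equivalence class), which is forced by Definition~\ref{def:globalsol}. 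Beyond that, the proof is essentially a bookkeeping exercise once the strong per-node statements of Lemmas~\ref{lem:correct} and~\ref{lem:complete} are in hand.
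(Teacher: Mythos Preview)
Your proposal is correct and follows essentially the same approach as the paper: both rely on Lemma~\ref{lem:correct} for soundness and Lemma~\ref{lem:complete} for completeness, and then read off the projected model count at the root. The only differences are organizational. You run a single bottom-up induction maintaining the invariant that $\ATab{\PROJ}[t]$ equals the set of $\PROJ$-row solutions at~$t$, whereas the paper argues soundness node-by-node via Lemma~\ref{lem:correct} and then runs the completeness induction top-down from the root toward the leaves (still via Lemma~\ref{lem:complete}). At the root you invoke Lemma~\ref{lem:global} together with Observation~\ref{obs:unique}, while the paper uses Lemma~\ref{lem:local} and a direct case distinction on whether $\ATab{\AlgS}[n]$ is empty. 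Your observation that the summation in the theorem statement should range over $\ATab{\PROJ}[n]$ rather than $\ATab{\AlgS}[n]$ is also correct.
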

\begin{proof}
  %
  By Lemma~\ref{lem:correct} we have soundness for every
  node~$t \in N$ and hence only valid rows as output of table
  algorithm~$\PROJ$ when traversing the tree decomposition in
  post-order up to the root~$n$.
  By Lemma~\ref{lem:local} we know that the projected model count~$p$
  of~$F$ is larger than zero if and only if there exists a
  certain~\PROJ-row solution for~$n$.
  This~\PROJ-row solution at node~$n$ is of the
  form~$\langle \{\langle\emptyset, \ldots\rangle\} ,p\rangle$. If
  there is no \PROJ-row solution at node~$n$,
  then~$\ATab{\AlgS}[n]=\emptyset$ since the table algorithm~$\AlgS$
  is correct (cf., Proposition~\ref{prop:sat}). Consequently, we have
  $p=0$. Therefore, $p=\sum_{\langle \sigma, c\rangle \in \ATab{\AlgS}[n]} c$ 
is the
  pmc of~$F$ w.r.t.~$P$ in both cases.
  %
  %
  
  %
  %

  %
  %

  Next, we establish completeness by induction starting from 
  root~$n$. Let therefore, $\langle \hat\sigma \rangle$ be the~\PROJ-solution up to~$n$, where for each row
  in~$\vec u\in \hat\sigma$, $I(\vec u)$ corresponds to a model of~$F$.  By
  Definition~\ref{def:loctab}, we know that for~$n$ we
  can construct a \PROJ-row solution at~$n$ of the
  form~$\langle \{\langle\emptyset, \ldots\rangle\} , p\rangle$
  for~$\hat\sigma$.  We already established the induction step in
  Lemma~\ref{lem:complete}.
  \longversion{Hence, we obtain some row for every
  node~$t$.}
  Finally, we stop at the leaves.
  %
%
%
\end{proof}

\begin{corollary}\label{cor:correctness}
  The algorithm $\mdpa{\AlgS}$ is correct, i.e., $\mdpa{\AlgS}$ solves~\PMC. 
\end{corollary}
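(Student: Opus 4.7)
My plan is to derive Corollary~\ref{cor:correctness} as an immediate packaging result on top of Theorem~\ref{thm:correctness}, since algorithm $\mdpa{\AlgS}$ is literally the composition of the three steps presented in Listing~\ref{fig:mpontd}: (i) heuristically computing a nice tree decomposition $\TTT=(T,\chi)$ of the primal graph $\primal{F}$, (ii) executing the first pass $\dpa_\AlgS((F,P),\TTT,\emptyset)$ to obtain $\ATab{\AlgS}$, and (iii) executing the second pass $\dpa_\PROJ((F,P),\TTT,\ATab{\AlgS})$ and returning $\sum_{\langle \sigma, c\rangle \in \ATab{\PROJ}[\rootOf(T)]} c$.

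First I would observe that step (i) yields a valid tree decomposition of $\primal{F}$ regardless of the heuristic used, so every hypothesis required by the subsequent analysis is met. For step (ii), the correctness of $\dpa_\AlgS$ follows from Proposition~\ref{prop:sat} (essentially the classical result of Samer and Szeider), which ensures that after the purging of non-solutions the table $\ATab{\AlgS}[t]$ at each node~$t$ contains exactly those bag-restricted interpretations that extend to a model of~$F$. This is precisely the input precondition that Theorem~\ref{thm:correctness} imposes on the second pass.

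Then I would invoke Theorem~\ref{thm:correctness} directly: it states that $\dpa_\PROJ((F,P),\TTT,\ATab{\AlgS})$ produces tables $\ATab{\PROJ}$ such that the sum of counters in the single row at the root~$n$ equals the projected model count of $F$ with respect to~$P$. Since this is verbatim the value returned by $\mdpa{\AlgS}$, correctness is established.

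The only subtlety I would explicitly address is the boundary case where $F$ is unsatisfiable, so the projected model count is~$0$. In that case $\ATab{\AlgS}[n]=\emptyset$ after purging (by Proposition~\ref{prop:sat}), hence $\buckets_P(\ATab{\AlgS}[n])=\emptyset$, the table algorithm~$\PROJ$ produces $\ATab{\PROJ}[n]=\emptyset$, and the empty sum returned by $\mdpa{\AlgS}$ equals~$0$, consistent with Corollary~\ref{cor:psat} and Lemma~\ref{lem:global}. I do not anticipate any real obstacle here: the entire work has been done in Theorem~\ref{thm:correctness} and its supporting lemmas, and the corollary is a one-line composition argument.
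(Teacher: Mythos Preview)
Your proposal is correct and takes essentially the same approach as the paper: the corollary is derived as a composition argument, citing the correctness of $\dpa_\AlgS$ (via Samer--Szeider / Proposition~\ref{prop:sat}), noting that purging preserves soundness and completeness, and then invoking Theorem~\ref{thm:correctness} for $\dpa_\PROJ$. Your version is slightly more explicit (you mention the decomposition step and spell out the unsatisfiable boundary case, the latter already being absorbed into Theorem~\ref{thm:correctness}), but the structure is identical.
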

\begin{proof}
  The result follows, since~$\mdpa{\AlgS}$ consists of
  pass~$\dpa_\AlgS$, a purging step and~$\dpa_\PROJ$. For
  correctness of~$\dpa_\AlgS$ we refer to other
  sources~\cite{FichteEtAl17a,SamerSzeider10b}. By Proposition~\ref{prop:sat}, ``purging'' neither destroys soundness nor completeness
  of~$\dpa_\PRIM$.
\end{proof}

\subsection{Runtime Analysis (Upper and Lower Bounds)}\label{sec:complexityresults}
In this section, we first present asymptotic upper bounds on the
runtime of our Algorithm~$\dpa_{\PROJ}$.  For the analysis, we
assume~$\gamma(i)$ to be the costs for multiplying two $i\hy$bit
integers, which can be achieved in time
$i\cdot \log(i) \cdot \log(\log(i))$~\cite{Knuth1998,Harvey2016}. \longversion{Recently, an even faster
  algorithm was published~\cite{Harvey2016}.}

Then, we present a
lower bound that establishes that there cannot be an algorithm that
solves \PMC in time that is only single exponential in the treewidth
and polynomial in the size of the formula unless the exponential time
hypothesis (ETH) fails.
This result establishes that there \emph{cannot} be an algorithm
exploiting treewidth that is \emph{asymptotically better} than our
presented algorithm, although one can likely improve on the analysis
and give a better algorithm.
\longversion{One could for example cache~$\pcnt$ values, which, however, overcomplicates worst-case analysis.}
%



\begin{theorem}
  \label{thm:runtime}
  Given a \PMC instance~$(F,P)$ and a tree
  decomposition~${\cal T} = (T,\chi)$ of~$F$ of width~$k$ with $g$
  nodes. Algorithm~$\dpa_{\PROJ}$ runs in time
  $\mathcal{O}(2^{2^{k+4}}\cdot \gamma(\CCard{F}) \cdot g)$.
\end{theorem}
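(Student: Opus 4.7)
The plan is to analyze the cost at each node $t$ of the tree decomposition separately and then multiply by $g$. Since each $\AlgS$-row at $t$ is an interpretation over $\chi(t)$ with $\Card{\chi(t)} \leq k+1$, the table $\ATab{\AlgS}[t]$ contains at most $2^{k+1}$ rows, and every equivalence class $C \in \buckets_P(\ATab{\AlgS}[t])$ has at most $2^{k+1}$ members. Further, since each class is determined by an assignment of $P \cap \chi(t)$, there are at most $2^{k+1}$ such classes per node. Consequently, the number of non-empty subsets $\sigma \subseteq C$ for which $\PROJ$ stores a row (see Listing~\ref{fig:dpontd3}) is bounded by $2^{k+1} \cdot 2^{2^{k+1}}$ in total.

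Next, I would bound the cost of evaluating $\ipmc(t,\sigma,\Tab{})$ together with the auxiliary $\pcnt(t,\sigma,\Tab{})$ for a single $\sigma$. By inspection of Listing~\ref{alg:prim}, a row of $\ATab{\AlgS}[t]$ has at most two origins in the introduce and remove cases and exactly one origin-pair in the join case, so $\Card{\origs(t,\sigma)} \leq 2 \cdot \Card{\sigma} \leq 2^{k+2}$. Therefore $\pcnt$ iterates over at most $2^{2^{k+2}}$ non-empty subsets of origins, each contributing a product of at most two previously-stored $\ipmc$ values, computable in $\gamma(\CCard{F})$ time. The outer sum in Definition~\ref{def:ipmc} ranges over strict subsets $\rho \subsetneq \sigma$, i.e., at most $2^{\Card{\sigma}} \leq 2^{2^{k+1}}$ terms. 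The crucial trick to avoid an exponential blow-up of the recursion is to process subsets of each class in order of increasing cardinality, so that every invocation $\ipmc(t,\rho,\Tab{})$ needed has already been stored in the partially built $\PROJ$-table and is retrievable in constant time; the recurrence is thereby unfolded only once.

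Combining these estimates, the work at a single node $t$ is dominated by $2^{k+1} \cdot 2^{2^{k+1}} \cdot \bigl(2^{2^{k+2}} + 2^{2^{k+1}}\bigr) \cdot \gamma(\CCard{F})$, which with elementary exponent arithmetic (using $k+1 + 3\cdot 2^{k+1} \leq 2^{k+4}$ for all $k\geq 0$) simplifies to $\mathcal{O}(2^{2^{k+4}} \cdot \gamma(\CCard{F}))$. Multiplying by the $g$ nodes, and noting that the first-pass $\dpa_\AlgS$ and the purging step both run in $2^{\mathcal{O}(k)} \cdot \poly(\CCard{F}) \cdot g$ time and are therefore dominated, yields the claimed $\mathcal{O}(2^{2^{k+4}} \cdot \gamma(\CCard{F}) \cdot g)$ bound. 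The main obstacle I anticipate is the bookkeeping required to verify that $\Card{\origs(t,\sigma)}$ does not blow up beyond $\mathcal{O}(\Card{\sigma})$ in the join case (the set-theoretic definition of $\origs$ via $\tau_1 \times \tau_2$ permits a priori a quadratic blow-up, and one has to invoke the specific structural property of \AlgS that on a join node each row is produced by exactly one pair of equal child rows), together with justifying formally that the bottom-up subset order collapses the double-exponential recursion suggested by a literal reading of Definition~\ref{def:ipmc} to the single pass used above.
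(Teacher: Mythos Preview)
Your proof is correct and follows essentially the same node-by-node counting argument as the paper. The two differences are refinements rather than a new route: you exploit the linear bound $\Card{\origs(t,\sigma)} \leq 2\Card{\sigma}$ (the paper uses the cruder $2^{2^{d+1}}\cdot 2^{2^{d+1}}$ for subsets of origins over two children), and you make memoization of $\ipmc(t,\rho,\cdot)$ explicit, whereas the paper deliberately forgoes caching to keep the worst-case bookkeeping simple (it even remarks that caching $\pcnt$ values ``overcomplicates worst-case analysis''). Both analyses land inside $\mathcal{O}(2^{2^{k+4}}\cdot\gamma(\CCard{F})\cdot g)$; your concern about the join case is handled exactly as you say, since $\AlgS$ produces each join row from a unique diagonal pair.
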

\begin{proof}
  Let~$d = k+1$ be maximum bag size of~$\TTT$. For each node~$t$ of $T$, we consider
  table $\tau=\ATab{\AlgS}[t]$ which has been computed by
  $\dpa_\AlgS$~\cite{SamerSzeider10b}. The table~$\tau$ has at most
  $2^{d}$ rows.
  In the worst case we store in~$\iota = \ATab{\PROJ}[t]$ each
  subset~$\sigma \subseteq \tau$ together with exactly one
  counter. Hence, we have $2^{2^{d}}$ many rows in $\iota$.
  In order to compute $\ipmc$ for~$\sigma$, we consider every
  subset~$\rho \subseteq \sigma$ and compute~$\pcnt$. Since
  $\Card{\sigma}\leq 2^d$, we have at most~$2^{2^{d}}$ many subsets
  $\rho$ of $\sigma$. 
For computing $\pcnt$, there could be each subset of the origins of~$\rho$ for each child
  table, which are less than~$2^{2^{d+1}}\cdot 2^{2^{d+1}}$ 
  (join and remove case).
  %
  %
  %
  In total, we obtain a runtime bound of~
  $\bigO{2^{2^{d}} \cdot 2^{2^{d}} \cdot 2^{2^{d+1}}\cdot 2^{2^{d+1}}
    \cdot  \gamma(\CCard{F}) } \subseteq \bigO{2^{2^{d+3}} \cdot \gamma(\CCard{F}) }$
  since we also need multiplication of counters.
  Then, we apply this to every node~$t$ of the tree decomposition,
  which results in running
  time~$\bigO{2^{2^{d+3}} \cdot \gamma(\CCard{F})  \cdot g}$.
  %
\end{proof}

\begin{corollary}\label{cor:runtime}
  Given an instance $(F,P)$ of \PMC where $F$ has
  treewidth~$k$. Algorithm~$\mdpa{\AlgS}$ runs in
  time~$\mathcal{O}(2^{2^{k+4}}\cdot \gamma(\CCard{F})  \cdot\CCard{F})$.
\end{corollary}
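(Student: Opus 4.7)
The plan is to decompose the total runtime of $\mdpa{\AlgS}$ along the phases listed in Listing~\ref{fig:mpontd} and bound each phase, then observe that the dominant term arises from the second pass and matches Theorem~\ref{thm:runtime} once the size of the decomposition is controlled in terms of $\CCard{F}$.

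First, I would handle the decomposition step: since we view $k$ as fixed, we can invoke Bodlaender's linear-time algorithm to obtain a tree decomposition of $\primal{F}$ of width $k$, and then convert it to a nice tree decomposition in linear time without increasing the width, as recalled in the preliminaries. The primal graph has at most $\CCard{F}$ vertices, so the resulting nice tree decomposition has $g \in \mathcal{O}(\CCard{F})$ nodes (the standard bound being $\mathcal{O}(k \cdot |V(\primal{F})|)$, which is absorbed into the $\mathcal{O}(\CCard{F})$ factor given the double-exponential dependency on $k$ already present).

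Next, I would dispense quickly with the remaining phases besides $\dpa_{\PROJ}$. The first pass $\dpa_{\AlgS}$ runs in time $\mathcal{O}(2^k \cdot \CCard{F})$ by the standard analysis of algorithm~\AlgS (see~\cite{SamerSzeider10b}), and the purging step that restricts each $\AlgS$-table to rows participating in some model of $F$ can be implemented by one additional top-down traversal of $\TTT$ in time polynomial in $\CCard{F}$ and $2^k$. Both contributions are dominated by the bound we are about to apply for the second pass.

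For the second pass, I would directly plug the bound from Theorem~\ref{thm:runtime}, namely $\mathcal{O}(2^{2^{k+4}} \cdot \gamma(\CCard{F}) \cdot g)$, and substitute $g \in \mathcal{O}(\CCard{F})$ established above. Adding the (dominated) costs of decomposition, the first pass, and purging does not change the asymptotic bound, which yields the claimed $\mathcal{O}(2^{2^{k+4}} \cdot \gamma(\CCard{F}) \cdot \CCard{F})$. There is no real obstacle here; the only point requiring a bit of care is to justify that $g$ is indeed linear in $\CCard{F}$ (so that Theorem~\ref{thm:runtime} lifts from ``per-instance with a given TD of $g$ nodes'' to ``per-instance in terms of $\CCard{F}$ alone''), which is immediate from the niceness of the decomposition and Bodlaender's theorem.
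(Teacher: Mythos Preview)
Your proposal is correct and follows essentially the same approach as the paper: decompose via Bodlaender, run the first pass $\dpa_{\AlgS}$ and purge (both single-exponential in $k$ and linear in $\CCard{F}$), then invoke Theorem~\ref{thm:runtime} for $\dpa_{\PROJ}$ with $g\in\mathcal{O}(\CCard{F})$. The only cosmetic difference is that the paper records a $\gamma(\CCard{F})$ factor already in the first pass, but as you note this is dominated by the second pass anyway.
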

\begin{proof}
  We compute in time~$2^{\mathcal{O}(k^3)}\cdot\Card{V}$ a tree
  decomposition~${\cal T'}$ of width at most~$k$~\cite{Bodlaender96}
  of primal graph~$\primal{F}$. Then, we run a decision version of the
  algorithm~$\dpa_{\AlgS}$ by Samer and Szeider~\cite{SamerSzeider10b}
  in time~$\mathcal{O}(2^k \cdot \gamma(\CCard{F}) \cdot
  \CCard{F})$. 
  Then, we again traverse the decomposition, thereby keeping
  rows that have a satisfying extension (``purging''), in
  time~$\mathcal{O}(2^k \cdot \CCard{F})$. 
  Finally, we run $\dpa_{\PROJ}$ and obtain the claim by
  Theorem~\ref{thm:runtime} and since~${\cal T'}$ has linearly many nodes~\cite{Bodlaender96}. 
\end{proof}

The next results also establish the lower bounds for our worst-cases.

\begin{theorem}
  Unless ETH fails, $\PMC$ cannot be solved in
  time~$2^{2^{o(k)}}\cdot \CCard{F}^{o(k)}$ for a given instance
  $(F,P)$ where~$k$ is the treewidth of the primal graph of~$F$.
\end{theorem}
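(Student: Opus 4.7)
The plan is a straightforward reduction from $\forall\exists$-\SAT (equivalently, $\exists\forall$-\SAT by dualization) to \PMC that preserves treewidth up to a linear factor, combined with the Lampis--Mitsou lower bound~\cite{LampisMitsou17} which, under ETH, rules out algorithms of running time~$2^{2^{o(k)}} \cdot n^{o(k)}$ for $\exists\forall$-\SAT parameterized by the treewidth~$k$ of the primal graph.

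First I would fix the reduction. Given an instance~$\forall X.\exists Y.\varphi(X,Y)$ of $\forall\exists$-\SAT with $\varphi$ in CNF, let~$F \eqdef \varphi$ and let the projection variables be~$P \eqdef X$. By the very definition of projected model counting,
\[
  \text{pmc}(F,P) \;=\; \Card{\{\alpha\colon X\to\{0,1\} \mid \exists \beta\colon Y\to\{0,1\}\text{ with }\alpha\cup\beta\models\varphi\}},
\]
so~$\forall X.\exists Y.\varphi$ is true if and only if~$\text{pmc}(F,P)=2^{\Card{X}}$. Hence any algorithm computing~$\text{pmc}(F,P)$ immediately decides $\forall\exists$-\SAT within the same running time (up to a negligible additive overhead for the comparison).

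Next I would observe that the reduction is treewidth-preserving: the primal graph of~$F$ coincides exactly with the primal graph~$\primal{\varphi}$ of the input QBF matrix, so the treewidth parameter~$k$ is identical in both instances, and $\CCard{F}=\CCard{\varphi}$. Consequently, an algorithm solving \PMC in time~$2^{2^{o(k)}}\cdot\CCard{F}^{o(k)}$ would decide $\forall\exists$-\SAT in time~$2^{2^{o(k)}}\cdot \CCard{\varphi}^{o(k)}$, contradicting the lower bound of Lampis and Mitsou under ETH.

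The main obstacle is aligning the precise statement of the Lampis--Mitsou bound with the graph representation used here: their result is stated for $\exists\forall$-\SAT, and one has to verify that (i)~the bound holds with respect to primal-graph treewidth (as opposed to, say, incidence treewidth), and (ii)~the dualization from $\exists\forall$-\SAT to $\forall\exists$-\SAT does not blow up the treewidth nor the input size by more than a constant factor. Both are routine: dualization is achieved by negating the matrix and swapping quantifiers (possibly Tseitin-transforming back into CNF), which can be done with only a constant-factor increase in size and a bounded increase in primal-graph treewidth. Once these bookkeeping issues are settled, the contrapositive of the reduction yields exactly the claimed lower bound~$2^{2^{o(k)}}\cdot \CCard{F}^{o(k)}$ for \PMC.
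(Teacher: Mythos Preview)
Your proposal is correct and follows essentially the same route as the paper: set $F=\varphi$, $P=X$, and observe that $\forall X.\exists Y.\varphi$ holds iff the projected model count equals $2^{|X|}$, with the primal graph (hence treewidth) unchanged. The only difference is bookkeeping on the source of the QBF lower bound: the paper sidesteps your dualization discussion entirely by citing a CNF version of the $\forall\exists$-\SAT bound directly~\cite{FichteHecherPfandler20}, noting that Lampis--Mitsou's original result was stated for DNF; in particular, your Tseitin worry is unnecessary, since negating a DNF matrix yields CNF outright with the same primal graph (and if Tseitin \emph{were} needed, your claim of only a bounded treewidth increase would require more care, as the top-level disjunction clause can create a large clique among the auxiliary variables).
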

\begin{proof}
  Assume for a proof by contradiction that there is such an algorithm.
  %
  %
  \FIX{We show that this contradicts a recent
  result~\cite[Theorem~13]{FichteHecherPfandler20}, which states that one cannot decide
  the validity of a
  \shortversion{QBF~\cite{BiereHeuleMaarenWalsh09,KleineBuningLettman99}}\longversion{quantified
    Boolean formula}
~$Q=\forall
  V_1. \exists V_2. E$ in time~$2^{2^{o(k)}}\cdot \CCard{E}^{o(k)}$
  under ETH. A version of this result for formulas in disjunctive normal form appeared earlier~\cite{LampisMitsou17}.} Given an instance~$(Q,k)$ of~$\forall\exists$-\SAT when
  parameterized by the treewidth~$k$ of~$E$, 
  %
  we
  provide a reduction to an instance~$((F,P,n),k)$ 
  of decision version~$\PMC$-exactly-$n$
  of~$\PMC$ 
  such that $F=E$, $P=V_1$, and the number $n$ of
  solutions is exactly~$2^{\Card{V_1}}$.
  The reduction is \longversion{in fact } an fpt-reduction, since
  the treewidth of $F$ is exactly~$k$.
\FIX{It is easy to see that the reduction
  gives a yes instance~$((F,P,n),k)$ of~$\PMC$-exactly-$n$ 
  if and only if $(\forall{V_1}.\exists V_2.E,k)$ is a yes instance of~$\forall\exists$-\SAT. 
Assume towards a contradiction that~$((F,P,n),k)$ is a yes-instance of~$\PMC$-exactly-$n$,
but $\forall{V_1}.\exists V_2.E$ evaluates to false.
Then, there is an assignment~$\alpha: V_1\rightarrow \{0,1\}$ such that~$E[\alpha]$ evaluates to false,
which contradicts that the projected model count of~$F$ with respect to~$P$ is~$2^{\Card{V_1}}$. 
In the other direction, assume that~$\forall{V_1}.\exists V_2.E$ evaluates to true, but the projected model count of~$F$ and~$P$ is~$<n$.
This, however, contradicts that~$\forall{V_1}.\exists V_2.E$ evalutes to true, which concludes the proof.
}
  %
  %
\end{proof}

\begin{corollary}
  Given any instance $(F,P)$ of \PMC where $F$ has treewidth~$k$. Then, under ETH,
  $\PMC$ 
  requires runtime~$2^{2^{\Theta(k)}} \cdot \poly(\CCard{F})$.
\end{corollary}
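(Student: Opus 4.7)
The plan is to combine the matching upper and lower bounds established just before the corollary. Concretely, Corollary~\ref{cor:runtime} already gives an algorithm running in time $\mathcal{O}(2^{2^{k+4}}\cdot \gamma(\CCard{F}) \cdot \CCard{F})$, which is of the form $2^{2^{O(k)}}\cdot \poly(\CCard{F})$ and hence provides the $O$-direction of the claimed $2^{2^{\Theta(k)}}\cdot \poly(\CCard{F})$ bound. So the first step is to simply invoke Corollary~\ref{cor:runtime} and observe that $\gamma(\CCard{F}) \cdot \CCard{F}$ is polynomial in $\CCard{F}$, absorbing it into $\poly(\CCard{F})$.

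For the $\Omega$-direction, I would invoke the preceding theorem, which rules out a $2^{2^{o(k)}}\cdot \CCard{F}^{o(k)}$ algorithm under ETH. I would then argue by contraposition: any algorithm of the form $2^{2^{f(k)}} \cdot \poly(\CCard{F})$ with $f(k) = o(k)$ would fit inside the forbidden bound $2^{2^{o(k)}}\cdot \CCard{F}^{o(k)}$ (since a fixed polynomial in $\CCard{F}$ is certainly of the form $\CCard{F}^{O(1)} \subseteq \CCard{F}^{o(k)}$ for sufficiently large $k$), contradicting ETH. Hence the exponent in the outer $2^{2^{(\cdot)}}$ must be $\Omega(k)$, establishing the $\Omega$-direction.

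Putting both bounds together yields $2^{2^{\Theta(k)}}\cdot \poly(\CCard{F})$, which is exactly the claim. The main technical care needed, which is really the only subtle point, is the absorption step in the lower-bound direction: verifying that every polynomial $\poly(\CCard{F})$ is indeed dominated by $\CCard{F}^{o(k)}$ for large enough~$k$, so that a hypothetical faster algorithm truly falls within the range excluded by the preceding theorem. This is the only place where one must be slightly careful with the asymptotic quantifiers, but it is essentially immediate since $\CCard{F}^c \leq \CCard{F}^{o(k)}$ once $k$ exceeds any fixed constant~$c$.
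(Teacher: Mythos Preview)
Your proposal is correct and matches the paper's approach: the corollary is stated without proof in the paper, as it follows immediately from combining the upper bound of Corollary~\ref{cor:runtime} with the lower bound of the preceding theorem, exactly as you outline. Your observation about the absorption step (that a fixed polynomial $\CCard{F}^c$ is $\CCard{F}^{o(k)}$ since a constant is $o(k)$) is the only point requiring any care, and you handle it correctly.
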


\section{Towards Efficiently Utilizing Treewidth for \PMC}\label{sec:nesteddp}

Although the tables obtained via table algorithms might be exponential in size,
the size is bounded by the width of the given TD of 
the primal graph~$\primal{F}$ of a formula~$F$.
Still, practical results of such algorithms show competitive behaviour~\cite{BannachBerndt19,FichteHecherZisser19} up to a certain width. 
As a result, instances with high (tree)width seem out of reach.
Even further, as we have shown above, lifting the table algorithm~$\PRIM$ in order to solve problem~$\cESAT$ results in an algorithm that is double exponential in the treewidth.

To mitigate these issues and to enable practical implementations, 
we present a novel approach to deal with high treewidth,
by nesting of DP on grpah simplifications (abstractions) of~$\primal{F}$.
These abstractions are discussed in Section~\ref{lab:abstractions}
and the basis for nested DP is presented in Section~\ref{lab:nested}.
As we will see, nested dynamic programming not only works for
~$\cSAT$, but 
also for
$\cESAT$ 
with adaptions.
Finally, Section~\ref{sec:hybriddp} concerns about hybrid dynamic programming, which is a further extension of nested DP.
More concretely, hybrid DP tries to combine the best of the two worlds \emph{(i) dynamic programming} and \emph{(ii) applying standard, search-based solvers}, where DP provides the basic structure guidance and delegates hard subproblems that occur during solving to these standard solvers.
%

\subsection{Abstractions are key}\label{lab:abstractions}

In the following, we discuss certain graph simplifications (called abstractions) of the primal graph in the context of the Boolean
satisfiability problem, namely for the problem~\sharpSAT. Afterwards we generalize the usage of
these abstraction to nested dynamic programming 
for \PMC.

To this end, let~$F$ be a Boolean formula.
Now, assume the situation that a set~$U$ of variables of~$F$, called \emph{nesting variables}, appears \emph{uniquely} in the bag of exactly one TD node~$t$ of a tree decomposition of~$\primal{F}$. 
Then, observe that one could do dynamic programming on the tree decomposition as explained in Section~\ref{sec:dpforsat}, 
but no truth value for any variable in $U$ requires to be stored.
\FIX{Instead, clauses of~$F$ over variables~$U$ could be evaluated within node $t$, 
since variables $U$ appear uniquely in the node~$t$.
Indeed, for dynamic programming on the non-nesting variables,
only the result of this evaluation is essential, as variables~$U$ appear uniquely within~$\chi(t)$.}
%

%
Before we can apply nested DP, we require a formal account of abstractions with room for choosing nesting variables between the empty set and the set of all the variables.
Let~$F$ be a Boolean formula and recall the primal graph~$\primal{F}=(\var(F),E)$ of~$F$.
Inspired by related work~\cite{DellRothWellnitz19,EibenEtAl19,stacs:GanianRS17,HecherMorakWoltran20}, we define the \emph{nested primal graph}~$\nested{F}{A}$ for a given formula~$F$
and a given set~$A\subseteq \var(F)$ of variables, referred to by \emph{abstraction variables}.
To this end, we say a path~$P$ in primal graph~$\primal{F}$ is a \emph{nesting path (between~$u$ and~$v$)} using~$A$, if $P=u, v_1, \ldots, v_\ell, v$ ($\ell\geq 0$), and every vertex~$v_i$ is a \emph{nesting variable}, i.e., $v_i\notin A$ for $1\leq i\leq \ell$. 
Note that any path in~$\primal{F}$ is nesting using~$A$ if $A=\emptyset$.
Then, the vertices of nested primal graph~$\nested{F}{A}$ correspond to~$A$ and there is an edge between two distinct vertices~$u,v\in A$ if there is a nesting path between~$u$ and~$v$.

\begin{definition}\label{def:nestprimalgraph}
Let $F$ be a Boolean formula and~$A\subseteq \var(F)$ be a set of variables.
Then, the \emph{nested primal graph~$\nested{F}{A}$} is defined by~$\nested{F}{A}\eqdef (A, \{\{u,v\}\mid u,v\in A, u\neq v,\text{ there is a nesting path in~$\primal{F}$ between~$u$ and~$v$}\})$. 
\end{definition}
Observe that the nested primal graph only consists of abstraction variables and, intuitively, ``hides'' nesting variables of nesting paths of primal graph~$\primal{F}$.
Even further, the connected components of~$\primal{F} - A$ are hidden in the nested primal graph~$\nested{F}{A}$ by means of cliques among~$A$.

\begin{figure}[t]%
\centering
\includegraphics[scale=1.0]{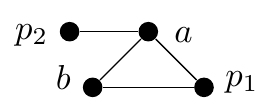}
\hspace{2em}\qquad%
\includegraphics[scale=1.0]{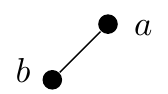}
\hspace{2em}\qquad%
\includegraphics[scale=1.0]{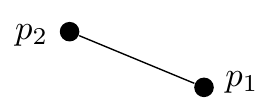}
\caption{Primal graph~$\primal{F}$
of~$F$ from Example~\ref{ex:running0} 
  (left), nested primal graph~$\nested{F}{\{a,b\}}$
  (middle), as well as nested primal graph~$\nested{F}{\{p_1,p_2\}}$ (right).}%
\label{fig:graph-td1}%
\end{figure}


\begin{example}\label{ex:D}
Recall formula~$F\eqdef \{\overbrace{\{\neg a, b, p_1\}}^{c_1},
  \overbrace{\{a, \neg b, \neg p_1\}}^{c_2}, \overbrace{\{a,
    p_2\}}^{c_3}, \overbrace{\{a, \neg p_2\}}^{c_4}\}$
  and primal graph~$\primal{F}$ of Example~\ref{ex:running0}, which is visualized in Figure~\ref{fig:graph-td1} (left).
Given abstraction variables~$A{=}\{a,b\}$, nesting paths of~$\primal{F}$ are, e.g., $P_1{=}a$, $P_2{=}a,p_2$, $P_3{=}p_2,a$, $P_4{=}a,b$, $P_5{=}a,p_1,b$. 
However, neither path~$P_6{=}b,a,p_2$, nor path~$P_7{=}\allowbreak p_2,a,b,p_1$ is nesting using~$A$.
Nested primal graph $\nested{F}{A}$ is shown in Figure~\ref{fig:graph-td1} (middle) and it contains an edge~$\{a,b\}$ over the vertices in~$A$ due to, e.g., paths~$P_4, P_5$.
Assume a different set~$A'=\{p_1,p_2\}$. Observe that~$\nested{F}{A'}$ as depicted in Figure~\ref{fig:graph-td1} (right) consists of the vertices~$A'$ and there is an edge between~$p_1$ and~$p_2$ due to, e.g., nesting path~$P'=p_1,a,p_2$ using~$A'$.
\end{example}

The nested primal graph provides abstractions of needed flexibility for nested DP.
Indeed, if we set abstraction variables to~$A{=}\var(F)$, we end up with full dynamic programming and zero nesting, whereas setting~$A{=}\emptyset$ results in full nesting, i.e., nesting of all variables. 
%
Intuitively, the nested primal graph ensures that
clauses subject to nesting (containing nesting variables) can be safely evaluated in
exactly one node of a tree decomposition of the nested primal graph.
%
%
%
%
%

To formalize this, we assume a tree decomposition~$\TTT=(T,\chi)$ of~$\nested{F}{A}$ and say a set~$U\subseteq \var(F)$ of variables is \emph{compatible} with a node~$t$ of~$T$, and vice versa,
if 
\begin{enumerate}
	\item[(I)] $U$ is a connected component of the graph~$\primal{F}-A$, which is obtained from primal graph~$\primal{F}$ by removing~$A$ and 
	\item[(II)] all neighbor vertices of~$U$ that are in~$A$ are contained in~$\chi(t)$, i.e.,
 $\{a\mid a\in A, u\in U,\text{ there is a nesting path from }a\text{ to }u\text{ using }A\}\subseteq\chi(t)$.
\end{enumerate}

%

If such a set~$U\subseteq \var(F)$ of variables is compatible with a node of~$T$, we say that~$U$ is a \emph{compatible set}.
By construction of the nested primal graph, any nesting variable is in at least one compatible set.
However, 
a compatible set could be compatible with several nodes of~$T$.
Hence, to enable nested evaluation in general, we need to ensure that each nesting variable is evaluated only in one unique node~$t$.

As a result, we formalize for every compatible set~$U$, a \emph{unique} node~$t$ of~$T$ that is compatible with~$U$, denoted by $\compat_{F,A,\mathcal{T}}(U)\eqdef t$. 
We denote the union of all compatible sets~$U$ with $\compat_{F,A,\mathcal{T}}(U)=t$, by \emph{nested bag variables}~$\chi_t^A\eqdef \bigcup_{U: \compat_{F,A,\mathcal{T}}(U)=t} U$.
Then, the \emph{nested bag formula}~$F_t^A$ for a node $t$ of
$T$ equals $F_t^A\eqdef \{c\mid c\in F, \var(c)\subseteq \chi(t) \cup \chi_t^A\}{\,\setminus\,}F_t$, where the bag formula~$F_t$ is defined as in the beginning of Section~\ref{sec:sat}. 
Observe that the definition of nested bag formulas ensures that any connected component~$U$ of $\primal{F} - A$ ``appears''
among nested bag variables of some unique node of~$T$.
Consequently, each variable $a \in \var(F) \setminus A$
appears \emph{only} in one nested bag formula~$F_t^A$ of a node $t$ of~$T$ that is unique for~$a$.

\begin{figure}[t]%
\centering
\includegraphics[scale=1.0]{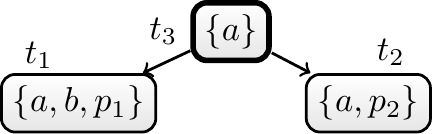}
\hspace{2em}\qquad%
\includegraphics[scale=1.0]{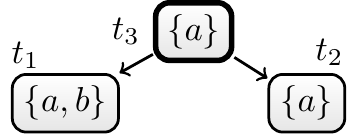}
\caption{TD~$\mathcal{T}$ (left) of the primal graph~$\primal{F}$ of Figure~\ref{fig:graph-td1}, and a TD~$\mathcal{T'}$ (right) of nested primal graph~$\nested{F}{\{a,b\}}$.} 
\label{fig:graph-td2}%
\end{figure}

\begin{example}\label{ex:abstracttd}
Recall formula~$F$, the tree decomposition~$\TTT=(T,\chi)$ of~$\primal{F}$, as depicted in Figure~\ref{fig:graph-td2} (left), and abstraction variables~$A=\{a,b\}$ of Example~\ref{ex:D}.
Consider TD~$\TTT'\eqdef (T,\chi')$, where~$\chi'(t)\eqdef \chi(t)\cap\{a,b\}$
for each node~$t$ of~$T$, which is given in Figure~\ref{fig:graph-td2} (right). Observe that~$\TTT'$ is~$\TTT$, but restricted to~$A$
and that $\TTT'$ is a TD of~$\nested{F}{A}$ of width~$1$.
There are two compatible sets, namely~$\{p_1\}$ and~$\{p_2\}$.
Observe that only for compatible set~$U=\{p_2\}$ we have two nodes compatible with~$U$, namely $t_2$ and $t_3$. 
We assume that~$\compat_{F,A,\mathcal{T}'}(U)=t_2$, i.e., we decide that~$t_2$ shall be the unique node for~$U$. 
Consequently, nested bag formulas are~$F_{t_1}^A=\{c_1,c_2\}$, $F_{t_2}^A=\{c_3,c_4\}$, and~$F_{t_3}^A{=}\emptyset$.
\end{example}

\subsection{Nested Dynamic Programming on Abstractions}\label{lab:nested}

\begin{algorithm}[t]%
  \KwData{%
    Nested table algorithm $\algo{N}$, nesting $\depth\geq 0$, instance~$(F,P)$ of~$\PMC$, abstraction variables~$A\subseteq \var(F)$, 
and a TD~$\TTT=(T,\chi)$ of the nested primal graph~$\nested{F}{A}$. 
  }%
  \KwResult{%
    Table mapping $\ATab{\algo{N}}$, which maps each TD node~$t$ of~$T$ to some computed
    table~$\tau_t$. 
  } %

  $\ATab{\algo{N}} \leftarrow \{\}\qquad \tcc*[h]{empty mapping}$

  \For{\text{\normalfont iterate} $t$ in $\post(T)$}{
    %
    %
    
    $\Tab{} \leftarrow \langle\ATab{\algo{N}}[t_1],\ldots, \ATab{\algo{N}}[t_\ell]\rangle\text{ where }\childrenseq(t)=\langle t_1,\ldots,t_\ell \rangle\hspace{-5em}$
  %
  
  $\ATab{\algo{N}}[t]  \leftarrow {\algo{N}}(\depth,t,\chi(t), F_t, F_t^A, P\cap\var(F_t^A), \Tab{})$
%
  }%
  \Return{$\ATab{\algo{N}}$} 
 \caption{Algorithm ${\adpa}_{\algo{N}}(\depth, (F,P), A,\TTT)$ for computing solutions of~$(F,P)$ via nested DP on
   TD~${\calT}$.}
\label{fig:adpontd}
\end{algorithm}%

Now, we have established required notation in order to discuss \emph{nested dynamic programming (nested DP)}.
Listing~\ref{fig:adpontd} presents algorithm~\adpa for solving a given problem by means of nested dynamic programming.
Observe that Listing~\ref{fig:adpontd} 
is almost identical to algorithm~\dpa as presented in Listing~\ref{fig:dpontd}.
The reason for this is that nested dynamic programming can be seen as a refinement of dynamic programming, cf.\ algorithm~$\dpa$
of Listing~\ref{fig:dpontd}.
Indeed, the difference of~\adpa compared to~\dpa is that~\adpa uses labeled tree decompositions of the nested primal graph and that it gets as additional parameter a set~$A$ of abstraction variables.
Further, instead of a table algorithm~$\AlgA$, algorithm~\adpa relies on a \emph{nested table algorithm}~$\algo{N}$ during dynamic programming, which is similar 
to a table algorithm that gets as additional parameter an integer $\depth\geq 0$ that will be used later and a nested bag instance that needs to be evaluated. 
For simplicity and generality, also the formula is passed as a parameter, which is, however, used only for passing problem-specific information of the instance. 
Indeed, most nested table algorithm do not require this parameter, which should not be used for direct problem solving instead of utilizing the bag instance.
Consequently, nested dynamic programming still follows the basic concept of dynamic programming as presented in Figure~\ref{fig:framework}.

Similar to above, for the ease of presentation
	our presented nested table algorithms use nice tree decompositions only.
However, this is not a hard restriction.
Indeed, it is easy to see that for arbitrary TDs the clear case distinctions 
of nice decompositions
are still valid, but are in general just overlapping.
Further, without loss of generality we also assume that each compatible set~$U$ gets assigned a unique node~$t=\compat_{F,A,\mathcal{T}}(U)$ that is an \emph{introduce node}, i.e., $\type(t)=\intr$.

\paragraph{Nested Dynamic Programming for \sharpSAT} 
In order to design a nested table algorithm for \sharpSAT, assume a Boolean formula~$F$ as well as a given labeled tree decomposition~$\TTT=(T,\chi)$ of~$\nested{F}{A}$ using any set~$A$ of abstraction variables.
Recall from the discussions above, that each variable $a \in \var(F) \setminus A$
appears \emph{only} in one nested bag formula~$F_t^A$ of a node $t$ of~$T$ that is unique for~$a$.
These unique variable appearances allow us to actually nest
the evaluation of nested bag formula $F_t^A$.
This evaluation is performed by a nested table algorithm~$\algNS$ in the context of nested dynamic programming.
Listing~\ref{alg:primhyb} shows this simple nested table algorithm~$\algNS$ for solving problem~\sharpSAT by means of algorithm~$\adpa_\algNS$.
For comparison, recall table algorithm~$\algS$ for solving problem~$\sharpSAT$ by means of dynamic programming,
as given by Listing~\ref{alg:prim}.
Observe that in contrast to Listing~\ref{alg:prim}, we store here assignments (and not interpretations), which simplifies the presentation of nesting.
However, the main difference of~$\algNS$ compared to~$\algS$ is that the nested table algorithm~$\algNS$ maintains a counter~$c$ and that it gets called on a nested primal graph, i.e., the algorithm gets additional parameters like the nested bag formula.
%
%
Then, the nested table algorithm evaluates this nested bag formula 
in Line~\ref{line:intr:hybr} via any procedure~$\sharpSAT$ for solving
problem~$\sharpSAT(F_t^A[J])$ on the nested bag formula~$F_t^A$ simplified by the current assignment~$J$ to variables in the bag~$\chi(t)$.
Note that this subproblem~$\sharpSAT(F_t^A[J])$ itself can be solved by again using nested dynamic programming with the help of algorithm~$\adpa_\algNS$.

\begin{algorithm}[t]
  \KwData{%
Node~$t$, 
bag $\chi_t$, bag formula~$F_t$, 
  nested bag formula~$F_t^A$,
    and sequence $\Tab{}=\langle \tab{1},\ldots \tab{\ell}\rangle$ of child tables  of~$t$.
}
\KwResult{Table~$\tab{t}.$} \lIf(\hspace{-1em})
  {$\type(t) = \leaf$}{%
    $\tab{t} \leftarrow \{ \langle
    \tuplecolor{\specialPredColor}{\emptyset},
    \tuplecolor{\statePredColor}{1} \rangle \}$%
  }%
  \uElseIf{$\type(t) = \intr$ and
    $a\hspace{-0.1em}\in\hspace{-0.1em}\chi(t)$ is introduced}{ %
    \makebox[1.7cm][l]{$\tab{t} \leftarrow \{ \langle
      \tuplecolor{\specialPredColor}{J},
      \tuplecolor{\statePredColor}{c'\cdot c} \rangle$
    }%
    \hspace{2em}$|\;
    \langle \tuplecolor{\specialPredColor}{I},
    \tuplecolor{\statePredColor}{c} \rangle \in \tab{1}, 
    %
    J\in I\cup\{a\mapsto v\}, v\in\{0,1\}, J \models F_t,c'{>}0,c'{=}$\newline
    \makebox[26em]{}$\sharpSAT(F_t^A[J])  \}\hspace{-4em}$  
    %
     \label{line:intr:hybr}%
  }\vspace{-0.05em}%
  \uElseIf{$\type(t) = \rem$ and $a \not\in \chi(t)$ is removed}{%
    \tcc{$\text{C}(I)$ is the set that contains the rows in~$\tab{1}$ for assignments $J$ that are equal to $I$ after removing~$a$}
    $\text{C}(I) \leftarrow \{ \langle J, c\rangle \SM \langle {J}, {c}\rangle \in \tab{1}, J\setminus\{a \mapsto 0, a \mapsto 1\} = I\setminus\{a \mapsto 0, a \mapsto 1\} \SE$\\
\makebox[5cm][l]{$\tab{t} \leftarrow \{ \langle
      \tuplecolor{\specialPredColor}{I\setminus\{a \mapsto 0, a \mapsto 1\}}$,
      %
    %
      \tuplecolor{\statePredColor}{%
        $\sum_{\langle J, c\rangle \in C(I)} c \rangle\}$
      }%
    }
    \hspace{3.825em}$\mid \langle \tuplecolor{\specialPredColor}{I},
    \tuplecolor{\statePredColor}{\cdot} \rangle \in \tab{1}     \}\hspace{-5em}$\;
  } %
  \uElseIf{$\type(t) = \join$}{%
    \makebox[3.3cm][l]{$\tab{t} \leftarrow \{ \langle
      \tuplecolor{\specialPredColor}{I},
      \tuplecolor{\statePredColor}{c_1 \cdot c_2}
      \rangle$}\hspace{9em}$|\;\langle \tuplecolor{\specialPredColor}{I},
    \tuplecolor{\statePredColor}{c_1} \rangle \in \tab{1}, \langle
    \tuplecolor{\specialPredColor}{I},
    \tuplecolor{\statePredColor}{c_2} \rangle \in \tab{2}
    \}\hspace{-5em}$
    \vspace{-0.25em}
  } %
  \Return $\tab{t}$ \vspace{-0.25em}
  \caption{Nested table algorithm~$\algNS(\cdot,t,\chi_t,F_t,F_t^A,\cdot,\Tab{})$ for solving \sharpSAT.}
  \label{alg:primhyb}
\end{algorithm}%

In the following, we briefly show the evaluation of nested dynamic programming for \sharpSAT on an example.

\begin{figure}[t]
\centering
\begin{tikzpicture}[node distance=3mm]
\tikzset{every path/.style=thick}

%
\node (r12) [stdnode, label={[tdlabel, xshift=0em,yshift=+0em]right:${t_1}$}]{$\{a,b\}$};
\node (r2) [stdnode, right=2.5em of r12, label={[tdlabel, xshift=0em,yshift=+0em]left:${t_{2}}$}]{$\{a\}$};
\node (j) [stdnode, ultra thick, above left=of r2, xshift=-.2em, yshift=-0.25em, label={[tdlabel, xshift=0em,yshift=+0.15em]right:${t_{3}}$}]{$\{a\}$};
\node (label) [font=\scriptsize,left=of j]{${\mathcal{T}}'$:};
\node (leaf0x) [stdnode, left=1em of r12, yshift=-1.35em, label={[tdlabel, xshift=0em,yshift=+.5em]below right:$\tab{1}$}]{%
	\begin{tabular}{l}%
		\multicolumn{1}{l}{$\langle \tuplecolor{\inputPredColor}{a, b}, \tuplecolor{\statePredColor}{\text{cnt}} \rangle$}\\
		\hline\hline
		$\langle \tuplecolor{\inputPredColor}{0, 0}, \tuplecolor{\statePredColor}{2}\rangle$\\\hline
		\rowcolor{yellow}$\langle\tuplecolor{\inputPredColor}{1, 0}, \tuplecolor{\statePredColor}{1}\rangle$\\\hline
		$\langle\tuplecolor{\inputPredColor}{0, 1}, \tuplecolor{\statePredColor}{1}\rangle$\\\hline
		\rowcolor{yellow}$\langle\tuplecolor{\inputPredColor}{1, 1}, \tuplecolor{\statePredColor}{2}\rangle$\\
	\end{tabular}%
};
\node (leaf0b) [stdnodenum,left=of leaf0x,xshift=0.9em,yshift=0pt]{%
	\begin{tabular}{c}%
		\multirow{1}{*}{$i$}\\ %
		\hline\hline
		$1$ \\\hline
		$2$ \\\hline
		$3$ \\\hline
		$4$ %
	\end{tabular}%
};
\node (leaf2b) [stdnodenum,below=2em of j,xshift=-2.25em,yshift=-.5em]  {%
	\begin{tabular}{c}%
		\multirow{1}{*}{$i$}\\ %
		\hline\hline
		$1$\\
	\end{tabular}%
};
\node (leaf2) [stdnode,right=-0.4em of leaf2b,yshift=-0em, label={[tdlabel, xshift=0em,yshift=0.25em]right:$\tab{3}$}]  {%
	\begin{tabular}{l}%
		\multirow{1}{*}{$\langle \tuplecolor{\inputPredColor}{a}, \tuplecolor{\statePredColor}{\text{cnt}} \rangle$}\\ %
		\hline\hline
		\rowcolor{yellow}$\langle \tuplecolor{\inputPredColor}{1}, \tuplecolor{\statePredColor}{6}\rangle$\\
	\end{tabular}%
};
\node (leaf3b) [stdnodenum,below=1.1em of r2,xshift=2.25em,yshift=.5em]  {%
	\begin{tabular}{c}%
		\multirow{1}{*}{$i$}\\ %
		\hline\hline
		$1$\\
	\end{tabular}%
};
\node (leaf3) [stdnode,right=-0.4em of leaf3b,yshift=0em, label={[tdlabel, xshift=0em,yshift=0.25em]right:$\tab{2}$}]  {%
	\begin{tabular}{l}%
		\multirow{1}{*}{$\langle \tuplecolor{\inputPredColor}{a}, \tuplecolor{\statePredColor}{\text{cnt}} \rangle$}\\ %
		\hline\hline
		\rowcolor{yellow}$\langle \tuplecolor{\inputPredColor}{1}, \tuplecolor{\statePredColor}{2}\rangle$\\
	\end{tabular}%
};
%
\coordinate (top) at ($ (leaf2.north east)+(0.6em,-0.5em) $);
\coordinate (bot) at ($ (top)+(0,-12.9em) $);

\draw [->] (j) to ($ (r12.north)$);
\draw [->] (j) to ($ (r2.north)$);

\draw [dashed, bend right=0] (leaf2) to (j);
\draw [dashed, bend right=0] (leaf3) to (r2);
%
\draw [dashed, bend left=20] (leaf0x) to (r12);
\end{tikzpicture}\qquad
\begin{tikzpicture}[node distance=1mm]
\tikzset{every path/.style=thick}

%
\node (i1) [stdnode, ultra thick, label={[tdlabel, xshift=0em,yshift=+0em]right:${t''}$}]{$\{a\}$};
\node (label) [font=\scriptsize,left=of i1]{${\mathcal{T}}''$:};
%
\node (join) [stdnode,below=0.75em of i1, yshift=-.75em, label={[tdlabel, xshift=.0em,yshift=+0.25em]right:$\tab{{t''}}$}] {%
	\begin{tabular}{l}%
		\multirow{1}{*}{$\langle \tuplecolor{\inputPredColor}{a}, \tuplecolor{\statePredColor}{\text{cnt}} \rangle$}\\
		\hline\hline
		\rowcolor{yellow}$\langle \tuplecolor{\inputPredColor}{1}, \tuplecolor{\statePredColor}{6} \rangle$\\
	\end{tabular}%
};
\node (joinb) [stdnodenum,left=-0.45em of join] {%
	\begin{tabular}{c}
		\multirow{1}{*}{$i$}\\
		\hline\hline
		$1$\\
	\end{tabular}%
};
\coordinate (top) at ($ (leaf2.north east)+(0.6em,-0.5em) $);
\coordinate (bot) at ($ (top)+(0,-12.9em) $);

\draw [dashed, bend right=25] (i1) to (join);
%
%
%
\end{tikzpicture}
\caption{Selected tables obtained by nested DP on TD~${\mathcal{T}}'$ of~$\nested{F}{\{a,b\}}$ (left) and on TD $\TTT''$ of~$\nested{F}{\{a\}}$ (right) for~$F$ of Example~\ref{ex:abstracttd} via~$\adpa_{\algNS}$.}
\label{fig:nested}
\end{figure}

\begin{example}\label{ex:nested}
Recall formula~$F$, set~$A$ of abstraction variables, and TD~$\TTT'$ of nested primal graph~$\nested{F}{A}$ given in Example~\ref{ex:abstracttd}.
As already mentioned, Formula $F$ has six satisfying assignments, namely $\{a\mapsto 1, b\mapsto 0, p_1 \mapsto 1, p_2\mapsto 0\}$, $\{a\mapsto 1, b\mapsto 0, p_1 \mapsto 1, p_2\mapsto 1\}$, $\{a\mapsto 1, b\mapsto 1, p_1 \mapsto 0, p_2\mapsto 0\}$, $\{a\mapsto 1, b\mapsto 1, p_1 \mapsto 0, p_2\mapsto 1\}$, $\{a\mapsto 1, b\mapsto 1, p_1 \mapsto 1, p_2\mapsto 0\}$, and $\{a\mapsto 1, b\mapsto 1, p_1 \mapsto 1, p_2\mapsto 1\}$.

Figure~\ref{fig:nested} (left) shows TD~$\TTT'$ of~$\nested{F}{A}$ and 
tables obtained by~$\adpa_{\algNS}(0,(F,\cdot),\allowbreak A, \TTT')$ for model counting (\sharpSAT) on~$F$. 
%
%
%
%
We briefly discuss executing~$\algNS$ 
on~$\TTT'$, resulting in tables~\tab{1}, \tab{2}, and \tab{3} as shown in Figure~\ref{fig:nested} (left).
%
%
Intuitively, table~$\tab{1}$ is the result of introducing variables~$a$ and~$b$. 
Recall from Example~\ref{ex:abstracttd} that~$F_{t_1}^A=\{c_1,c_2\}$ with
$c_1=\{\neg a, b, p_1\}$ and $c_2=\{a, \neg b, \neg p_1\}$.
Then, in Line~\ref{line:intr:hybr} of algorithm~$\algNS$, 
for each assignment~$I$ to~$\{a,b\}$ of each row~$r$ of~\tab{1}, 
we compute $\sharpSAT(F_{t_1}^A[I])$.
Consequently, for assignment~$I_1=\{a\mapsto 0, b\mapsto 0\}$,
we have that there are two satisfying assignments of~$F_{t_1}^A[I_1]$, namely
$\{p_1\mapsto 0\}$ and~$\{p_1\mapsto 1\}$.
Indeed, this count of~$2$ is obtained for the first row of table~$\tab{1}$ by Line~\ref{line:intr:hybr}.
Analogously, one can derive the remaining tables of~$\tab{1}$ and one obtains table~$\tab{2}$ similarly,
by using formula~$F_{t_2}^A$.
Then, table~$\tab{3}$ is the result of removing~$b$ in node~$t_1$ and combining agreeing 
assignments of rows accordingly.
Consequently, we obtain that there are six satisfying assignments of~$F$,
which are all required to set~$a$ to~$1$ due to formula~$F_{t_2}^A$ that is evaluated in node~$t_2$.
%
%

Figure~\ref{fig:nested} (right) shows TD~$\TTT''$ of~$\nested{F}{\{a\}}$ and tables obtained by $\adpa_{\algNS}(0,\allowbreak(F,\cdot),\allowbreak \{a\}, \TTT'')$ using TD~$\TTT''$.
Since~$F_{t''}^{\{a\}}=F$ and~$F[\{a\mapsto 0\}]$ is unsatisfiable, table~$\tab{t''}$ does not contain an entry corresponding to assignment~$\{a\mapsto 0\}$, cf.\ Condition~``$c'{>}0$'' in Line~\ref{line:intr:hybr} of Listing~\ref{alg:primhyb}.
%
%
Thus, there are six satisfying assignments of~$F_{t''}^{\{a\}}[\{a\mapsto 1\}]$ obtained by computing~$\sharpSAT(F_{t''}^{\{a\}}[\{a\mapsto 1]\})$.
\end{example}

While the overall concept of nested dynamic programming as given by algorithm~\adpa of Listing~\ref{fig:adpontd} is quite general, sometimes in practice it is sufficient to further restrict the set of choices for abstraction vertices~$A$ when constructing the nested primal graph.

\paragraph{Nested Table Algorithm for \PMC} 
To this end, we show the approach of nested dynamic programming for the problem~\PMC.
\begin{example}\label{ex:pmc}
Recall formula~$F$ as well as set~$A=\{a,b\}$ of abstraction variables from Example~\ref{ex:abstracttd}.
Then, we have that~$(F,A)$ is an instance of the projected model counting problem~$\PMC$.
Restricted to projection set~$A$, the Boolean formula $F$ has two satisfying assignments, 
namely $\{a\mapsto 1, b\mapsto 0\}$ and $\{a\mapsto 1, b\mapsto 1\}$.
Consequently, the solution to~$\cESAT$ on~$(F,A)$, i.e., $\cESAT(F,A)$, is~$2$.
\end{example}

Indeed, for solving projected model counting we mainly focus on the case, where for a given instance~$(F,P)$ with Boolean formula~$F$ of problem~\PMC, the abstraction variables~$A$ that are used for constructing the \emph{nested primal graph}~$\nested{F}{A}$ are \emph{among the projection variables}, i.e., $A\subseteq P$.
The approach of nested DP can then be applied for solving projected model counting such that the nested table algorithm naturally extends algorithm~\algNS of Listing~\ref{alg:primhyb}.
%
%


\begin{algorithm}[t]
  \KwData{%
Node~$t$, 
bag $\chi_t$, bag formula~$F_t$, 
  nested bag formula~$F_t^A$,
projection variables $P\subseteq\var(F_t^A)$,
    and sequence $\Tab{}=\langle \tab{1},\ldots \tab{\ell}\rangle$ of child tables  of~$t$.
}
\KwResult{Table~$\tab{t}.$} \lIf(\hspace{-1em})
  {$\type(t) = \leaf$}{%
    $\tab{t} \leftarrow \{ \langle
    \tuplecolor{\specialPredColor}{\emptyset},
    \tuplecolor{\statePredColor}{1} \rangle \}$%
  }%
  \uElseIf{$\type(t) = \intr$ and
    $a\hspace{-0.1em}\in\hspace{-0.1em}\chi(t)$ is introduced}{ %
    \makebox[3.3cm][l]{$\tab{t} \leftarrow \{ \langle
      \tuplecolor{\specialPredColor}{J},
      \tuplecolor{\statePredColor}{c'\cdot c} \rangle$
    }%
    \hspace{-0.5em}$|\;
    \langle \tuplecolor{\specialPredColor}{I},
    \tuplecolor{\statePredColor}{c} \rangle \in \tab{1}, 
    %
    = I\cup\{a\mapsto v\}, v\in\{0,1\}, J \models F_t, c'{>}0,c'{=}$\newline  
    \makebox[26em]{}$\PMC(F_t^A[J], P) \}\hspace{-1em}$
    %
    %
     \label{line:intr:hybrpmc}%
  }\vspace{-0.05em}%
  \uElseIf{$\type(t) = \rem$ and $a \not\in \chi(t)$ is removed}{%
    %
    %
    $\text{C}(I) \leftarrow \{ \langle J, c\rangle \SM \langle {J}, {c}\rangle \in \tab{1}, J\setminus\{a \mapsto 0, a \mapsto 1\} = I\setminus\{a \mapsto 0, a \mapsto 1\} \SE$\\
\makebox[5cm][l]{$\tab{t} \leftarrow \{ \langle
      \tuplecolor{\specialPredColor}{I\setminus\{a \mapsto 0, a \mapsto 1\}}$,
      %
    %
      \tuplecolor{\statePredColor}{%
        $\sum_{\langle J, c\rangle \in C(I)} c $}$\rangle$
    }
 \hspace{3.5em}$\mid \langle \tuplecolor{\specialPredColor}{I},
    \tuplecolor{\statePredColor}{\cdot} \rangle \in \tab{1}    \}\hspace{-15em}$\;
  } %
  \uElseIf{$\type(t) = \join$}{%
    \makebox[3.3cm][l]{$\tab{t} \leftarrow \{ \langle
      \tuplecolor{\specialPredColor}{I},
      \tuplecolor{\statePredColor}{c_1 \cdot c_2}
      \rangle$}\hspace{9em}$|\;\langle \tuplecolor{\specialPredColor}{I},
    \tuplecolor{\statePredColor}{c_1} \rangle \in \tab{1}, \langle
    \tuplecolor{\specialPredColor}{I},
    \tuplecolor{\statePredColor}{c_2} \rangle \in \tab{2}
    \}\hspace{-5em}$
    \vspace{-0.25em}
  } %
  \Return $\tab{t}$ \vspace{-0.25em}
  \caption{Nested table algorithm~$\algNES(\cdot, t,\chi_t, F_t, F_t^A, P, \Tab{})$ for solving \PMC.}
  \label{alg:primhybpmc}
\end{algorithm}%

The nested table algorithm~\algNES for solving projected model counting via nested dynamic programming is presented in Listing~\ref{alg:primhybpmc}.
Observe that nested table algorithm~\algNES does not significantly differ
from algorithm~\algNS due to~$A\subseteq P$.
Indeed, the main difference is only in Line~\ref{line:intr:hybrpmc} of Listing~\ref{alg:primhyb}, where instead of a procedure for model counting, a procedure~\PMC for solving a projected model counting question is called.
%

\subsection{Hybrid Dynamic Programming based on nested DP}\label{sec:hybriddp}\label{sec:hybrid}

Now, we have definitions at hand to further refine and discuss nested dynamic programming in the context of \emph{hybrid dynamic programming (hybrid DP)}, which combines using both standard solvers and parameterized solvers exploiting treewidth in the form of nested dynamic programming. 
We illustrate these ideas for 
the problem~\cESAT next. 
Afterwards we discuss how to implement the resulting algorithms in order to efficiently solve \PMC and \sharpSAT by 
means of database management systems. 

\begin{algorithm}[t]%
  \KwData{%
    Nesting~$\depth \geq 0$ and an instance~$(F,P)$ of \PMC.} 
  \KwResult{Number $\cESAT(F,P)$ of assignments.\hspace{-1em}
  }%
  %
    %

  \smallskip
  

  $(F',P')=\text{Preprocessing}(F,P)$ \label{line:bcp}


  $A \leftarrow P'$\label{line:updateD}

  \vspace{-0.35em} 
  \lIf{$F'\in\dom(\cache)\quad\tcc*[h]{\hspace{-0.5em}Cache Hit\hspace{-0.5em}}\qquad$}{\Return{$\cache(F')\cdot 2^{\Card{P\setminus P'}}$}\hspace{-1em}}\label{line:cache}

  \smallskip
  \lIf{$P'=\emptyset$}{\Return{$\SAT(F') \cdot 2^{\Card{P}}$}}\label{line:sat}
  \vspace{-.15em}
  $\mathcal{T}=(T,\chi) \leftarrow  \text{Decompose\_via\_Heuristics}(\nested{F'}{A})\qquad\tcc*[h]{Decompose}\hspace{-1em}$
  
  
  \If{$\width(\mathcal{T}) \geq\text{threshold}_{\text{hybrid}} \text{ or } \depth \geq \text{threshold}_{\text{depth}}$\,\tcc*[h]{\hspace{-.25em}Standard Solver\hspace{-1.5em}}}{\label{line:depth}
  	\lIf{$\var(F')=P'$}{$\cache \leftarrow \cache \cup \{ (F', \cSAT(F'))\}\hspace{-50em}$}\label{line:csat}
  	\lElse{\qquad\qquad\qquad\,\,\,$\cache \leftarrow \cache \cup \{ (F', \cESAT(F',P'))\}\hspace{-50em}$}\label{line:cesat}\vspace{-.3em}
  	\Return{$\cache(F')\cdot 2^{\Card{P\setminus P'}}$}\label{line:usecache}\vspace{-.2em}
  }

  \smallskip\smallskip
  
  \If{$\width(\mathcal{T}) \geq \text{threshold}_{\text{abstr}}\, \tcc*[h]{\hspace{-.25em}Abstract via Heuristics \& Decompose\hspace{-.5em}}\,$}{\label{line:nonesting}
  	$A\quad\,\,\,\,\,\leftarrow \text{Choose\_Subset\_via\_Heuristics}(A,F')$\label{line:abstraction}
  	
	\vspace{-0.2em}$\mathcal{T}=(T,\chi) \leftarrow \text{Decompose\_via\_Heuristics}(\nested{F'}{A})$\label{line:decomposenest}\vspace{-.2em}
  }
  
  \smallskip\smallskip

  $\ATab{\algo{N}} \leftarrow \adpa_{\algHES}(\depth, (F',P'), A, \mathcal{T})  \,\tcc*[h]{Nested Dynamic Programming}\hspace{-2em}$
\label{line:root}\label{line:nested}
  
%
%
%

  $\cache \leftarrow \cache \cup \{ (F', c) \mid \langle \emptyset, c\rangle \in\ATab{\algo{N}}[\rootOf(T)]\}\hspace{-1em}$\label{line:caching}
  
  \vspace{-.2em}
  \Return{$\cache(F')\cdot 2^{\Card{P\setminus P'}}$}\label{line:return}\vspace{-.25em}%
%
%
\caption{%
    Algorithm $\hdpa_{\algHES}(\depth, F,P)$ 
    for hybrid DP of \cESAT 
    based on nested DP. 
  %
}%
\label{fig:hdpontd}%
\end{algorithm}%

Listing~\ref{fig:hdpontd} depicts our algorithm~$\hdpa_{\algHES}$ for solving projeceted model counting, i.e., problem~$\cESAT$. 
This algorithm~$\hdpa_{\algHES}$ takes an instance~$(F,P)$ of \PMC consisting of Boolean formula~$F$ and projection variables~$P$. 
The algorithm maintains a global, but simple $\cache$ mapping a formula to an integer, 
and consists of the following four subsequent blocks of code, which are separated by empty lines: 
(1) Preprocessing \& Cache Consolidation, 
(2) Standard Solving, 
(3) Abstraction \& Decomposition, and (4) Nested Dynamic Programming,
which causes an indirect recursion through nested table algorithm~\algHES, as discussed later. 

Block (1) spans Lines~\ref{line:bcp}
-\ref{line:cache} and performs simple preprocessing techniques~\cite{LagniezMarquis14} like \emph{unit propagation}, thereby obtaining a simplified instance $(F',P')$, where simplified formula~$F'$ of~$F$ and projection variables~$P'\subseteq P$ are obtained. 
Any preprocessing simplifications are fine, as long as the solution of the resulting \PMC instance~$(F',P')$ is the same as solving \PMC on~$(F,P)$.
Then, in Line~\ref{line:updateD}, we set the set~$A$ of abstraction variables to $P'$, 
and consolidate $\cache$ with the updated formula~$F'$. 
Note that the operations in Line~\ref{line:bcp} are required to return a simplified instance 
that preserves satisfying assignments of the original formula when restricted to $P$.
If $F'$ is not cached, in Block~(2), we do standard solving if the width is out-of-reach for nested DP, which  spans over Lines~\ref{line:sat}-\ref{line:usecache}.
More precisely, if the updated formula $F'$ does not contain projection variables, 
in Line~\ref{line:sat} we 
employ a \SAT solver returning integer $1$ or $0$. 
If~$F'$ contains projection variables and either the width obtained by heuristically decomposing~$\primal{F'}$ is above~$\text{threshold}_{\text{hybrid}}$, or the nesting depth exceeds~$\text{threshold}_{\text{depth}}$, we use a standard~\cSAT or~\cESAT solver depending on~$P'$. 

Block (3) spans Lines~\ref{line:nonesting}-\ref{line:decomposenest} and is reached if no cache entry was found in Block (1) and standard solving was skipped in Block (2). If the width of the computed decomposition is above $\text{threshold}_{\text{abstr}}$, we need to use an abstraction in form of the nested primal graph. This is achieved by choosing suitable subsets~$E \subseteq A$ of abstraction variables and decomposing~${F}_t^{E}$ heuristically.
Finally, Block (4) concerns nested DP, cf.\ Lines~\ref{line:root}-\ref{line:return}. This block relies on nested table algorithm~$\algHES$, which is given in Listing~\ref{alg:primhyb2pmc} that is almost identical to nested table algorithm~$\algNES$ as already discussed above and given in Listing~\ref{alg:primhybpmc}.
The only difference of~$\algHES$ compared to~$\algNES$ is that in Line~\ref{line:intr:hybr2pmc}
the nested table algorithm~$\algHES$ uses the parameter $\depth$ and recursively executes
algorithm~$\hdpa_{\algHES}$ on the increased nesting depth of $\depth + 1$, 
and the same formula as the one used in the generic 
$\PMC$ oracle call in Line~\ref{line:intr:hybrpmc} of Listing~\ref{alg:primhybpmc}.

As a result, our approch 
deals with high treewidth by recursively finding and decomposing 
abstractions of the graph.
If the treewidth is too high for some parts, 
tree decompositions of abstractions are used to guide standard solvers.
Towards defining an actual implementation for practical solving, 
one still needs to find values for the threshold constants
$\text{threshold}_{\text{hybrid}}$, $\text{threshold}_{\text{depth}}$, and $\text{threshold}_{\text{abstr}}$.
The actual values of these constants will be made more precisely in the next section when discussing our implementation and experiments.

\begin{figure}[t]
\centering
\begin{tikzpicture}[node distance=3mm]
\tikzset{every path/.style=thick}

%
\node (r12) [stdnode, label={[tdlabel, xshift=0em,yshift=+0em]right:${t_1}$}]{$\{a,b\}$};
\node (r2) [stdnode, right=2.5em of r12, label={[tdlabel, xshift=0em,yshift=+0em]left:${t_{2}}$}]{$\{a\}$};
\node (j) [stdnode, ultra thick, above left=of r2, xshift=-.2em, yshift=-0.25em, label={[tdlabel, xshift=0em,yshift=+0.15em]right:${t_{3}}$}]{$\{a\}$};
\node (label) [font=\scriptsize,left=of j]{${\mathcal{T}}'$:};
\node (leaf0x) [stdnode, left=1em of r12, yshift=-1.35em, label={[tdlabel, xshift=0em,yshift=+.5em]below right:$\tab{1}$}]{%
	\begin{tabular}{l}%
		\multicolumn{1}{l}{$\langle \tuplecolor{\inputPredColor}{a, b}, \tuplecolor{\statePredColor}{\text{cnt}} \rangle$}\\
		\hline\hline
		$\langle \tuplecolor{\inputPredColor}{0, 0}, \tuplecolor{\statePredColor}{1}\rangle$\\\hline
		\rowcolor{yellow}$\langle\tuplecolor{\inputPredColor}{1, 0}, \tuplecolor{\statePredColor}{1}\rangle$\\\hline
		$\langle\tuplecolor{\inputPredColor}{0, 1}, \tuplecolor{\statePredColor}{1}\rangle$\\\hline
		\rowcolor{yellow}$\langle\tuplecolor{\inputPredColor}{1, 1}, \tuplecolor{\statePredColor}{1}\rangle$\\
	\end{tabular}%
};
\node (leaf0b) [stdnodenum,left=of leaf0x,xshift=1.2em,yshift=0pt]{%
	\begin{tabular}{c}%
		\multirow{1}{*}{$i$}\\ %
		\hline\hline
		$1$ \\\hline
		$2$ \\\hline
		$3$ \\\hline
		$4$ %
	\end{tabular}%
};
\node (leaf2b) [stdnodenum,below=2em of j,xshift=-2.25em,yshift=-.5em]  {%
	\begin{tabular}{c}%
		\multirow{1}{*}{$i$}\\ %
		\hline\hline
		$1$\\
	\end{tabular}%
};
\node (leaf2) [stdnode,right=-0.4em of leaf2b,yshift=0em, label={[tdlabel, xshift=0em,yshift=0.25em]right:$\tab{3}$}]  {%
	\begin{tabular}{l}%
		\multirow{1}{*}{$\langle \tuplecolor{\inputPredColor}{a}, \tuplecolor{\statePredColor}{\text{cnt}} \rangle$}\\ %
		\hline\hline
		\rowcolor{yellow}$\langle \tuplecolor{\inputPredColor}{1}, \tuplecolor{\statePredColor}{2}\rangle$\\
	\end{tabular}%
};
\node (leaf3b) [stdnodenum,below=1.1em of r2,xshift=2.25em,yshift=.5em]  {%
	\begin{tabular}{c}%
		\multirow{1}{*}{$i$}\\ %
		\hline\hline
		$1$\\
	\end{tabular}%
};
\node (leaf3) [stdnode,right=-0.4em of leaf3b,yshift=0em, label={[tdlabel, xshift=0em,yshift=0.25em]right:$\tab{2}$}]  {%
	\begin{tabular}{l}%
		\multirow{1}{*}{$\langle \tuplecolor{\inputPredColor}{a}, \tuplecolor{\statePredColor}{\text{cnt}} \rangle$}\\ %
		\hline\hline
		\rowcolor{yellow}$\langle \tuplecolor{\inputPredColor}{1}, \tuplecolor{\statePredColor}{1}\rangle$\\
	\end{tabular}%
};
%

\coordinate (top) at ($ (leaf2.north east)+(0.6em,-0.5em) $);
\coordinate (bot) at ($ (top)+(0,-12.9em) $);

\draw [->] (j) to ($ (r12.north)$);
\draw [->] (j) to ($ (r2.north)$);

\draw [dashed, bend right=0] (leaf2) to (j);
%
\draw [dashed, bend left=20] (leaf0x) to (r12);
\draw [dashed, bend right=0] (leaf3) to (r2);
\end{tikzpicture}\qquad%
\begin{tikzpicture}[node distance=3mm]
\tikzset{every path/.style=thick}

%
\node (i1) [stdnode, ultra thick, label={[tdlabel, xshift=0em,yshift=+0em]right:${t''}$}]{$\{a\}$};
\node (label) [font=\scriptsize,left=of i1]{${\mathcal{T}}''$:};
%
\node (join) [stdnode,below=0.75em of i1, yshift=-.75em, label={[tdlabel, xshift=.0em,yshift=+0.25em]right:$\tab{{t''}}$}] {%
	\begin{tabular}{l}%
		\multirow{1}{*}{$\langle \tuplecolor{\inputPredColor}{a}, \tuplecolor{\statePredColor}{\text{cnt}} \rangle$}\\
		\hline\hline
		\rowcolor{yellow}$\langle \tuplecolor{\inputPredColor}{1}, \tuplecolor{\statePredColor}{2} \rangle$\\
	\end{tabular}%
};
\node (joinb) [stdnodenum,left=-0.45em of join] {%
	\begin{tabular}{c}
		\multirow{1}{*}{$i$}\\
		\hline\hline
		$1$\\
	\end{tabular}%
};
\coordinate (top) at ($ (leaf2.north east)+(0.6em,-0.5em) $);
\coordinate (bot) at ($ (top)+(0,-12.9em) $);

\draw [dashed, bend right=25] (i1) to (join);
%
%
%
\end{tikzpicture}
\caption{Selected tables obtained by nested DP using~$\adpa_{\algHES}$ on  TD~${\mathcal{T}}'$ of~$\nested{F}{\{a,b\}}$ (left) and on TD $\TTT''$ of~$\nested{F}{\{a\}}$ (right) for instance~$(F,\{a,b\})$ of Example~\ref{ex:pmc}
.}
\label{fig:hybrid}
\end{figure}

\begin{algorithm}[t]
  \KwData{%
Nesting $\depth\geq 0$, 
node~$t$, 
bag $\chi_t$, bag formula~$F_t$, 
  nested bag formula~$F_t^A$, 
  projection variables~$P\subseteq\var(F_t^A)$, 
    and sequence $\Tab{}=\langle \tab{1},\ldots \tab{\ell}\rangle$ of child tables  of~$t$.
}
\KwResult{Table~$\tab{t}.$} \lIf(\hspace{-1em})
  {$\type(t) = \leaf$}{%
    $\tab{t} \leftarrow \{ \langle
    \tuplecolor{\specialPredColor}{\emptyset},
    \tuplecolor{\statePredColor}{1} \rangle \}$%
  }%
  \uElseIf{$\type(t) = \intr$ and
    $a\hspace{-0.1em}\in\hspace{-0.1em}\chi(t)$ is introduced}{ %
    \makebox[3.3cm][l]{$\hspace{-0em}\tab{t} \leftarrow \{ \langle
      \tuplecolor{\specialPredColor}{J},
      \tuplecolor{\statePredColor}{c'{\cdot}c} \rangle$
    }%
    \hspace{-1em}$|\;
    \langle \tuplecolor{\specialPredColor}{I},
    \tuplecolor{\statePredColor}{c} \rangle \in \tab{1}, 
    %
    J= I\cup\{a\mapsto v\}, v\in\{0,1\}, J \models F_t, c'>0, c'{=}\hspace{-4em}$ \newline 
	\makebox[19.5em]{}$\hdpa_{\algHES}(\depth{+}1,F_t^A[J], P) \} \hspace{-3em}$
    %
    %
     \label{line:intr:hybr2pmc}%
  }\vspace{-0.05em}%
  \uElseIf{$\type(t) = \rem$ and $a \not\in \chi(t)$ is removed}{%
    %
    %
    $\text{C}(I) \leftarrow \{ \langle J, c\rangle \SM \langle {J}, {c}\rangle \in \tab{1}, J\setminus\{a \mapsto 0, a \mapsto 1\} = I\setminus\{a \mapsto 0, a \mapsto 1\} \SE$\\
\makebox[5cm][l]{$\tab{t} \leftarrow \{ \langle
      \tuplecolor{\specialPredColor}{I\setminus\{a \mapsto 0, a \mapsto 1\}}$,
      %
    %
      \tuplecolor{\statePredColor}{%
        $\sum_{\langle J, c\rangle \in C(I)} c $}$\rangle$
    }
 \hspace{3.5em}$\mid \langle \tuplecolor{\specialPredColor}{I},
    \tuplecolor{\statePredColor}{\cdot} \rangle \in \tab{1}    \}\hspace{-15em}$\;
  } %
  \uElseIf{$\type(t) = \join$}{%
    \makebox[3.3cm][l]{$\tab{t} \leftarrow \{ \langle
      \tuplecolor{\specialPredColor}{I},
      \tuplecolor{\statePredColor}{c_1 \cdot c_2}
      \rangle$}\hspace{9em}$|\;\langle \tuplecolor{\specialPredColor}{I},
    \tuplecolor{\statePredColor}{c_1} \rangle \in \tab{1}, \langle
    \tuplecolor{\specialPredColor}{I},
    \tuplecolor{\statePredColor}{c_2} \rangle \in \tab{2}
    \}\hspace{-5em}$
    \vspace{-0.25em}
  } %
  \Return $\tab{t}$ \vspace{-0.25em}
  \caption{Nested table algorithm~$\algHES(\depth, t,\chi_t, F_t, F_t^A, P, \Tab{})$ for solving \PMC.}
  \label{alg:primhyb2pmc}
\end{algorithm}%

\begin{example}\label{ex:hybrid}
Recall instance~$(F,A)$ of Example~\ref{ex:pmc}, and set~$A$ of abstraction variables as well as TD~$\TTT'$ of nested primal graph~$\nested{F}{A}$ as given in Example~\ref{ex:abstracttd}.
Further, recall that restricted to projection set~$A$, formula $F$ has two satisfying assignments. 
%
Figure~\ref{fig:hybrid} (left) shows TD~$\TTT'$ of~$\nested{F}{A}$ and 
tables obtained by~$\adpa_{\algHES}(0,(F,A),A,\TTT')$ for solving projected model counting on~$(F,A)$. 
%

Note that nested table algorithm~$\algHES$ of Listing~\ref{alg:primhyb2pmc} works similar to the nested table algorithm~\algNES of Listing~\ref{alg:primhybpmc}, 
but it calls $\hdpa_{\algHES}$ recursively. 
We briefly discuss executing~$\algHES$ in the context of Line~\ref{line:nested} of algorithm~$\hdpa_{\algHES}$ on node $t_1$, resulting in table~\tab{1} as shown in Figure~\ref{fig:hybrid} (left).
Recall that 
$F_{t_1}^A=\{\{\neg a, b, p_1\},\{a,\neg b, \neg p_1\}\}$.
Then, in Line~\ref{line:intr:hybr2pmc} of algorithm~$\algHES$, 
for each assignment~$J$ to~$\{a,b\}$ of each row of~\tab{1}, 
we compute $\hdpa_{\algHES}(\depth+1, F_{t_1}^A[J], \emptyset)$. 
Each of these recursive calls, however, is already solved by unit propagation (preprocessing), e.g., $F_{t_1}^A[\{a\mapsto 1, b \mapsto 0\}]$ of Row 2 simplifies to $\{\{p_1\}\}$. 

Figure~\ref{fig:hybrid} (right) shows TD~$\TTT''$ of~$\nested{F}{E}$ with~$E\eqdef\{a\}$, and tables obtained by algorithm $\adpa_{\algHES}(0,(F,A),E,\TTT'')$.
Still, $F_{t''}^E[J]$ for a given assignment~$J$ to~$\{a\}$ of any row~$r\in\tab{t''}$ can be simplified. 
Concretely, $F_{t''}^E[\{a\mapsto 0\}]$ evaluates to $\emptyset$ and $F_{t''}^E[\{a\mapsto 1\}]$ evaluates to clause~$\{b, c\}$. 
Thus, restricted to $\{b\}=A\setminus\{a\}$, there are 2 satisfying assignments~$\{b\mapsto 0\}$, $\{b\mapsto 1\}$ of~$F_{t''}^E[\{a\mapsto 1\}]$.
\end{example}

\section{Hybrid Dynamic Programming in Practice}\label{sec:implreal}

Below, in Section~\ref{sec:hdp} we present an implementation of hybrid dynamic programming in order to solve the problems \cSAT as well as \PMC. This is then followed by an experimental evaluation and discussion of the results in Section~\ref{sec:experiments}\FIX{, where we also briefly elaborate on existing techniques of state-of-the-art solvers.}

\subsection{Implementing Hybrid Dynamic Programming}\label{sec:hdp}

We implemented a solver \solver{}\footnote{\solver{} is open-source and available at~\href{https://github.com/hmarkus/dp\_on\_dbs/tree/nesthdb}{github.com/hmarkus/dp\_on\_dbs/tree/nesthdb}. Instances and detailed results are available online at: \href{https://tinyurl.com/nesthdb}{tinyurl.com/nesthdb}.\label{foot:source}} based on hybrid dynamic programming
%
in Python3 and using table manipulation techniques by means of \emph{structured query language (SQL)} and the \emph{database management system (DBMS)} PostgreSQL. 
\FIX{Our solver builds upon the recently published prototype \dpdb~\cite{FichteEtAl20}, which applied a DBMS for the efficient implementation of plain dynamic programming algorithms. This \dpdb prototype provides a basic framework for implementing plain dynamic programming algorithms, which can be specified in the form of a plain table algorithm, e.g., the one of Listing~\ref{fig:prim}.
%
However, this system does not have support for neither hybrid nor nested dynamic programming.
In order to compare plain \dpdb and our solver \solver{} in a fair way, for both systems we used
the most-recent 
version 12 of PostgreSQL and we let it operate on a \emph{tmpfs-ramdisk} instead of disk space (HDD/SDD), i.e., within the main memory (RAM) of a machine.
In both \dpdb as well as our solver \solver{}, the DBMS serves the purpose of extremely efficient in-memory table manipulations and query optimization required by nested DP, and therefore~\solver benefits from database technology.
Those benefits are already available in the form of different and efficient join manipulations that are selected based
on several heuristics that are invoked during SQL query optimizing.
Note that especially efficient join operations have been already designed, implemented, combined, and tuned for decades~\cite{Ullman89,Garcia-MolinaUllmanWidom09,ElmasriNavathe16}.
Therefore it seems more than natural to rely on this technological advancement that database theory readily provides.%
}
%
%
We are certain that one can easily replace PostgreSQL by any other state-of-the-art relational database that uses standard SQL in order to express queries.
In the following, we briefly discuss implementation specifics that are crucial for a performant system that is competitive with state-of-the-art solvers. 
%

\medskip
\noindent\textbf{Nested DP \& Choice of Standard Solvers.}
We implemented dedicated nested DP algorithms for solving~\cSAT and \cESAT, where we do (nested) DP up to $\text{threshold}_{\text{depth}}=2$.
\FIX{Note that incrementing nesting depth results in getting again exponentially many (in the largest bag size) 
rows for each row of tables of the previous depth, i.e., a low nesting limit is highly expected. Currently,
we do not see a way to efficiently solve instances of higher nesting depth, which might change in case of
further advances allowing to decrease table sizes obtained during dynamic programming.
} 
Further, we set $\text{threshold}_{\text{hybrid}}=1000$ and therefore we do not ``fall back'' to standard solvers based on the width (cf., Line~\ref{line:depth} of Listing~\ref{fig:hdpontd}), but based on the nesting depth.

Also, the evaluation of the nested bag formula is ``shifted'' to the database if it uses \emph{at most}~$40$ abstraction variables, 
since PostgreSQL efficiently handles these small-sized Boolean formulas.
Thereby, further nesting is saved by executing optimized SQL statements within the TD nodes.
A value of 40 seems to be a nice balance between the overhead caused by standard solvers for small formulas and 
exponential growth counteracting the advantages of the DBMS.
%
%
For hybrid solving,
we use \cSAT solver \sharpsat~\cite{Thurley06a} 
and for~\cESAT we employ the recently published~\cESAT 
solver \projmc~\cite{LagniezMarquis19}, solver \sharpsat and \SAT solver \picosat~\cite{Biere08}.
Observe that our solver immediately benefits from better standard solvers
and further improvements of the solvers above.

\medskip
\noindent\textbf{Choosing Non-Nesting Variables \& Compatible Nodes.}
TDs are computed by means of heuristics via decomposition library 
\htd~\cite{AbseherMusliuWoltran17a}.
\FIX{%
  We implement a heuristic for finding practically sufficient
  abstractions,~i.e., abstraction variables for the nested primal
  graph, in reasonable using an external solver.
Therefore, we encode our heuristic into two logic programs (ASP) for
solver \clingo~\cite{GebserEtAl19}, which includes techniques for fast
solving reachability via nesting paths.
The encodings, which in total comprise 11 lines, are publicly
available in the online repository of \solver.
}
\FIX{
  Technically, our focus is on avoiding extremely large abstractions
  at the cost of larger nested bag formulas. Still, nesting allows to
  obtain refined abstractions again at higher depths. Thereby, we
  achieve a good trade off between runtime and quality.
}

\FIX{By the first encoding (``guess\_min\_degree.lp''), we compute a \emph{reasonably-sized subset of
    vertices} of smallest degree, more precisely, such that the number of neighboring vertices not in the set 
is minimized.
We take a subset of size at most~$95$, which turned out to be practically useful.
We run the ASP solver \clingo for \emph{up to~$10$ seconds}. The
solver might not return an optimum within $10$ seconds, but always
returns a subset of vertices that can be used subsequently.
}
%

\FIX{By the second encoding (``guess\_increase.lp''), we guess
among the thereby obtained subset of vertices of preferably smallest degree, 
a preferably maximal set~$A$ of at most~$64$ abstraction variables such 
 the resulting graph~$\nested{F}{A}$ is reasonably sparse, 
which is achieved by minimizing the number of edges of~$\nested{F}{A}$.
To this end, we also use built-in (cost) optimization, where we take the best results obtained by \clingo after running \emph{at most 35 seconds}.
For more details on ASP, we refer to introductory texts~\cite{GebserKaufmannSchaub09a,GebserEtAl19}.
}%

\FIX{%
We expect that this approach, which driven mostly by practical considerations, can be improved. Furthermore, it can also be extending by problem-specific as well as domain-specific information, which might help in choosing promising abstraction variables~$A$.
%
}

As rows of tables during (nested) DP can be independently computed
and parallelized~\cite{FichteHecherZisser19},
hybrid solver \solver potentially calls standard solvers for solving subproblems in parallel using a thread pool.
Thereby, the uniquely compatible node for relevant compatible sets~$U$, as denoted in this paper by means of~$\compat(\cdot)$,
 is decided during runtime among compatible nodes on a first-come-first-serve basis.

\subsection{Experimental Evaluation}\label{sec:experiments}

In order to evaluate the concept of hybrid dynamic programming,
we conducted a series of experiments considering a variety of solvers and benchmarks,
both for model counting (\sharpSAT) as well as projected model counting (\PMC).
During the evaluation we thereby compared the performance of 
algorithm $\hdpa_{\algHES}$ of Listing~\ref{fig:hdpontd}.
We benchmarked this algorithm both for the projected model counting problem, but also for the special case of model counting, where all variables are projection variables.


\begin{figure}[t]\centering
\includegraphics[scale=.56]{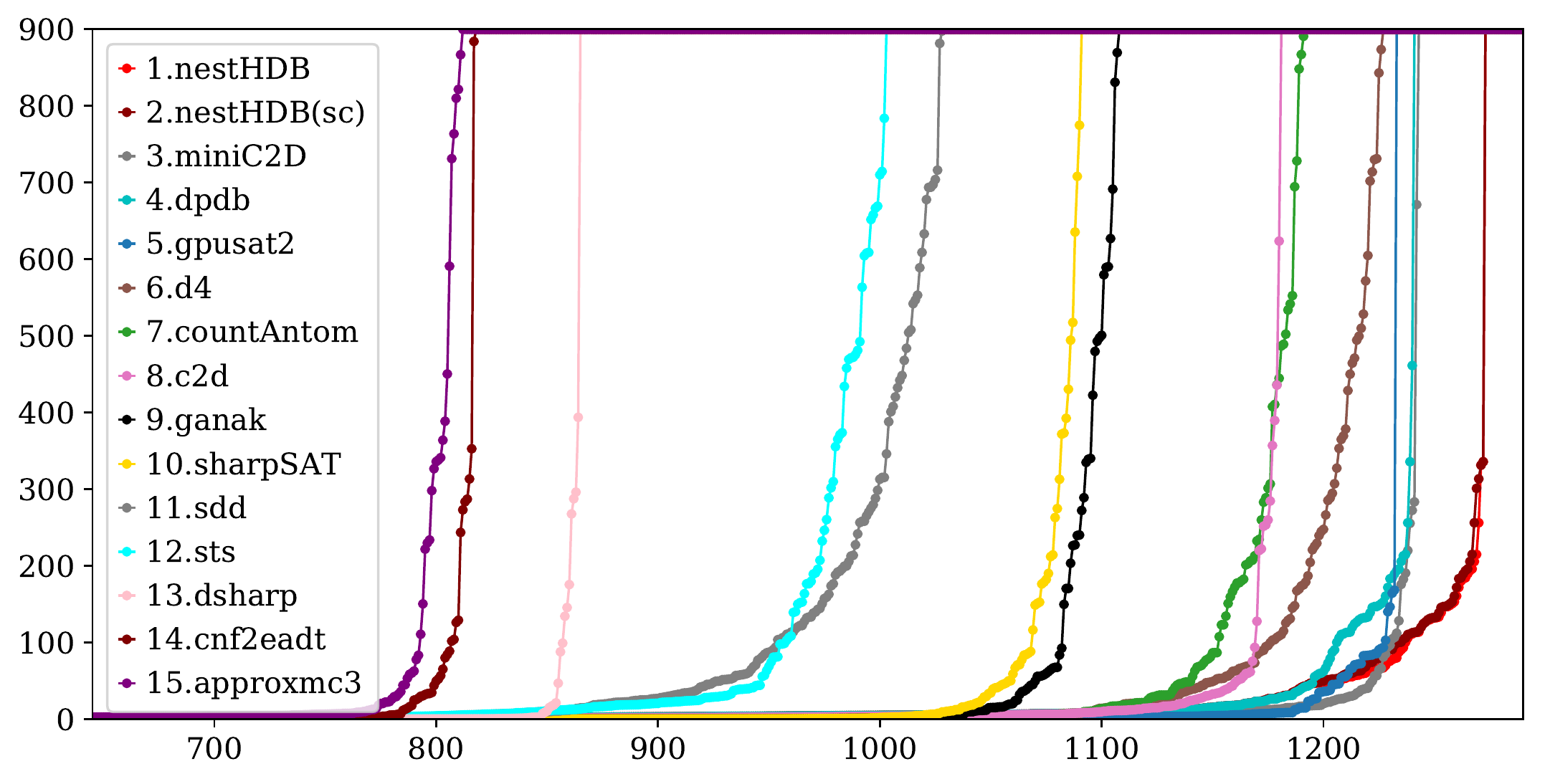}
\caption{Cactus plot of instances for~\cSAT, where instances (x-axis) are ordered for each solver individually by runtime[seconds] (y-axis).
$\text{threshold}_{\text{abstr}}=38$. 
}
\label{fig:sat}
\end{figure}

\paragraph{Benchmarked Solvers \& Instances} 
We benchmarked \nesthdb and 16 other publicly 
available~\cSAT solvers on 1,494 instances recently considered~\cite{FichteEtAl20}.
\FIX{Most of the existing solvers
of other approaches are single-core solvers, which adhere to
three different techniques, namely \emph{knowledge-compilation based},
\emph{caching based}, and \emph{approximate}.
Among the knowledge-compilation based solvers, which aim
to obtain compact representations of the formulas that are concise and
easier to solve,
are 
{\textsf{miniC2D}}~\cite{OztokDarwiche15a}, 
{\textsf{d4}}~\cite{LagniezMarquis17a},
{\textsf{c2d}}~\cite{Darwiche04a},
{\textsf{sdd}}~\cite{Darwiche11a},
{\textsf{dsharp}}~\cite{MuiseEtAl12a},
{\textsf{cnf2eadt}}~\cite{KoricheLagniezMarquisThomas13a}, and
{\textsf{bdd\_{}minisat}}~\cite{TodaSoh15a}.
These solvers use different variants and flavors of knowledge-compilation,
thereby finding decent trade-offs between the time needed to obtain those representations and their succinctness.
We also considered the caching-based solvers
{\textsf{cachet}}~\cite{SangEtAl04},
{\textsf{sharpSAT}}~\cite{Thurley06a}, and
{\ganak}~\cite{SharmaEtAl19}, which employ existing \SAT-based 
solvers by sophisticated caching techniques.
%
Finally, among the approximate counters, we focused on
{\textsf{sts}}~\cite{ErmonGomesSelman12a},
{\textsf{sharpCDCL}}~\cite{KlebanovEtAl13}, and
{\textsf{approxmc3}}~\cite{ChakrabortyEtAl14a},
which employ sampling-based techniques to approximately obtain the model counts.
Our comparison also included the multi-core solvers
\dpdb~\cite{FichteEtAl20},
{\textsf{gpusat2}}~\cite{FichteHecherZisser19}, which is also based on dynamic programming and 
uses massively parallel graphics processing units (GPUs), as well as
\textsf{countAntom}~\cite{BurchardSchubertBecker15a}, which relies on sophisticated techniques
for work-balancing.
For a more ample description of the used techniques, we refer to the 
model counting competition report~\cite{FichteHecherHamiti20}.
}
%
  Note that we excluded distributed solvers such as
  dCountAntom~\cite{BurchardSchubertBecker16a} and
  DMC~\cite{LagniezMarquisSzczepanski18a} from our experimental
  setup. Both solvers require a cluster with access to the OpenMPI
  framework~\cite{GabrielFaggBosilca04} and fast physical
  interconnections. Unfortunately, we do not have access to OpenMPI on
  our cluster. Nonetheless, our focus are shared-memory systems and
  since \dpdb might well be used in a distributed setting, it leaves an
  experimental comparison between distributed solvers that also
  include \dpdb as subsolver to future work.
%
%
%
While \nesthdb itself is a multi-core solver,
we additionally included in our comparison \solversc, which is \nesthdb,
but restricted to a single core only.
The instances~\cite{FichteEtAl20} we took are \emph{already preprocessed} by \pmcp~\cite{LagniezMarquis14} using recommended options \texttt{-vivification -eliminateLit -litImplied -iterate=10 -equiv -orGate -affine}, which guarantee that the model counts are preserved.
However, \nesthdb still uses \pmcp with these options in Line~\ref{line:bcp} of Listing~\ref{fig:hdpontd}, \FIX{which is used 
in the light of nested bag formulas that appear due to nesting}.

Further, we considered the problem~\cESAT, where we compare solvers \projmc~\cite{LagniezMarquis19}, \clingo~\cite{GebserEtAl19}, \ganak~\cite{SharmaEtAl19}, \nesthdb$^{\hspace{-.25em}\ref{foot:source}}$, and \solversc on 610 publicly available instances\footnote{Sources: \hspace{-0.2em}\href{https://tinyurl.com/projmc}{tinyurl.com/projmc};\hspace{0.2em}\href{https://tinyurl.com/pmc-fremont-01-2020}{tinyurl.com/pmc-fremont-01-2020}.} from~\projmc (consisting of 15 \emph{planning}, 60 \emph{circuit}, and 100 \emph{random} instances) and Fremont, with 170 \emph{symbolic-markov} applications, and 265 \emph{misc} instances.
\FIX{For simplifying nested bag formulas under assignments encountered due to nesting in Line~\ref{line:bcp} of Listing~\ref{fig:hdpontd}}, \nesthdb
uses \pmcp as before, but \emph{without options} \texttt{-equiv -orGate -affine} to ensure preservation of models (equivalence).

\begin{figure}[tb]
  \centering
    \smaller
    \begin{tabular}{{r|l|rrrrHH|r|rHr}}
      \toprule
      {bench-} & \multirow{2}{*}{solver} & \multicolumn{4}{c}{tw upper bound} & best & unique & \multirow{2}{*}{$\sum$} & time & rank \\
      {mark set} & & $\text{max}$ & 0-30 & 31-50 & $>$50 & best & unique &  & [h] & rank \\
\midrule
%
%
%
planning &\nesthdb  & \textbf{30} & \textbf{7} & {0} & 0 & 1 & 0 & \textbf{7} & \textbf{2.88} \\
&{\solversc}  & \textbf{30} & \textbf{7} & {0} & 0 & 1 & 0 & \textbf{7} & {3.31} \\
&\projmc & {26} & {6} & 0 & {0} & 5 & 0 & 6 & {3.01} \\
& \ganak & 19 & 5 & 0 & 0 & 1 & 0 & 5 & 3.36\\ 
& \clingo & 4 & 1 & 0 & 0 & 0 & 0 & 1 & 4.00 \\
%
\midrule
circ &\nesthdb & \textbf{99} & \textbf{34} & \textbf{10} & \textbf{16} & \textbf{20} & \textbf{0} & \textbf{60} & {2.10} \\
&{\solversc} & \textbf{99} & \textbf{34} & {4} & {14} & {20} & \textbf{0} & {52} & {4.60} \\
&\projmc & {91} & {28} & \textbf{10} & {11} & \textbf{43} & {0} & {49} & 6.23 \\
&\ganak & \textbf{99} & \textbf{34} & \textbf{10} & \textbf{16} & 58 & 0 & \textbf{60} &  \textbf{1.21} \\
&\clingo & \textbf{99} & 31 & \textbf{10} & \textbf{16} & 2 & 0 & 57 &  4.44 \\
%
\midrule
random & \nesthdb & {79} & \textbf{30} & \textbf{20} & \textbf{17} & \textbf{11} & \textbf{0} & \textbf{67} & \textbf{10.91} \\
& \solversc & {79} & \textbf{30} & \textbf{20} & {15} & \textbf{11} & \textbf{0} & {65} & {11.29} \\
& \projmc & \textbf {84} & \textbf{30} & \textbf{20} & {15} & \textbf{62} & 0 & {65} & 11.09 \\
& \ganak & 19 & 19 & 0 & 0 & 4 & 0 & 19 & 23.18 \\
& \clingo & 24 & 25 & 0 & 0 & 0 & 0 & 25 & 21.38 \\
\midrule
markov & \nesthdb & {23} & {62} & {0} & {0} & {30} & {0} & {62} & {31.98} \\
& \solversc & {23} & {61} & {0} & {0} & {30} & {0} & {61} & {32.54} \\
& \projmc& {8} & 54 & 0 & {0} & 54 & 14 & {54} & 33.65 \\
& \ganak & \textbf{59} & \textbf{64} & 0 & \textbf{4} & 31 & 6 & \textbf{68} & \textbf{30.32} \\
& \clingo & 3 & 38 & 0 & 0 & 0 & 0 & 38 &  37.54 \\
\midrule
misc & \nesthdb & {47} & \textbf{38} & \textbf{17} & {0} & \textbf{7} & 0 & \textbf{55} & 46.12 \\
& \solversc & {47} & \textbf{38} & {13} & {0} & \textbf{7} & 0 & {51} & 48.20 \\
& \projmc & {47} & \textbf{38} & {13} & {0} & 50 & 0 & {51} & {45.90} \\
& \ganak & 44 & \textbf{38} & 15 & 0 & 6 & 3 & 53 &  \textbf{45.72} \\
& \clingo & \textbf{63} & \textbf{38} & 15 & \textbf{1} & 27 & \textbf{1} & 54 &  44.79 \\
\midrule
\midrule
$\Sigma$ & \nesthdb & \textbf{99} & \textbf{171} & \textbf{47} & \textbf{33} & \textbf{69/74} & \textbf{0/256} & \textbf{251} & \textbf{93.99} \\
& \solversc & \textbf{99} & {170} & {37} & {29} & \textbf{69/74} & \textbf{0/256} & {236} & {99.95} \\
& \projmc & {91} & 156 & 43 & 26 & 215 & 15 & 225 & 99.88 \\
& \ganak & \textbf{99} & 160 & 25 & 20 & 100 & 9 & 205 & 103.78 \\
& \clingo & \textbf{99} & 133 & 25 & 17 & 215 & 15 & 175 & 112.15 \\
    \bottomrule
  \end{tabular}%
  \caption{%
    Number of solved~\cESAT instances, grouped by upper bound intervals of treewidth. time[h] is cumulated wall clock time, timeouts count as 900s. $\text{threshold}_{\text{abstr}}{=}8$.
  }%
\label{table:pmc}
\end{figure}

\begin{figure}\centering
\includegraphics[scale=.6]{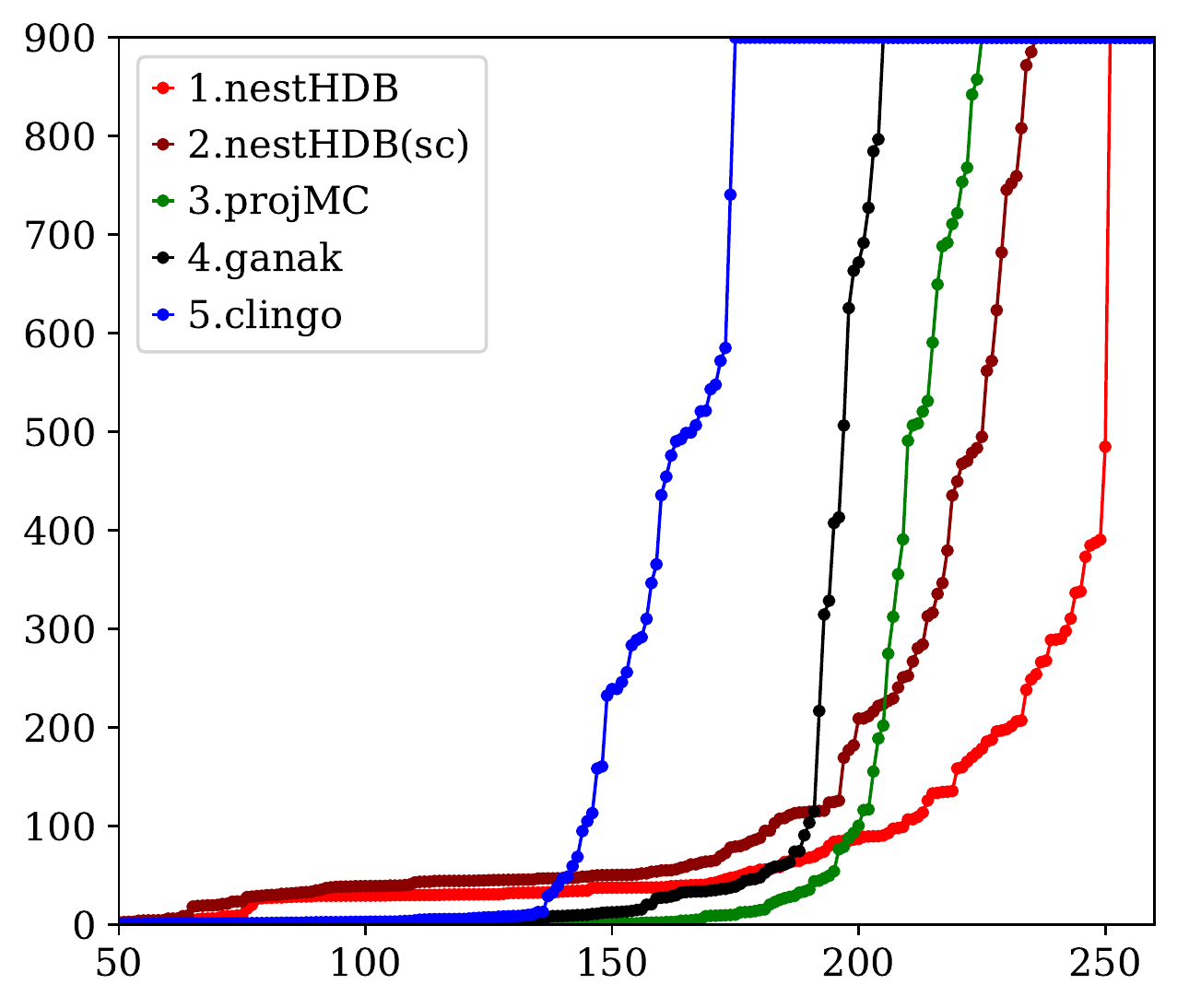}
\caption{%
    Cactus plot showing the number of solved~\cESAT instances, where the x-axis shows for each solver (configuration) individually, the number of instances ordered by increasing runtime. time[h] is cumulated wall clock time, timeouts count as 900s. $\text{threshold}_{\text{abstr}}{=}8$.
  }%
\label{table:pmc2}
\end{figure}

\paragraph{Benchmark Setup}
Solvers ran on a cluster of 12 nodes. 
Each node of the cluster is equipped
with two Intel Xeon E5-2650 CPUs consisting of 12 physical cores each
at 2.2 GHz clock speed, 256 GB RAM. %
For \dpdb and \nesthdb, 
we used PostgreSQL 12 on a tmpfs-ramdisk (/tmp) that could grow up to 
at most 1 GB per run.
Results were gathered on Ubuntu~16.04.1 LTS machines with disabled hyperthreading on kernel~4.4.0-139.
We mainly compare total wall clock time and number of timeouts. 
%
%
For parallel solvers~(\dpdb, \countAntom, \nesthdb) we 
allow 12 physical cores. 
%
%
%
%
%
Timeout is 900 seconds and RAM is limited to~16 GB per instance and solver.
Results for \gpusat are taken from~\cite{FichteEtAl20}, where a machine equipped with a consumer GPU is used:
Intel Core i3-3245 CPU operating at 3.4 GHz, 16 GB RAM, and one
Sapphire Pulse ITX Radeon RX 570 GPU running at 1.24 GHz with 32
compute units, 2048 shader units, and 4GB VRAM using driver
amdgpu-pro-18.30-641594 and OpenCL~1.2.
The system operated on Ubuntu~18.04.1 LTS with kernel 4.15.0-34.
%

\paragraph{Benchmark Results} 
%
%
The results for~\cSAT showing the best 14 solvers are summarized 
in the cactus plot of Figure~\ref{fig:sat}.
Overall it shows~\nesthdb among the best solvers, solving 1,273 instances.
The reason for this is, compared to~\dpdb, that \nesthdb
can solve instances using TDs of primal graphs
of widths larger than 44, up to width 266.
This limit is even slightly larger than the width of 264 that \sharpsat
on its own can handle.
We also tried using~\minic instead of \sharpsat as standard solver
for solvers \nesthdb and \solversc, 
but we could only solve one instance more.
Notably, \solversc has about the same performance as \nesthdb,
indicating that parallelism does not help much on the instances.
Further, we observed that the employed simple cache as used in Listing~\ref{fig:hdpontd}, 
provides only a marginal improvement.
%
%
%

Figure~\ref{table:pmc} depicts a table of results on \cESAT, where 
we observe that \nesthdb does a good job on instances with low
widths below $\text{threshold}_{\text{abstr}}=8$ (containing ideas of~\dpdb), 
but also on widths well above~$8$  (using nested DP).
Notably, \nesthdb is also competitive on widths well above~$50$.
Indeed, \nesthdb and \solversc perform well
on all benchmark sets, whereas on some sets the solvers \projmc, \clingo and \ganak
are faster.
Overall, parallelism provides a significant improvement here,
but still \solversc shows competitive performance,
which is also visualized in the cactus plot of Figure~\ref{table:pmc2}.
%
Figure~\ref{fig:scatter} shows scatter plots comparing \nesthdb to \projmc (left)
and to \ganak (right).
Overall, both plots show that \nesthdb solves more instances,
since in both cases the y-axis shows more black dots at 900 seconds than the x-axis.
Further, the bottom left of both plots shows that
there are plenty easy instances that can be solved by \projmc and \ganak
in well below 50 seconds, where \nesthdb needs up to 200 seconds.
Similarly, the cactus plot given in Figure~\ref{table:pmc2} shows
that \nesthdb can have some overhead compared to the three standard solvers, which is not surprising.
This indicates that there is still room for improvement
if, e.g., easy instances are easily detected, and if standard solvers
are used for those instances.
Alternatively, one could also just run a standard solver for at most 50 seconds
and if not solved within 50 seconds, the heavier machinery of nested dynamic programming
is invoked.
Apart from these instances, Figure~\ref{fig:scatter} shows that \nesthdb 
solves harder instances faster, where standard solvers struggle.

\begin{figure}[t]\centering
\includegraphics[scale=.34]{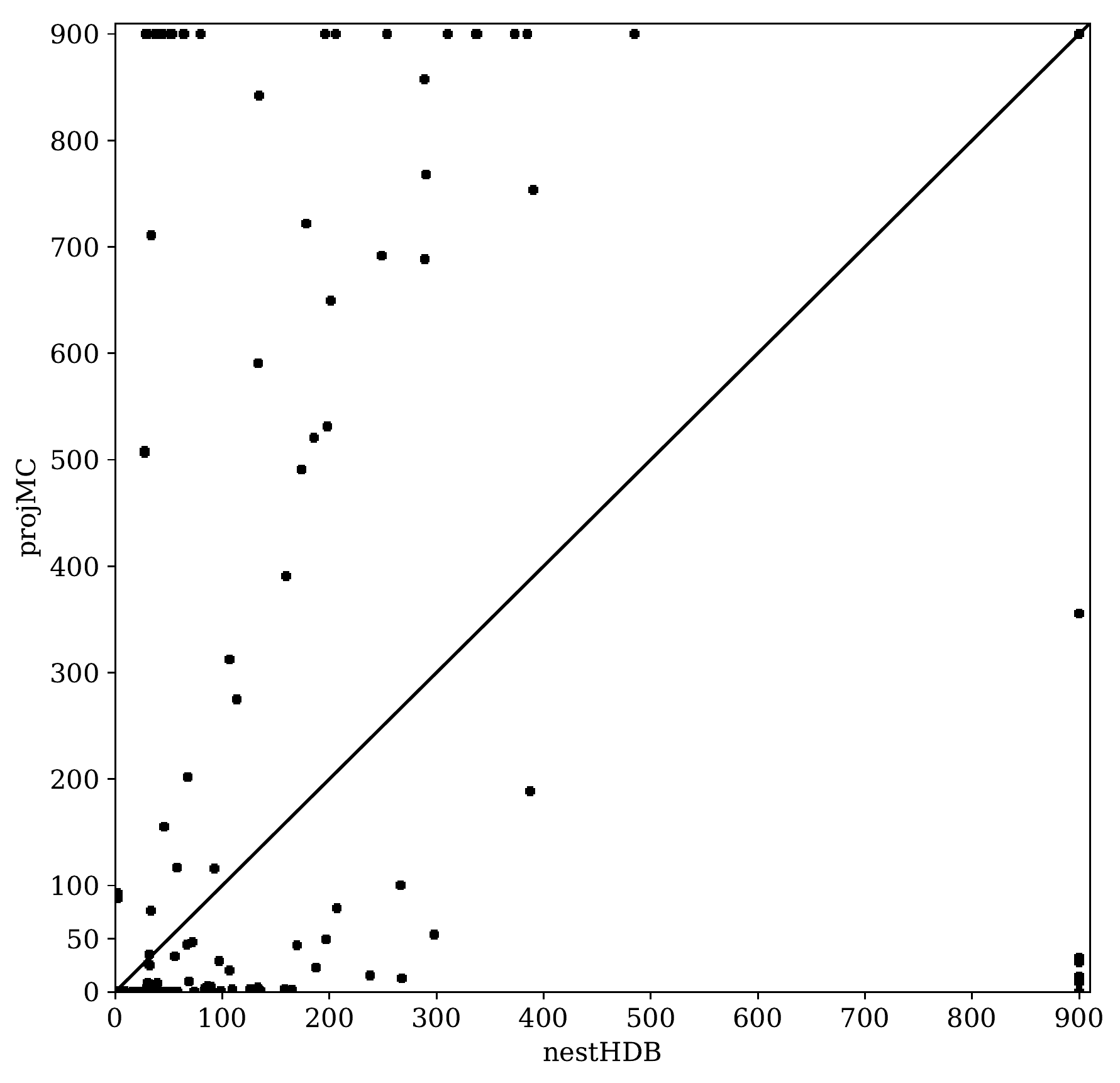}
\includegraphics[scale=.34]{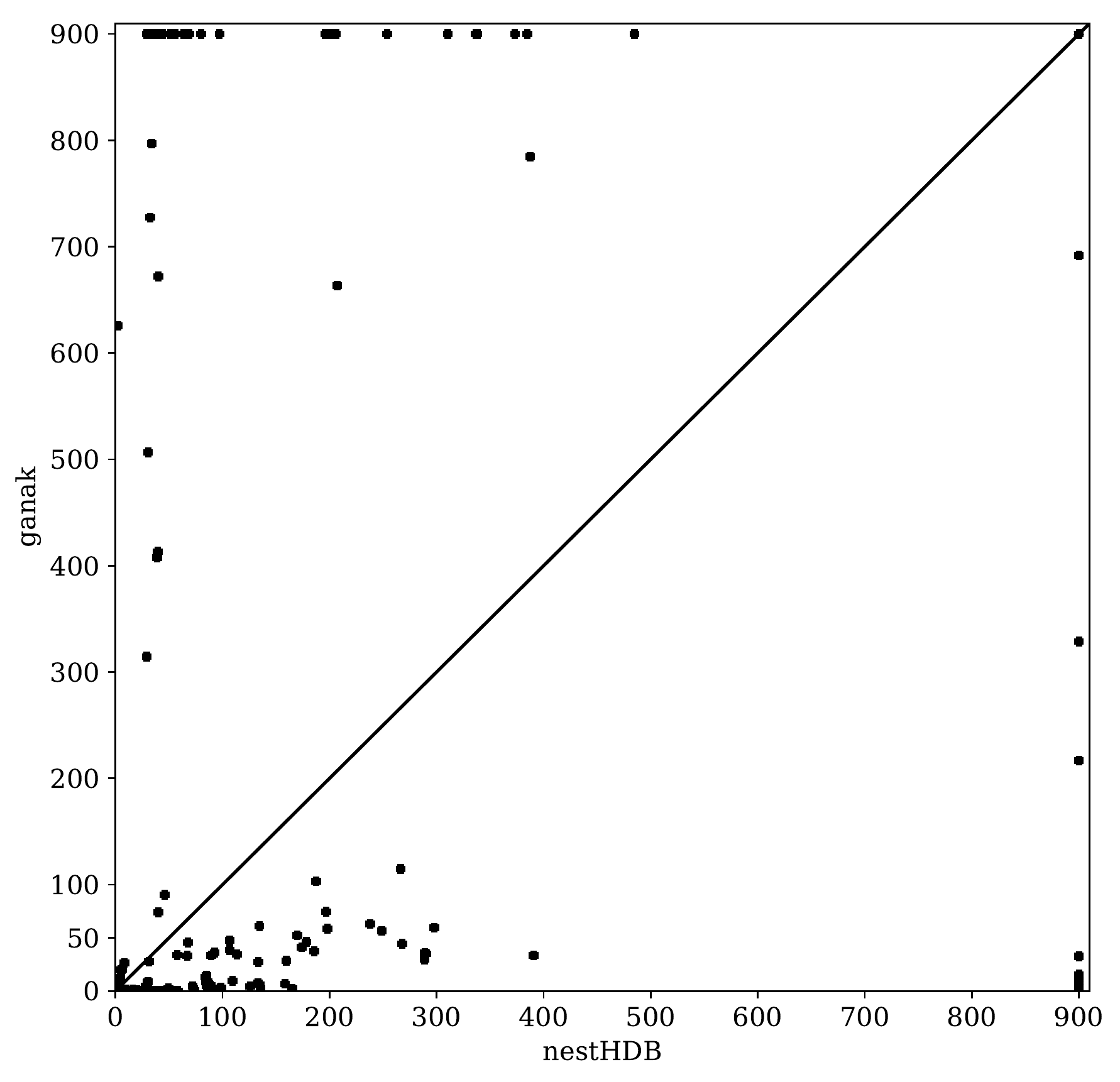}
\caption{Scatter plot of instances for~\cESAT, where the x-axis shows runtime in seconds of \nesthdb compared to the y-axis showing runtime of \projmc (left) and of \ganak (right).
$\text{threshold}_{\text{abstr}}=8$.
}
\label{fig:scatter}
\end{figure}

\section{Discussion and Conclusion}\label{sec:conclusions}
We introduced a dynamic programming algorithm to solve projected model
counting (\PMC) by exploiting the structural parameter treewidth. Our
algorithm is asymptotically optimal under the exponential time
hypothesis (ETH). Its runtime is double exponential in the treewidth
of the primal 
graph of the instance and polynomial in the size of the input
instance. We believe that our results can also be extended to another
graph representation, namely the incidence graph.
Our approach 
is very general and might be applicable to a wide range of other hard
combinatorial problems, such as projection for
ASP~\cite{FichteEtAl17a} and QBF~\cite{CharwatWoltran16a}.

Then, in order to still efficiently deal with projected model counting in practice, we presented 
nested dynamic programming (nested DP) using
%
different levels
of abstractions, 
which are subsequently refined and
solved recursively.
This approach is complemented with hybrid solving,
where 
(search-intense) subproblems 
are solved by standard solvers.
%
%
We provided nested DP algorithms for problems related to Boolean satisfiability, but the idea can be easily applied for other formalisms.
We implemented some of these algorithms and our benchmark results are promising.

\FIX{In the light of related works on properties for efficient counting algorithms, e.g.,~\cite{DurandMengel13,ChenMengel17,GrecoScarcello17}, we are curious to revisit some of those and potentially study precise runtime dependencies. 
We expect interesting insights when focusing on the search for properties of local instance parts that in combination with treewidth allow algorithms that are significantly better than double-exponential in the treewidth. 
As we demonstrated in the experimental results, we can solve the problem \PMC that theoretically requires double-exponential worst-case effort in the treewidth, on instances of decent treewidth upper bounds (up to 99). On plain model counting (\sharpSAT), which is only single-exponential in the treewidth, our solver even deals with instances of larger treewidth upper bounds (up to 260).
This also opens the question of whether similar empirical observations can be drawn in other areas and formalisms like constraint solving or database query languages.
}
Further, we plan on deeper studies of problem-specific abstractions, in particular for quantified Boolean formulas.
%
%
We want to further tune our solver parameters (e.g., thresholds, timeouts, sizes), deepen interleaving with \PMC solvers like \projmc,
and to use incremental solving for obtaining abstractions and evaluating nested bag formulas, 
where intermediate solver references are kept during dynamic programming and formulas are iteratively added and (re-)solved. 
%

\appendix
\newpage

\section*{Acknowledgements}\noindent
    This work has been supported by the Austrian Science Fund (FWF),
    Grants J4656, P32830 and Y698, 
as well as the Vienna Science and Technology Fund, Grant WWTF ICT19-065. Hecher is also affiliated with
  the 
  University of Potsdam, Germany.
  The main research was conducted while Fichte and Morak were affiliated with TU Vienna, Austria.
  Part of the research was carried out while Hecher and Fichte were visiting the Simons Institute for the Theory of Computing at UC Berkeley.
  We thank the anonymous reviewers for their detailed comments. 
\bibliography{references}


\longversion{
\appendix
\newpage




\section{Omitted Proofs}

\begin{restateobservation}[obs:relation]%
\begin{observation}
The relation~$\bucket$ is an equivalence relation.
\end{observation}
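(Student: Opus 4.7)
The plan is to verify the three defining properties of an equivalence relation — reflexivity, symmetry, and transitivity — directly from the definition of $\bucket$. Since $(\vec u, \vec v) \in \bucket$ precisely when $I(\vec u) \cap P = I(\vec v) \cap P$, the relation $\bucket$ is exactly the kernel of the map $\vec u \mapsto I(\vec u) \cap P$, so each property reduces to the corresponding property of set equality.

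Concretely, I would first observe reflexivity: for any $\vec u \in \sigma$, trivially $I(\vec u) \cap P = I(\vec u) \cap P$, hence $(\vec u, \vec u) \in \bucket$. Next, for symmetry, if $(\vec u, \vec v) \in \bucket$ then $I(\vec u) \cap P = I(\vec v) \cap P$, and by symmetry of set equality also $I(\vec v) \cap P = I(\vec u) \cap P$, so $(\vec v, \vec u) \in \bucket$. Finally, for transitivity, assume $(\vec u, \vec v), (\vec v, \vec w) \in \bucket$; then $I(\vec u) \cap P = I(\vec v) \cap P$ and $I(\vec v) \cap P = I(\vec w) \cap P$, so by transitivity of set equality $I(\vec u) \cap P = I(\vec w) \cap P$, hence $(\vec u, \vec w) \in \bucket$.

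There is no real obstacle here — the claim is essentially a one-line observation that $\bucket$ is the pullback of equality along a function, and such kernels are always equivalence relations. The only thing to be careful about is to stay faithful to the notation used in the paper (writing $\restrict{I(\vec u)}{P}$ for $I(\vec u) \cap P$) and to note that the three properties are inherited directly from the corresponding properties of $=$ on sets.
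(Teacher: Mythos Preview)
Your proposal is correct and matches the paper's own proof essentially verbatim: the paper also verifies reflexivity, symmetry, and transitivity directly from the fact that $\vec u \bucket \vec v$ means $I(\vec u)\cap P = I(\vec v)\cap P$, inheriting each property from equality of sets. Your additional remark that $\bucket$ is the kernel of the map $\vec u \mapsto I(\vec u)\cap P$ is a clean way to see why the argument is immediate, but it is not needed beyond what you already wrote.
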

\end{restateobservation}
\begin{proof}
One can easily see that~$=_P(A,B) \eqdef (A \cap P) = (B \cap P)$ is 
\begin{itemize}
	\item reflexive: $A\cap P = A \cap P$ for any two sets~$A, P$,
	\item symmetric: $A\cap P = B \cap P$ if and only if
          $B\cap P = A\cap P$ for given sets~$A,B,P$, and
	\item transitive: if $A \cap P = B \cap P$  and~$B \cap P = C \cap P$, then~$A\cap P = C\cap P$ holds as well for any sets~$A,B,C,P$.
\end{itemize}
As a result, $=_P$ is an equivalence relation.
\end{proof}

\shortversion{
\paragraph{\textbf{Global Assumptions}}
Here, we fix requirements for almost all statements in the following.
We assume that we have given an arbitrary instance~$(F,P)$ of \PMC and
a tree decomposition~$\TTT = (T,\chi)$ of
formula~$F$, where $T=(N, A)$, node $n=\rootOf(T)$ is the root and $\TTT$ is of width~$k$.
Moreover, for every~$t \in N$ of tree decomposition~$\TTT$, we let
$\ATabs{\AlgS}{t}$ be the tables that have been computed by running
algorithm~$\dpa_\AlgS$ for the dedicated input. Analogously, let
$\ATabs{\PROJ}{t}$ be the tables that have been computed by running
algorithm~$\dpa_\PROJ$ for the dedicated input.

\begin{lemma}\label{lem:runtime}
  For every node~$t\in N$, $\ATabs{\PROJ}{t}$ contains at
  most~$2^{2^{k+1}}$ rows.
\end{lemma}
\begin{proof}
  By definition of a tree decomposition, for every node~$t\in N$ the
  bag~$\chi(t)$ is of size at most~$k+1$. Therefore, we have at
  most~$2^{k+1}$ many~\cite{SamerSzeider10b} rows in the table
  obtained via~$\dpa_\AlgS$. In the end, we have at most $2^{2^{k+1}}$
  rows within table~$\ATabs{\PROJ}{t}$, since we have a row for each
  of the $2^{2^{k+1}}$ many subsets of a $\AlgS$-row.
\end{proof}

In the following, we state definitions required for the correctness
proofs of our algorithm \PROJ.  In the end, we only store rows that
are restricted to the bag content to maintain runtime bounds. In
related work~\cite{SamerSzeider10b}, it was shown that this suffices
for table algorithm~$\AlgS$, i.e., \PRIM and~\INC are both sound and
complete.  Similar to related work~\cite{FichteEtAl17a}, we define the
content of our tables in two steps. First, we define the properties of
so-called \emph{$\PROJ$-solutions up to~$t$}. Second, we restrict
these solutions to~\emph{$\PROJ$-row solutions} at~$t$.

%
%

\begin{definition}\label{def:globalsol}
  Let~$\emptyset \subsetneq \sigma \subseteq \ATab{\AlgS}[t]$ be a
  table with $\sigma \in \subbuckets_P(\ATab{\AlgS}[t])$.
  We define a \emph{${\PROJ}$-solution up to~$t$} to be the sequence
  $\langle \hat \sigma\rangle = \langle\PExt_{\leq t}(\sigma)\rangle$.
\end{definition}

%
%

Next, we recall that we can reconstruct all models from the tables.

\begin{proposition}[cf.,~\cite{SamerSzeider10b}]\label{prop:sat}
  \[I(\PExt_{\leq n}(\ATab{\AlgS}[n])) = I(\Exts) = \{J \mid J \in
    \ta{\var(F)}, \alpha_J\models F\}.\]
\end{proposition}
\begin{proof}[Idea]
  In fact, we can use the construction by Samer and
  Szeider~\cite{SamerSzeider10b} of the tables and extend them
  globally. Then, the extensions simply collect the corresponding,
  preceding rows. By taking the interpretation parts $I(\cdots)$ of
  these collected rows we obtain the set of all models of the formula.
  A similar construction is used by Pichler, R\"ummele, and
  Woltran~\cite[Fig.~1]{PichlerRuemmeleWoltran10}, however, hidden in
  an algorithm to enumerate solutions.
\end{proof}}

\shortversion{
Before we present equivalence results between~$\ipmc_{\leq t}(\ldots)$
and the recursive version~$\ipmc(t, \ldots)$
(Definition~\ref{def:ipmc}) used during the computation of
$\dpa_\PROJ$, recall that~$\ipmc_{\leq t}$ and~$\pmc_{\leq t}$
(Definition~\ref{def:pmc}) are key to compute the projected model
count. The following corollary states that computing $\ipmc_{\leq n}$
at the root~$n$ actually suffices to compute~$\pmc_{\leq n}$, which is
in fact the projected model count of the input formula.

\begin{corollary}\label{cor:psat}
  \begin{align*}
    \ipmc_{\leq n}(\ATab{\AlgS}[n]) =& \pmc_{\leq n}(\ATab{\AlgS}[n])\\
    =& \Card{I_P(\PExt_{\leq n}(\ATab{\AlgS}[n]))}\\
    =& \Card{I_P(\Exts)}\\
    =& \Card{\{J \cap P
       \mid J \in \ta{\var(F)}, \alpha_J\models F\}}
  \end{align*}
\end{corollary}
\begin{proof}
  The corollary immediately follows from Proposition~\ref{prop:sat}
  and the observation that the cardinality of $\ATab{\AlgS}[n]$ is at
  most one at root~$n$, by properties of the algorithm~$\AlgS$ and
  since $\chi(n) = \emptyset$.
\end{proof}

The following lemma establishes that the \PROJ-solutions up to
root~$n$ of a given tree decomposition solve the \PMC problem.

\begin{lemma}\label{lem:global}
  The
  value~$c = \sum_{\langle\hat\sigma\rangle\text{ is a \PROJ-solution
      up to } n}\Card{I_P(\hat \sigma)}$ if and only if $c$ is the
  projected model count of~$F$ with respect to the set~$P$ of
  projection variables.
\end{lemma}
\begin{proof}
  (``$\Longrightarrow$''): Assume
  that~$c = \sum_{\langle\hat\sigma\rangle\text{ is a \PROJ-solution
      up to } n}\Card{I_P(\hat \sigma)}$. Observe that there can be at
  most one projected solution up to~$n$,
  since~$\chi(n)=\emptyset$. %
  If~$c=0$, then $\ATab{\AlgS}[n]$ contains no rows. Hence, $F$ has no
  models,~cf., Proposition~\ref{prop:sat}, and obviously also no
  models projected to~$P$. Consequently, $c$ is the projected model
  count of~$F$.  
  If~$c>0$ we have by Corollary~\ref{cor:psat} that~$c$ is
  equivalent to the projected model count of~$F$ with respect to~$P$.

  (``$\Longleftarrow$''): The proof proceeds similar to the only-if
  direction.
\end{proof}

\medskip %

In the following, we provide for a given node~$t$ and a given \PROJ-solution up to~$t$,
the definition of a \PROJ-row solution at~$t$.

\begin{definition}\label{def:loctab}~
  \begin{enumerate}
    \item 
  Let $\hat\sigma$ be a~\PROJ-solution up to some node~$t'$. 
  %
  %
  Then, we define \emph{the local table for node}~$t$ as
  $\local(t,\hat\sigma)\eqdef \{ \langle \vec{\tabval}\rangle \mid
  \langle t, \vec{\tabval}\rangle \in \hat\sigma\}$.

  \item
  Let $t \in N$ be a node of the tree decomposition~$\TTT$ and
  $\langle \hat\sigma \rangle$ be a~$\PROJ$-solution up to~$t$. Then, we
  define the \emph{$\PROJ$-row solution at $t$} by
  $\langle \local(t,\hat\sigma), \Card{I_P(\hat\sigma)}\rangle$.
\end{enumerate}
\end{definition}






\begin{observation}\label{obs:unique}
  Let $\langle \hat \sigma\rangle$ be a \PROJ-solution up to a
  node~$t\in N$.  There is exactly one corresponding \PROJ-row
  solution
  $\langle \local(t,\hat\sigma), \Card{I_P(\hat\sigma)}\rangle$ at~$t$.

  Vice versa, let $\langle \sigma, c\rangle$ at~$t$ be a \PROJ-row
  solution at~$t$ for some integer~$c$. Then, there is exactly one
  corresponding \PROJ-solution~$\langle\PExt_{\leq t}(\sigma)\rangle$
  up to~$t$.
\end{observation}

We need to ensure that storing~$\PROJ$-row solutions at a
node~$t \in N$ suffices to solve the~\PMC problem, which is necessary
to obtain the runtime bounds as presented in
Corollary~\ref{cor:runtime}.

\begin{lemma}\label{lem:local}
  Let $t\in N$ be a node of the tree decomposition~$\TTT$.  There is a
  \PROJ-row solution at the root~$n$ if and only if the projected
  model count of~$F$ is larger than zero.
\end{lemma}
\begin{proof}%

  (``$\Longrightarrow$''): Let $\langle \sigma, c\rangle$ be a
  \PROJ-row solution at root~$n$ where $\sigma$ is a $\AlgS$-table and
  $c$ is a positive integer. Then, by Definition~\ref{def:loctab}
  there also exists a
  corresponding~$\PROJ$-solution~$\langle \hat\sigma \rangle$ up
  to~$n$ such that $\sigma = \local(t,\hat\sigma)$ and
  $c=\Card{I_P(\hat\sigma)}$.
  Moreover, since~$\chi(n)=\emptyset$, we
  have~$\Card{\ATab{\AlgS}[n]}=1$.  
  Then, by Definition~\ref{def:globalsol}
  $\hat \sigma = \ATab{\AlgS}[n]$. By Corollary~\ref{cor:psat}, we
  have $c=\Card{I_P(\ATab{\AlgS}[n])}$.
  Finally, the claim follows.
  

  (``$\Longleftarrow$''): The proof proceeds similar to the only-if
  direction.
\end{proof}}

\begin{restateobservation}[obs:main_incl_excl]
\begin{observation}
  Let $n$ be a positive integer, $X = \{1, \ldots, n\}$, and $X_1$,
  $X_2$, $\ldots$, $X_n$ subsets of $X$.
  The number of elements in the intersection over all sets~$A_i$ is
  \begin{align*}
    \Card{\bigcap_{i \in X} X_i} 
    =& %
       \Bigg|\Card{\bigcup^n_{j = 1} X_j} &&- %
                                       \sum_{\emptyset \subsetneq I \subsetneq X, \Card{I}=1}
                                       \Card{\bigcap_{i \in I} X_i} + %
                                       \sum_{\emptyset \subsetneq I \subsetneq X, \Card{I}=2}
                                       \Card{\bigcap_{i \in I} X_i} - \ldots \\
                                         & &&+ \sum_{\emptyset \subsetneq I \subsetneq X, \Card{I}=n-1} (-1)^{\Card{I}} 
                                              \Card{\bigcap_{i \in I} X_i}\Bigg|.
  \end{align*}
It trivially works to count arbitrary sets.
\end{observation}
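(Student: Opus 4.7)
The plan is to derive the claimed identity as a straightforward rearrangement of the standard inclusion-exclusion formula
$$\Card{\bigcup_{j=1}^n X_j} = \sum_{\emptyset \subsetneq I \subseteq X} (-1)^{|I|-1}\Card{\bigcap_{i \in I} X_i},$$
which is already recalled in the preliminaries and may therefore be assumed. The key idea is to isolate the single summand on the right-hand side corresponding to the full index set $I = X$, namely $(-1)^{n-1}\Card{\bigcap_{i \in X} X_i}$, and solve for it in terms of $\Card{\bigcup_j X_j}$ and the intersections over the proper non-empty subsets of $X$.

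First, I would split the sum over $\emptyset \subsetneq I \subseteq X$ into the contribution from $I = X$ and the remainder, giving
$$\Card{\bigcup_{j=1}^n X_j} = (-1)^{n-1}\Card{\bigcap_{i \in X} X_i} + \sum_{\emptyset \subsetneq I \subsetneq X}(-1)^{|I|-1}\Card{\bigcap_{i \in I} X_i}.$$
Moving the proper-subset sum to the left and using $-(-1)^{|I|-1} = (-1)^{|I|}$ yields
$$(-1)^{n-1}\Card{\bigcap_{i \in X} X_i} = \Card{\bigcup_{j=1}^n X_j} + \sum_{\emptyset \subsetneq I \subsetneq X}(-1)^{|I|}\Card{\bigcap_{i \in I} X_i}.$$

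Finally, since the left-hand side is a cardinality and hence non-negative while $(-1)^{n-1}$ is just a sign, taking absolute values on both sides absorbs the factor $(-1)^{n-1}$ and produces exactly the expression displayed in the observation; the explicit signs of the summands (negative for $|I|=1$, positive for $|I|=2$, alternating up to $|I|=n-1$) match those listed in the statement. There is no genuine obstacle here: the argument is purely algebraic sign-bookkeeping on top of a classical identity. The only minor care required is the parity step — justifying that the absolute value bars allow the formula to hold uniformly for both even and odd $n$, since without them the sign $(-1)^{n-1}$ on the left would have to be tracked on the right as well.
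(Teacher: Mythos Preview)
Your proposal is correct and follows essentially the same route as the paper: start from the inclusion--exclusion identity, split off the $I=X$ term, rearrange to isolate $(-1)^{n-1}\Card{\bigcap_{i\in X}X_i}$, and take absolute values to dispose of the sign. The paper's own proof is exactly this sequence of algebraic manipulations, so there is nothing to add.
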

\end{restateobservation}
\begin{proof}
  We take the well-known inclusion-exclusion
  principle~\cite{GrahamGrotschelLovasz95a} and rearrange the
  equation.
  \begin{align*}
    \Card{\bigcup^n_{j = 1} X_j} =& \sum_{\emptyset \subsetneq I \subseteq X} (-1)^{\Card{I}-1} &&\Card{\bigcap_{i \in I} X_i}\\
    \Card{\bigcup^n_{j = 1} X_j} =& \sum_{\emptyset \subsetneq I \subsetneq X} (-1)^{\Card{I}-1} &&\Card{\bigcap_{i \in I} X_i} + (-1)^{\Card{X}-1} \Card{\bigcap_{i \in X} X_i}\\
    (-1)^{\Card{X}-1} \Card{\bigcap_{i \in X} X_i}  =& &&\Card{\bigcup^n_{j = 1} X_j} - \sum_{\emptyset \subsetneq I \subsetneq X} (-1)^{\Card{I}-1} \Card{\bigcap_{i \in I} X_i} \\
    \Card{\bigcap_{i \in X} X_i}  =& \Bigg|&&\Card{\bigcup^n_{j = 1} X_j} - \sum_{\emptyset \subsetneq I \subsetneq X} (-1)^{\Card{I}-1} \Card{\bigcap_{i \in I} X_i}\Bigg| \\
    \Card{\bigcap_{i \in X} X_i} =& \Bigg|%
                                      &&\Card{\bigcup^n_{j = 1} X_j} - %
                                      \sum_{\emptyset \subsetneq I
                                        \subsetneq X, \Card{I}=1}
                                      \Card{\bigcap_{i \in I} X_i} \\ %
    &&&                                 + \sum_{\emptyset \subsetneq I
                                        \subsetneq X, \Card{I}=2}
                                      \Card{\bigcap_{i \in I} X_i} \\
    &&&                                  - \ldots \\
    &&& + \sum_{\emptyset \subsetneq I \subsetneq X,
                                        \Card{I}=n-1} (-1)^{\Card{I}
                                        } \Card{\bigcap_{i \in I}
                                        X_i}\Bigg|
  \end{align*}
\end{proof}

\shortversion{
\begin{lemma}\label{lem:main_incl_excl}
  Let $t\in N$ be a node of the tree decomposition~$\TTT$
  with~$\children(t) = \langle t_1, \ldots, t_\ell \rangle$ and let
  $\langle\sigma,\cdot\rangle$ be a~\PROJ-row solution at~$t$.
  Then,
  \begin{enumerate}
  \item %
    $\ipmc(t,\sigma,\langle\ATab{\PROJ}[t_1], \ldots,
    \ATab{\PROJ}[t_{\ell}]\rangle) = \ipmc_{\leq t}(\sigma)$
  \item \smallskip%
    for $\type(t) \neq \leaf$:\\
    $\pmc(t,\sigma,\langle\ATab{\PROJ}[t_1], \ldots,
    \ATab{\PROJ}[t_{\ell}]\rangle) = \pmc_{\leq t}(\sigma)$.
  \end{enumerate}
\end{lemma}
\begin{proof}[Proof (Sketch)]
  We prove the statement by simultaneous induction.
  
  (``Induction Hypothesis''): Lemma~\ref{lem:main_incl_excl} holds for the nodes in~$\children(t)$ and also for node~$t$, but on strict subsets~$\rho\subsetneq\sigma$.

  (``Base Cases''): Let $\type(t) = \leaf$.
  Then by definition,
  $\ipmc(t,\emptyset, \langle \rangle) = \ipmc_{\leq t}(\emptyset) =
  1$.  
  Recall that for $\pmc$ the equivalence does not hold for leaves, but we use a node
  that has a node~$t'\in N$ with~$\type(t') = \leaf$ as child for the
  base case. Observe that by definition of a nice tree decomposition
  such a node~$t$ can have exactly one child.
  Then, we have that
  $\pmc(t,\sigma,\langle\ATab{\PROJ}[t']\rangle) = \sum_{\emptyset
    \subsetneq O \subseteq {\origs(t,\sigma)}} (-1)^{(\Card{O} - 1)}
  \cdot \sipmc(\langle \ATab{\AlgS}[t']\rangle, O) =
  \Card{\bigcup_{\vec u\in\sigma} I_P(\PExt_{\leq t}(\{\vec u\}))} =
  \pmc_{\leq t}(\sigma) = 1$ where $\langle\sigma,\cdot\rangle$ is
  a~\PROJ-row solution at~$t$.

  (``Induction Step''): We proceed by case distinction.

  Assume that $\type(t) = \intr$.
  Let $a \in (\chi(t) \setminus \chi(t'))$ be an introduced
  variable. We have two cases. Assume Case (i): $a$ also belongs to
  $(\var(F) \setminus P)$,~i.e., $a$ is not a projection variable. 
  %
  %
  %
  Let~$\langle \sigma, c \rangle$ be a \PROJ-row solution at~$t$ for
  some integer~$c$. By construction of the table algorithm~$\AlgS$
  there are many rows in the table~$\ATab{\AlgS}[t]$ for one row in
  the table~$\ATab{\AlgS}[t']$, more precisely,
  $\Card{\buckets_P(\sigma)} = 1$.
  As a result,
  $\pmc_{\leq t}(\sigma) = \pmc_{\leq t'}(\orig(t,\sigma))$ by
  applying Observation~\ref{obs:unique}.
  We apply the inclusion-exclusion principle on every subset~$\rho$ of
  the origins of~$\sigma$ in the definition of~$\pmc$ and by induction
  hypothesis we know that
  $\ipmc(t',\rho,\langle\ATab{\PROJ}[t']\rangle) = \ipmc_{\leq
    t'}(\rho)$, therefore,
  $\sipmc(\ATab{\PROJ}[t'], \rho) = \ipmc_{\leq t'}(\rho)$.  This
  concludes Case~(i) for $\pmc$. The induction step for $\ipmc$ works
  similar, but swapped by applying
  Observation~\ref{obs:main_incl_excl} and comparing the corresponding
  \PROJ-solutions up to~$t$ or $t'$, respectively. Further, for showing the lemma for~$\ipmc$, one has to additionally apply the hypothesis for node~$t$, but on strict subsets~$\emptyset\subsetneq\rho\subsetneq\sigma$ of~$\sigma$.
  %
  Assume that we have Case~(ii): $a$ also belongs to $P$,~i.e., $a$ is a projection
  variable. We proceed similar as in Case~(i), and obtain that
  $\Card{\buckets_P(\sigma)} = 1$.

  Assume that $\type(t) = \rem$. Let
  $a \in (\chi(t') \setminus \chi(t))$ be a removed variable. We have
  two cases. Case (i) $a$ also belongs to
  $(\var(F) \setminus P)$,~i.e., $a$ is not a projection variable; and
  Case (ii) $a$ also belongs to $P$,~i.e., $a$ is a projection
  variable.
  Assume that we have Case~(i).  Let~$\langle \sigma, c \rangle$ be a
  \PROJ-row solution at~$t$ for some integer~$c$.
  By construction of the table algorithm~$\AlgS$ there are many rows
  in the table~$\ATab{\AlgS}[t]$ for one row in the
  table~$\ATab{\AlgS}[t']$ (and vice-versa). Nonetheless we still have
  $\pmc_{\leq t}(\sigma) = \pmc_{\leq t'}(\orig(t,\sigma))$, because
  $a \notin P$ by applying Observation~\ref{obs:unique}.
  We apply the inclusion-exclusion principle on every subset~$\rho$ of
  the origins of~$\sigma$ in the definition of~$\pmc$ and by induction
  hypothesis we know that
  $\ipmc(t',\rho,\langle\ATab{\PROJ}[t']\rangle) = \ipmc_{\leq
    t'}(\rho)$, therefore,
  $\sipmc(\ATab{\PROJ}[t'], \rho) = \ipmc_{\leq t'}(\rho)$.  This
  concludes Case~(i) for $\pmc$. Again, the induction step for $\ipmc$
  works similar, but swapped.
  Assume that we have Case~(ii).
  Let~$\langle \sigma, c \rangle$ be a \PROJ-row solution at~$t$ for
  some integer~$c$.
  Here we cannot ensure
  $\pmc_{\leq t}(\sigma) = \pmc_{\leq t'}(\orig(t,\sigma))$, since
  buckets fall together.  However, by applying
  Observation~\ref{obs:unique} we have
  $\pmc_{\leq t}(\sigma) = \sum_{\rho \in
    \buckets_P(\origs(t,\sigma)_{(1)})} \pmc(t', \rho, C) $ where the
  sequence~$C$ consists of the tables of the children of~$t'$.
  For every~$\rho \in \subbuckets_P(\origs(t,\sigma)_{(1)})$ by
  induction hypothesis we know that
  $\ipmc(t',\rho,\langle\ATab{\PROJ}[t']\rangle) = \ipmc_{\leq
    t'}(\rho)$.
  Hence, we apply the inclusion-exclusion principle over all
  subsets~$\zeta$ of~$\rho$ for all~$\rho$ independently.  By
  construction
  $\sipmc(\ATab{\PROJ}[t'], \zeta) = \ipmc_{\leq t'}(\zeta)$.  Then,
  by construction
  $\pcnt(t,\sigma, C') = \sum_{\emptyset \subsetneq O \subseteq
    {\origs(t,\sigma)}} (-1)^{(\Card{O} - 1)} \cdot \sipmc(C',\allowbreak O) =
  \pmc_{\leq t}(\sigma)$ where
  $C' = \langle \ATab{\PROJ}[t'] \rangle$, since for the remaining
  terms $\sipmc(C', O)$ is simply zero, including cases where
  different buckets are involved.
  This concludes Case~(ii) for $\pmc$. Again, the induction step for
  $\ipmc$ works similar, but swapped by again applying
  Observation~\ref{obs:main_incl_excl}.

  Assume that $\type(t) = \join$. We proceed similar to the introduce
  case. However, we have two \PROJ-tables for the children of~$t$.
  Hence, we have to both sides when computing $\sipmc$
  (Definition~\ref{def:childpcnt}). There we consider the
  cross-product of two \AlgS-tables and we can also correctly apply
  the inclusion-exclusion principle on subsets of this cross-product,
  which we can do by simply multiplying $\sipmc$-values
  accordingly. The multiplication is closely related to the join case
  in table algorithm~\AlgS. For $\ipmc$ this does not apply, since the
  inclusion-exclusion principle is carried out at the node~$t$ and not
  for its children.

  Since we outlined all cases that can occur for node~$t$, this
  concludes the proof sketch.
\end{proof}


\begin{lemma}[Soundness]\label{lem:correct}
  Let $t\in N$ be a node of the tree decomposition~$\TTT$
  with~$\children(t) = \langle t_1, \ldots, t_\ell \rangle$.
  Then, each row~$\langle \tab{}, c \rangle$ at node~$t$ constructed
  by table algorithm~$\PROJ$ is also a~\PROJ-row solution for
  node~$t$.
\end{lemma}
\begin{proof}[Idea]
  Observe that Listing~\ref{fig:dpontd3} computes a row for each
  sub-bucket $\sigma \in \subbuckets_P(\ATab{\AlgS}[t])$. The
  resulting row~$\langle\sigma, c \rangle$ obtained by~$\ipmc$ is
  indeed a \PROJ-row solution for~$t$ according to
  Lemma~\ref{lem:main_incl_excl}.
\end{proof}


\begin{lemma}[Completeness]\label{lem:complete}
  Let~$t\in N$ be node of the tree decomposition~$\TTT$ where
  $\type(t) \neq \leaf$.  Given a
  \PROJ-row solution~$\langle \sigma, c \rangle$ at node~$t$.
  There exists $\langle C_1, \ldots, C_\ell\rangle$ where $C_i$ is set
  of \PROJ-row solutions of the from~$\langle\sigma_i, c_i\rangle$
  such that
  $\sigma = \PROJ(t, \cdot, \cdot, P, \langle C_1, \ldots,
  C_\ell\rangle, \ATab{\AlgS})$.
\end{lemma}
\begin{proof}[Idea]
Since~$\langle\sigma,c \rangle$ is a~\PROJ-row solution for~$t$, there is by Definition~\ref{def:loctab} a corresponding ~\PROJ-solution~$\langle\hat\sigma\rangle$ up to~$t$ such that~$\local(t,\hat\sigma) = \sigma$. 

We proceed again by case distinction. Assume that~$\type(t)=\intr$. Then we define~$\hat{\sigma'}\eqdef \{(t',\hat\rho) \mid (t', \hat\rho)\in \sigma, t \neq t'\}$. Then, for each subset~$\emptyset\subsetneq\rho\subseteq\local(t',\hat{\sigma'})$, we define~$\langle \rho, \Card{I_P(\PExt_{\leq t}(\rho))}\rangle$ in accordance with Definition~\ref{def:loctab}. By Observation~\ref{obs:unique}, we have that~$\langle \rho, \Card{I_P(\PExt_{\leq t}(\rho))}\rangle$ is a \AlgS-row solution at node~$t'$. 
Since we defined the~\PROJ-row solutions for~$t'$ for all the respective \PROJ-solutions up to~$t'$, we encountered every~\PROJ-row solution for~$t'$ that is required for deriving~$\langle \sigma, c\rangle$ via~\PROJ (cf., Definitions~\ref{def:ipmc} and~\ref{def:pcnt}).

Assume that~$\type(t)=\rem$. The case is slightly easier as the one
above. We do not need to define a~\PROJ-row solution for~$t$' for all
subsets~$\rho$, since we only have to consider subsets~$\rho$ here,
with~$\Card{\buckets_P(\rho)}=1$. The remainder works similar.

Similarly, one can show the result for the remaining node with~$\type(t)=\join$, but define \PROJ-row solutions for two preceding child nodes of~$t$.
\end{proof}

We are now in the position to proof our theorem.

\begin{restatetheorem}[thm:correctness]%
\begin{theorem}
  The algorithm~$\dpa_\PROJ$ is correct. \\
  More precisely, 
  %
  the algorithm~$\dpa_\PROJ((F,P),\TTT,\ATab{\AlgS})$ returns
  tables~$\ATab{\PROJ}$ such that $c=\sipmc(\ATab{\AlgS}[n], \cdot)$
  is the projected model count of~$F$ with respect to the set~$P$ of
  projection variables.
\end{theorem}
\end{restatetheorem}
\begin{proof}
  %
  By Lemma~\ref{lem:correct} we have soundness for every
  node~$t \in N$ and hence only valid rows as output of table
  algorithm~$\PROJ$ when traversing the tree decomposition in
  post-order up to the root~$n$.
  By Lemma~\ref{lem:local} we know that the projected model count~$c$
  of~$F$ is larger than zero if and only if there exists a
  certain~\PROJ-row solution for~$n$.
  This~\PROJ-row solution at node~$n$ is of the
  form~$\langle \{\langle\emptyset, \ldots\rangle\} ,c\rangle$. If
  there is no \PROJ-row solution at node~$n$,
  then~$\ATab{\AlgS}[n]=\emptyset$ since the table algorithm~$\AlgS$
  is correct (cf., Proposition~\ref{prop:sat}). Consequently, we have
  $c=0$. Therefore, $c=\sipmc(\ATab{\AlgS}[n], \cdot)$ is the
  projected model count of~$F$ with respect to~$P$ in both cases.
  %
  %
  
  %
  %

  %
  %

  Next, we establish completeness by induction starting from the
  root~$n$. Let therefore, $\langle \hat\sigma \rangle$ be the~\PROJ-solution up to node~$n$, where for each row
  in~$\vec u\in \hat\sigma$, $I(\vec u)$ corresponds to a model of~$F$.  By
  Definition~\ref{def:loctab}, we know that for the root~$n$ we
  can construct a \PROJ-row solution at~$n$ of the
  form~$\langle \{\langle\emptyset, \ldots\rangle\} ,c\rangle$
  for~$\hat\sigma$.  We already established the induction step in
  Lemma~\ref{lem:complete}.
  Hence, we obtain some (corresponding) rows for every
  node~$t$. Finally, we stop at the leaves.

  In consequence, we have shown both soundness and completeness. As a
  result, Theorem~\ref{thm:correctness} is sustains.
\end{proof}

\begin{restatecorollary}[cor:correctness]%
\begin{corollary}
  The algorithm $\mdpa{\AlgS}$ is correct and outputs for any instance
  of \PMC its projected model count.
\end{corollary}
\end{restatecorollary}
\begin{proof}
  The result follows immediately, since~$\mdpa{\AlgS}$ consists of two
  dynamic programming passes~$\dpa_\AlgS$, a purging step and~$\dpa_\PROJ$. For the
  soundness and completeness of~$\dpa_\AlgS$ we refer to other
  sources~\cite{SamerSzeider10b,FichteEtAl17a}. By Proposition~\ref{prop:sat}, the
  ``purging'' step does neither destroy soundness nor completeness
  of~$\dpa_\PRIM$.
\end{proof}

\begin{corollary}
  Unless ETH fails, $\PMC$ cannot be solved in
  time~$2^{2^{o(k)}}\cdot \CCard{F}^{o(k)}$ for a given instance
  $(F,P)$ where~$k$ is the treewidth of the incidence graph of~$F$.
\end{corollary}
\begin{proof}
  Let $w_i$ and $w_p$ be the treewidth of the incidence graph and
  primal graph of~$F$, respectively. Then,
  $w_i \leq w_p +1$~\cite{SamerSzeider10b}, which establishes the
  claim.
\end{proof}}

}

\end{document}
